\documentclass[oribibl]{llncs}

\usepackage{version}

\excludeversion{ijcar} \includeversion{report} 


\usepackage[inline]{enumitem}
\usepackage{listings}
\usepackage{subcaption}
\usepackage{microtype}
\usepackage{times}
\usepackage{amsmath}
\usepackage{amsfonts}
\usepackage{amssymb}
\usepackage{amsthm}
\usepackage{thm-restate}
\usepackage{mathpartir} 
\usepackage{color}
\usepackage{xcolor}
\usepackage{xspace} 
\usepackage{graphicx}
\usepackage{fancybox}
\usepackage{url}
\usepackage{multirow}
\usepackage{colortbl}
\usepackage{mathabx}
\usepackage{booktabs}
\usepackage{tabularx}
\usepackage{fancyvrb}
\usepackage{array}

\usepackage{mdframed}
\mdfsetup{skipabove=0pt,skipbelow=5pt}

\usepackage{enumitem}
\setlist{nolistsep}

\definecolor{linkcol}{HTML}{5388C8}

\usepackage{hyperref}
\hypersetup{
  colorlinks = true,
  citecolor=purple,
  linkcolor=linkcol,
}

\usepackage{cleveref}

\VerbatimFootnotes



\newcommand{\conf}[1]{( #1 )}

\newcommand{\rem}[1]{}



\newcommand{\teq}{\approx}
\newcommand{\tneq}{\not\approx}

\newcommand{\ter}[1]{\mathcal{T}(#1)}

\newcommand{\sint}{\mathsf{Int}}
\newcommand{\sbool}{\mathsf{Bool}}
\newcommand{\sseq}{{\mathsf{Seq}}}
\newcommand{\sarray}{{\mathsf{Arr}}}
\newcommand{\selem}{{\mathsf{Elem}}}

\newcommand{\arrnth}{\seqnth_{\mathbf{A}}}
\newcommand{\arrupdate}{\sequpdate_{\mathbf{A}}}
\newcommand{\arrlen}[1]{{\mathbf \ell({#1})}}
\newcommand{\arrcontent}{{\mathbf c}}
\newcommand{\arrteq}{\teq_{\mathbf{A}}}








\newcommand{\doubleplus}{\ensuremath{\mathbin{+\mkern-5mu+}}}
\newcommand{\infixseqcon}{\doubleplus}
\newcommand{\seqempty}{\epsilon}
\newcommand{\sequnit}{\mathsf{unit}}
\newcommand{\seqconii}[2]{#1 \infixseqcon #2}
\newcommand{\seqconiii}[3]{#1 \infixseqcon #2 \infixseqcon #3}
\newcommand{\seqconiv}[4]{#1 \infixseqcon #2 \infixseqcon #3 \infixseqcon #4}
\newcommand{\seqconv}[5]{#1 \infixseqcon #2 \infixseqcon #3 \infixseqcon #4 \infixseqcon #5}

\newcommand{\sequpdate}{\mathsf{update}}
\newcommand{\seqextract}{\mathsf{extract}}
\newcommand{\seqnth}{\mathsf{nth}}
\newcommand{\seqlen}[1]{|#1|}



\newcommand{\ff}{\bot}


\newcommand{\Mo}{\mathbf{I}}

\newcommand{\Sc}{\mathsf{S}}

\newcommand{\Ac}{\mathsf{A}}

\newcommand{\Fc}{\mathsf{F}}

\newcommand{\Cc}{\mathsf{C}}

\newcommand{\nf}[1]{{#1}{\downarrow}}
\newcommand{\ror}{\quad \parallel \quad}


\newcommand{\basiccal}{\mathsf{BASE}}
\newcommand{\extcal}{\mathsf{EXT}}





\newcommand{\define}[1]{\textsl{#1}}


\newcommand{\unsat}{\ensuremath{\mathsf{unsat}}\xspace}
\newcommand{\sat}{\ensuremath{\mathsf{sat}}\xspace}


\newcommand{\cdclt}{CDCL$(T)$\xspace}

\newcommand{\smtlib}{{\small SMT-LIB}\xspace}
\newcommand{\cvc}{{\small cvc5}\xspace}

\newcommand{\zzz}{{\small Z3}\xspace}
\newcommand{\cpp}{{\small C++}\xspace}

\newcommand{\ziii}{\textsc{z}{\small 3}\xspace}


\newcommand{\seqtn}{\tname{Seq}}
\newcommand{\lian}{\tname{LIA}}

\newcommand{\sth}{T_\seqtn}
\newcommand{\lth}{T_\lian}
\newcommand{\ssig}{\Sigma_\seqtn}
\newcommand{\lsig}{\Sigma_{\lian}}

\newcommand{\bool}{\ensuremath{\mathsf{Bool}}\xspace}



\spnewtheorem{assumption}{Assumption}{\bfseries}{\itshape}
\Crefname{assumption}{\text{Assumption}}{\text{Assumptions}}
\crefname{assumption}{\text{Assumption}}{\text{Assumptions}}
\newcommand{\rn}[1]{\textsf{\small #1}}

\newcommand{\newpar}[1]{
\medskip
\noindent\textbf{#1}\;}

\renewcommand{\vec}[1]{\overline{\boldsymbol{#1}}}
\newcommand{\tname}[1]{\mathsf{#1}}

\newcommand{\ent}[1][]{\models_{#1}}
\newcommand{\notent}[1][]{\nvDash_{#1}}
\newcommand{\entnf}{\models_{\infixseqcon}}

\newcommand{\M}{\mathcal{M}}
\newcommand{\Mc}{\mathsf{M}}




\newcommand{\powerset}[1]{{\mathsf{P}}({#1})}

\newcommand{\arithmodel}{\mathcal{A}}



\newcommand{\sceq}[1]{\equiv_{#1}}
\newcommand{\scneq}[1]{\not\equiv_{#1}}

\newcommand{\cf}{{\alpha}}

\newcommand{\smallcap}{\vskip -0.5em}

\begin{document}

\pagestyle{plain}

\newcommand{\mytitle}{Reasoning About Vectors \\
using an SMT Theory of Sequences}

\hypersetup{
  pdfborder={0 0 0},
  linkcolor=blue,
  urlcolor=blue,
  citecolor=blue,
  pdftitle=\mytitle
}

\title{\mytitle\thanks{
This work was funded in part by the Stanford Center for Blockchain Research, NSF-BSF grant numbers 2110397 (NSF) and 2020704 (BSF), and Meta Novi.
Part of the work was done when the first author was an intern at Meta Novi.
}
}


\author{
\rm{Ying Sheng$^{\text{1}}$ \and
    Andres N\"otzli$^{\text{1}}$ \and
    Andrew Reynolds$^{\text{2}}$ \and
    Yoni Zohar$^{\text{3}}$ \and
    David Dill$^{\text{4}}$ \and\\
    Wolfgang Grieskamp$^{\text{4}}$ \and
    Junkil Park$^{\text{4}}$ \and
    Shaz Qadeer$^{\text{4}}$ \and 
    Clark Barrett$^{\text{1}}$ \and
    Cesare Tinelli$^{\text{2}}$}\\
}
\institute{
  {$^{\text{1}}$Stanford University\enskip $^{\text{2}}$The University of Iowa\enskip $^{\text{3}}$Bar-Ilan University\enskip $^{\text{4}}$Meta Novi}
}


\newcommand{\samelineand}{\qquad}

\maketitle

\vspace{-2em}
\begin{abstract}

Dynamic arrays, also referred to as vectors, are fundamental data structures used in many programs.
Modeling their semantics efficiently is crucial when reasoning about such programs.
The theory of arrays is 
widely supported but is not ideal, because the number of elements is fixed (determined by its index sort) and cannot be adjusted, which is a problem, given that the length of vectors often plays an important role when reasoning about vector programs.
In this paper, we propose reasoning about vectors using a theory of sequences.
We introduce the theory, propose a basic calculus adapted from one for the theory of strings, and extend it to efficiently handle common vector operations.
We prove that our calculus is sound and show how to construct a model when it terminates with a saturated configuration.
%
%
Finally, we describe an implementation of the calculus in \cvc and demonstrate its efficacy by evaluating it on verification conditions for smart contracts and benchmarks derived from existing array benchmarks.
%

\end{abstract}

\section{Introduction} \label{sec:intro}
Generic vectors are used in many programming 
languages. For example, in \cpp's standard library, they are provided by
\verb!std::vector!. 
%
Automated verification of software systems that manipulate vectors 
 requires an efficient and automated way of reasoning about them.
Desirable characteristics of any approach for reasoning about vectors include:
$(i)$~expressiveness---operations that are 
commonly performed on vectors should be supported;
$(ii)$~generality---vectors are
always ``vectors of'' some type (e.g., vectors of integers),
and so it is
desirable that vector reasoning be integrated within a more general framework; solvers for
satisfiability modulo theories (SMT) provide such a framework and are widely used 
in verification tools (see \cite{DBLP:reference/mc/BarrettT18} for a recent survey);
$(iii)$~efficiency---fast and efficient reasoning is essential for usability, especially as verification tools are increasingly used by non-experts and in continuous integration.

Despite the ubiquity of vectors in software 
on the one hand
and the effectiveness of SMT solvers for software verification on the other hand, 
there is not currently a clean way to represent vectors using operators from the \smtlib standard \cite{SMTLib2017}.
While the theory of arrays can be used, it is not a great fit because arrays have a fixed size determined by their index type.  Representing a dynamic array thus requires additional modeling work.  Moreover, to reach an acceptable level of expressivity, quantifiers are needed, which often makes the reasoning engine less efficient and robust.  
Indeed, part of the motivation for this work was frustration with array-based modeling in the Move Prover, a verification framework for smart contracts~\cite{ZCQ+20} (see~\Cref{sec:evaluation} for more information about the Move Prover and its use of vectors).
The current paper bridges this gap by studying and implementing a native theory of \emph{sequences} in the SMT framework, which satisfies the desirable properties for vector reasoning listed above.

We present two SMT-based calculi for determining satisfiability in the theory of sequences.
Since the decidability of even weaker theories is unknown
(see, e.g., \cite{DBLP:conf/tacas/BjornerTV09,DBLP:conf/hvc/GaneshMSR12}), we do not aim for a decision procedure.  Rather, we prove model and solution soundness (that is, when our procedure
terminates, the answer is correct).
Our first calculus leverages techniques for the theory of strings.
We generalize these techniques, lifting rules specific to string characters to more general rules for arbitrary element types. 
By itself, this base calculus is already quite effective. However, it misses
opportunities to perform high-level vector-based reasoning.  For example, both
reading from and updating a vector are very common operations in programming, and reasoning efficiently about the corresponding sequence operators is thus crucial.
Our second calculus addresses this gap by integrating
reasoning methods from array solvers (which handle reads and updates efficiently)
into the first procedure. 
Notice, however, that this integration is not 
trivial, as it must handle novel combinations of operators (such as the
combination of update and read operators with 
concatenation) as well as out-of-bounds cases that do not occur with 
ordinary arrays.
We have implemented both variants of our calculus in the \cvc SMT solver~\cite{DBLP:conf/tacas/BarbosaBBKLMMMN22} and evaluated them on benchmarks originating from the Move prover, as well as benchmarks that were translated from SMT-LIB array benchmarks.

As is typical, both of our calculi are agnostic to the sort of the elements in the sequence.
Reasoning about sequences of elements from a particular theory can then be done
via theory combination methods such as Nelson-Oppen \cite{NO79} or polite combination \cite{JBLPAR,RRZ05}.
The former can be done for stably infinite theories (and
the theory of sequences that we present here is stably infinite), while
the latter requires investigating the politeness of the theory, which we expect to do in future work.

The rest of the paper is organized as follows.
\Cref{sec:preliminaries} includes basic notions from first-order logic.
\Cref{sec:theory} introduces the theory of sequences and shows how it can be used to model vectors.
\Cref{sec:algorithms} presents calculi for this theory and discusses 
their correctness.
\Cref{sec:implementation} describes the implementation of these calculi in \cvc.
\Cref{sec:evaluation} presents an evaluation comparing several variations of the sequence solver in \cvc and \zzz.
We conclude in \Cref{sec:conclusion} with directions for further research.

\smallskip

\noindent
{\bf Related work:}
Our work crucially builds on a proposal by Bj{\o}rner et al.~\cite{bjorner2012smt}, but extends it in several key ways.  
First, their  implementation (for a logic they call \verb|QF_BVRE|) restricts 
the generality of the theory by allowing only bit-vector elements 
(representing characters) and assuming that sequences are bounded.  
In contrast, our calculus maintains full generality, allowing unbounded sequences and elements of arbitrary types. 
Second, while our core calculus focuses only on a subset of the operators in \cite{bjorner2012smt},
our implementation supports the remaining operators by reducing them to the core operators, and
also adds native support for the $\sequpdate$ operator, which is not included in~\cite{bjorner2012smt}.

The base calculus that we present for sequences builds on similar work 
for the theory of strings
~\cite{DBLP:conf/fmcad/BerzishGZ17,DBLP:conf/cav/LiangRTBD14}. 
We extend our base calculus to support array-like reasoning based on the weak-equivalence approach~\cite{DBLP:conf/frocos/ChristH15}.
Though there exists some prior work on extending the theory of arrays with more operators and reasoning about length~\cite{DBLP:journals/fmsd/AlbertiGP17,DBLP:conf/cav/EladRIKS21,DBLP:conf/vstte/FalkeMS13},
this work does not include support for most of the of the sequence operators
we consider here.

The SMT-solver \zzz \cite{DBLP:conf/tacas/MouraB08} also provides a solver for sequences.  However, its documentation is limited~\cite{programmingz3}, it does not support $\sequpdate$ directly, and its internal algorithms are not described in the literature.  
Furthermore, as we show in \Cref{sec:evaluation}, the performance of the \zzz implementation is generally inferior to our implementation in \cvc.
%
 

\section{Preliminaries}
\label{sec:preliminaries}
We assume the usual notions and terminology of many-sorted first-order logic with equality
(see, e.g., \cite{Enderton2001} for a complete presentation).
We consider many-sorted signatures $\Sigma$, each containing a set of sort symbols (including a Boolean sort \bool), a family of logical symbols $\teq$ for equality, with sort $\sigma \times \sigma \to \bool$ for all
sorts $\sigma$ in $\Sigma$ and interpreted as the identity relation, and a set of interpreted (and sorted) function symbols.
We assume the usual definitions of well-sorted terms, literals, and formulas as terms of sort \bool.
%
A literal is {\em flat} if it has the form
$\bot$,
$p(x_1,\ldots,x_n)$, $\neg p(x_1,\ldots, x_n)$, $x\teq y$, $\neg x \teq y$, or
$x\teq f(x_1,\ldots, x_n)$, where $p$ and $f$ are function symbols and
$x$, $y$, and $x_1,\ldots,x_n$ are variables.
A $\Sigma$-interpretation $\M$ is defined as usual,
satisfying $\M(\bot)=\mathrm{false}$ and assigns:
a 
set $\M(\sigma)$ to every sort $\sigma$ of $\Sigma$,
a function 
$\M(f):\M(\sigma_1)\times\ldots\times\M(\sigma_n)\rightarrow\M(\sigma)$ to any
function symbol $f$ of $\Sigma$ with arity
$\sigma_1\times\ldots\times\sigma_n\rightarrow\sigma$,
and an element $\M(x)\in\M(\sigma)$ to any
variable $x$ of sort $\sigma$.
The satisfaction relation between interpretations
and formulas is defined as usual and is denoted
by $\models$.

A \define{theory} is a pair $T = (\Sigma, \Mo)$, in which $\Sigma$ is a signature
and $\Mo$ is a class of $\Sigma$-interpretations,
closed under variable reassignment. 
The \define{models} of $T$ are the interpretations in $\Mo$
without any variable assignments.  A
$\Sigma$-formula $\varphi$ is \define{satisfiable} (resp.,
\define{unsatisfiable}) \define{in $T$} if it is satisfied by some (resp., no)
interpretation in $\Mo$.
Given a (set of) terms $S$, we write $\ter S$ to denote the set of all subterms of
$S$.
For a theory $T=(\Sigma,\Mo)$, a set $S$ of $\Sigma$-formulas and a $\Sigma$-formula 
$\varphi$,
we write $S\ent[T]\varphi$ if every 
interpretation $\M\in\Mo$ that satisfies
$S$ also satisfies $\varphi$.
By convention and unless otherwise stated, 
we use letters $w, x, y, z$ to denote variables and $s, t, u, v$ to denote terms.

The theory  $\lth=(\lsig,\Mo_{\lth})$ of {\em linear integer arithmetic} is based on the signature
$\lsig$ that includes a single sort $\sint$, all natural numbers as constant symbols, the unary $-$ symbol, the binary
$+$ symbol and the binary $\leq$ relation.
When $k\in\mathbb{N}$, we use the notation $k\cdot x$, inductively defined by
$0\cdot x=0$ and $(m+1)\cdot x=x+m\cdot x$.
In turn, $\Mo_{\lth}$ consists of all structures $\M$ for $\lsig$ in 
which the domain $\M(\sint)$ of $\sint$ is the set of integer numbers,
for every constant symbol $n\in\mathbb{N}$, $\M(n)=n$, and $+$, $-$, and $\leq$ are 
interpreted as usual.  
We use standard notation for integer intervals (e.g., $[a,b]$ for the set of integers $i$, where $a \le i \le b$ and $[a,b)$ for the set where $a \le i < b$).
\section{A Theory of Sequences}
\label{sec:theory}

\begin{figure}[t]

\small

\[
\begin{array}{l}
 
\begin{array}{@{}l@{\quad}l@{\quad}l@{\quad}l@{}}
\toprule
\textbf{Symbol} & \textbf{Arity} & \textbf{\smtlib} & \textbf{Description} \\
\midrule
 n & \sint & \text{n}& \text{All constants } \verb|n| \in \mathbb{N} \\
 + & \sint \times \sint \to \sint & \verb|+| & \text{Integer addition} \\
 - & \sint \to \sint  & \verb|-| & \text{Unary Integer minus} \\
 {\leq} & \sint \times \sint \rightarrow \sbool & \verb|<=| & \text{Integer inequality} \\
\midrule
 \seqempty & \sseq & \verb|seq.empty| & \text{The empty sequence} \\
 \sequnit & \selem \to \sseq & \verb|seq.unit| & \text{Sequence constructor} \\
  \seqlen{\_} & \sseq \to \sint & \verb|seq.len| & \text{Sequence length} \\
  \seqnth & \sseq \times \sint \to \selem & \verb|seq.nth| & \text{Element access} \\
  \sequpdate & \sseq \times \sint \times \selem \to \sseq & \verb|seq.update| & \text{Element update} \\
  \seqextract & \sseq \times \sint \times \sint \to \sseq & \verb|seq.extract| & \text{Extraction (subsequence)} \\
  \seqconiii{\_}{\cdots}{\_} & \sseq \times \cdots \times \sseq \to \sseq & \verb|seq.concat| & \text{Concatenation} \\
\bottomrule
\end{array}
\\[4ex]
\end{array}
\]
\caption{
  Signature for the theory of sequences.
}
\label{fig:sig}
\end{figure}



 


We define the theory $\sth$ of sequences.
Its signature 
$\ssig$ is given in \Cref{fig:sig}.
It includes the sorts $\sseq$, $\selem$, $\sint$, and $\sbool$,
intuitively denoting sequences, elements, integers, and Booleans,
respectively.
The first four lines include symbols of $\lsig$.
We write $t_1 \bowtie t_2$, with $\bowtie\ \in \{>, <, \leq\}$, as syntactic
sugar for the equivalent literal expressed using
$\leq$ (and possibly $\neg$).
The sequence symbols are given on the remaining lines.
Their arities are also given in \Cref{fig:sig}.
Notice that $\seqconiii{\_}{\cdots}{\_}$ is a variadic function symbol.

Interpretations $\M$ of $\sth$ interpret:
$\sint$ as the set of integers;
$\selem$ as some set;
$\sseq$ as the set of finite sequences whose elements are from $\selem$;
$\seqempty$ as the empty sequence; 
$\sequnit$ as a function that takes an element from $\M(\selem)$ and returns the sequence that contains only that element;
$\seqnth$ as a function that takes an element $s$ from $\M(\sseq)$ and an integer $i$
and returns the $i$th element of $s$, in case $i$ is non-negative
and is smaller than the length of $s$ (we take the first element of a sequence to have index $0$).
Otherwise, the function has no restrictions;
$\sequpdate$ as a function that takes an element $s$ from $\M(\sseq)$, an integer
$i$, and an element $a$ from $\M(\selem)$ and returns
the sequence obtained from
$s$ by replacing its $i$th element by $a$, in case
$i$ is non-negative and smaller than the length of $s$.
Otherwise, the returned value is $s$ itself;
$\seqextract$ as a function that takes a sequence $s$ and
integers $i$ and $j$, and returns the maximal sub-sequence of 
$s$ that starts at index $i$ and has length at most $j$,
in case both $i$ and $j$ are non-negative and $i$ is smaller than the
length of $s$. Otherwise, the returned value is the empty sequence;\footnote{In \cite{bjorner2012smt}, the second argument $j$ denotes the end index, while here
it denotes the length of the sub-sequence, in order to be 
consistent with the theory of strings in the \smtlib standard.}
$\seqlen{\_}$ as a function that takes a sequence and returns its length; and
$\seqconiii{\_}{\cdots}{\_}$ as a function that takes some number of sequences (at least 2)
and returns their concatenation.

Notice that the interpretations of $\selem$ and $\seqnth$ are not completely fixed by the theory:
$\selem$ can be set arbitrarily, and $\seqnth$ is only defined by the theory for some values of
its second argument. For the rest, it can be set arbitrarily.


\subsection{Vectors as Sequences}
We show the applicability of $\sth$ 
by using it for 
a simple verification task.
Consider the \cpp function \verb!swap! at the top of \Cref{fig:ex_seq_arr}.
This function swaps two elements in a vector. 
The comments above the function include a partial specification for it:
if both indexes are in-bounds and the indexed elements are equal, then the function should not change 
the vector (this is expressed by \verb!s_out==s!). 
%
We now consider how to encode the verification condition induced by the code and the specification.
The function variables $a$, $b$, $i$, and $j$ can be encoded
 as variables of sort $\sint$ with the same names.
We include two copies of $s$:
$s$ for its value at the beginning, and $s_{out}$ for
its value at the end.
But what should be the sorts of $s$ and $s_{out}$?
In \Cref{fig:ex_seq_arr} we consider two options: one is based on arrays and the other on sequences.

\begin{example}[Arrays]
\label{ex:arr-enc}
The theory of arrays includes three sorts:
index, element (in this case, both are $\sint$), and an array sort $\sarray$, as well as two operators:
$x[i]$, interpreted as the $i$th element of $x$;
and $x[i\leftarrow a]$, interpreted as the array obtained from
$x$ by setting the element at index $i$ to $a$.
%
We declare $s$ and $s_{out}$ as variables of an uninterpreted sort $V$ and declare two functions
$\ell$ and $\arrcontent$, which, given $v$ of sort $V$,
return its length
(of sort $\sint$)
and content (of sort $\sarray$), respectively.\footnote{It is possible to obtain a similar encoding using the theory of datatypes; however, here we use uninterpreted functions which are simpler and better supported by SMT solvers.}
%

Next, we introduce functions to model vector operations: $\arrteq$ for comparing vectors,
$\arrnth$ for reading from them, and $\arrupdate$ for updating them. 
These functions need to be axiomatized. 
We include two axioms (bottom of
\Cref{fig:ex_seq_arr}):
$Ax_1$ states that two vectors are equal iff they have the same length and the same contents.
$Ax_2$ axiomatizes the update operator; the result has the same length, and if the updated index is
in bounds, then the corresponding element is updated. 
These axioms are not meant to be complete, but are rather just strong enough for the example.

The first two lines of the \verb!swap! function are encoded as equalities using $\arrnth$,
and the last two lines are combined into one nested constraint that involves $\arrupdate$.
The precondition of the specification is naturally modeled
using $\arrnth$, and the post-condition is negated, so that
the unsatisfiability of the formula entails the 
correctness of the function w.r.t. the specification. 
Indeed, the conjunction of all formulas in this encoding is unsatisfiable in the combined theories
of arrays, integers, and uninterpreted functions.
\end{example}

The above encoding
 has two main shortcomings:
It introduces auxiliary symbols, and it uses quantifiers, thus reducing clarity and efficiency. 
%
%
 In the next example, we see how using the theory of sequences
 allows for a much more natural and succinct encoding.

\begin{example}[Sequences]
\label{ex:seq-enc}
In the sequences encoding, $s$ and $s_{out}$
have sort $\sseq$.
No auxiliary sorts or functions
are needed, as the theory symbols can be used directly.
Further, these symbols do not need to be axiomatized as
their semantics is fixed by the theory.
The resulting formula, much shorter than in \Cref{ex:seq-enc}
and with no quantifiers, is unsatisfiable in $\sth$.
\end{example}


\begin{figure}[t]
    \centering
\begin{lstlisting}[language=C++,basicstyle=\scriptsize]
// @pre: 0 <= i,j < s.size() and s[i] == s[j]
// @post: s_out == s
void swap(std::vector<int>& s, int i, int j) {
  int a = s[i];
  int b = s[j];
  s[i] = b;
  s[j] = a;
}
\end{lstlisting}
    
\scalebox{0.85}{
\setlength{\tabcolsep}{8pt}
\begin{tabular}{@{}lll@{}}
\toprule
&  \textbf{Sequences} &  \textbf{Arrays} \\
\midrule
Problem Variables & $a,b,i,j: \sint$\qquad $s,s_{out}: \sseq$ & 
$a,b,i,j: \sint$\qquad $s,s_{out}: V$ \\
\midrule
Auxiliary Variables &&
$\ell: V\rightarrow \sint$ \qquad
$\arrcontent: V\rightarrow \sarray$ \\
&&
$\arrteq: V\times V\rightarrow \bool$ \\
&&$\arrnth: V\times \sint\rightarrow \sint $ \\
&&
$\arrupdate: V\times \sint \times \sint \rightarrow V$ \\
\midrule
Axioms && $Ax_{1}\wedge Ax_{2}$\\
\midrule
Program &    $a\teq\seqnth(s,i)\wedge b\teq \seqnth(s,j)$&  $a\teq\arrnth(s,i)\wedge b\teq \arrnth(s,j)$\\
&    $s_{out}\teq\sequpdate(\sequpdate(s, i, b), j, a)$ & $s_{out}\arrteq\arrupdate(\arrupdate(s,i,b), j, a)$\\
\midrule
Spec. &    $0\leq i,j < \seqlen{s}\wedge \seqnth(s,i)\teq\seqnth(s,j)$ & $0\leq i,j < \arrlen{s}\wedge \arrnth(s,i)\teq\arrnth(s,j)$\\[1ex]
 &   $\neg s_{out}\teq s$ & $\neg s_{out}\arrteq s$ \\
\bottomrule
\end{tabular}
}

\medskip

\scalebox{0.9}{
\begin{tabular}{|c|}
\hline
$Ax_{1}:= \forall x,y.x\arrteq y \leftrightarrow 
(\arrlen{x}\teq\arrlen{y}\wedge\forall~0\leq i < 
\arrlen{x}.\arrcontent(x)[i]\teq\arrcontent(y)[i])$ \\[1ex]
$Ax_{2}:= \forall x,y,i,a.y\arrteq \arrupdate(x,i,a)\rightarrow (\arrlen{x}\teq \arrlen{y}\wedge (0\leq i < \arrlen{x}\rightarrow \arrcontent(y)\teq\arrcontent(x)[i\leftarrow a] ))$
\\\hline
\end{tabular}
}
\normalsize
{\vskip -0.5em}
    \caption{An example using $\sth$.}
    
    \label{fig:ex_seq_arr}
\end{figure}

\section{Calculi}
\label{sec:algorithms}

%
After introducing some definitions and assumptions, we describe a basic
calculus for the theory of sequences, which adapts techniques from previous
procedures for the theory of strings.
In particular, the basic calculus reduces the operators $\mathsf{nth}$ and $\mathsf{update}$ by introducing concatenation terms.
We then show how to extend the basic calculus by introducing additional rules inspired by solvers for the theory of arrays; the modified calculus can often reason about $\mathsf{nth}$ and $\mathsf{update}$ terms directly, avoiding the introduction of concatenation terms (which are typically expensive to reason about).

Given a vector of sequence terms $\vec{t} = (t_1, \ldots, t_n)$,
we use $\vec{t}$ 
to denote the term corresponding to the concatenation of $t_1, \ldots, t_n$.
If $n=0$,  $\vec{t}$ denotes $\seqempty$, and
if $n=1$, $\vec{t}$ denotes $t_1$; otherwise (when $n>1$),
$\vec{t}$ denotes a concatenation term having $n$ children.
In our calculi, we distinguish between sequence and arithmetic constraints.

\begin{definition}
A $\ssig$-formula 
$\varphi$ is a \define{sequence constraint}
if it has the form
	$s\teq t$ or $s\tneq t$;
it is an \define{arithmetic constraint} if it has the form
$s\teq t$, $s\geq t$, $s\tneq t$, or $s<t$
where $s,t$ are terms of sort $\sint$, or if it is a
disjunction $c_1 \lor c_2$ of two arithmetic constraints.
\end{definition}

\noindent Notice that sequence constraints do not have to contain sequence terms
(e.g., $x\teq y$ where $x,y$ are $\selem$-variables).
Also, equalities and disequalities between terms of sort $\sint$ are both sequence and arithmetic constraints.
In this paper we focus on sequence constraints and arithmetic
constraints. 
This is justified by the following lemma.
(Proofs of this lemma and later results can be found in
\begin{report}%
the appendix%
\end{report}%
\begin{ijcar}%
an extended version of this paper~\cite{ShengEtAl-arXiv-22}%
\end{ijcar}%
.)

\begin{restatable}{lemma}{flatas}
\label{lem:flatas}
For every quantifier-free $\ssig$-formula $\varphi$,
there are sets $\Sc_1,\ldots,\Sc_n$ of sequence constraints
and sets $\Ac_1,\ldots,\Ac_n$ of arithmetic constraints
such that
$\varphi$ is $\sth$-satisfiable 
iff $\Sc_i\cup\Ac_i$ is $\sth$-satisfiable for some $i\in[1,n]$.
\end{restatable}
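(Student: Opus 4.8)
The plan is to reduce $\varphi$ to disjunctive normal form and then classify the literals of each disjunct. First I would put $\varphi$ into DNF, obtaining a logically equivalent formula $\psi_1 \lor \cdots \lor \psi_n$ in which each $\psi_i$ is a conjunction of $\ssig$-literals. Since the DNF transformation yields a formula satisfied by exactly the same interpretations as $\varphi$, we immediately get that $\varphi$ is $\sth$-satisfiable iff some $\psi_i$ is $\sth$-satisfiable; this fixes the index set $[1,n]$ in the statement, with disjunct $i$ producing the pair $(\Sc_i,\Ac_i)$.

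Next, for each disjunct $\psi_i$ I would sort its literals into $\Sc_i$ and $\Ac_i$. The key observation is that the only predicate symbol of $\ssig$ besides $\teq$ is $\leq$, so every atom of $\varphi$ has one of the forms $\bot$, $t_1 \teq t_2$, or $t_1 \leq t_2$ with $t_1,t_2$ of sort $\sint$. I would place every equality $t_1 \teq t_2$ and every disequality $t_1 \tneq t_2$ (a negated equality) into $\Sc_i$; by definition these are sequence constraints, regardless of the sorts or internal structure of $t_1$ and $t_2$. Every remaining literal is an inequality atom $t_1 \leq t_2$ or its negation; I would record $t_1 \leq t_2$ in $\Ac_i$ as $t_2 \geq t_1$ and the negation $\neg(t_1 \leq t_2)$ as $t_2 < t_1$, both of which match the allowed arithmetic-constraint forms. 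A literal $\bot$, should it occur, can be recorded as the unsatisfiable arithmetic constraint $0 \geq 1$ (or the whole disjunct simply discarded). With this assignment, $\Sc_i \cup \Ac_i$ is the conjunction $\psi_i$ up to these trivial rewritings, so it is $\sth$-satisfiable iff $\psi_i$ is, and the lemma follows; note that the disjunctive clause in the definition of arithmetic constraints is not even needed for this reduction.

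I do not expect a hard technical obstacle here: the argument is essentially bookkeeping, and the content lies in verifying that the case analysis over atom shapes is exhaustive and satisfiability-preserving. The point requiring the most care is confirming that no purification across the sequence/arithmetic boundary is necessary. Because the definition of a sequence constraint permits arbitrary terms on each side, a mixed equality such as $a \teq \seqnth(s, \seqlen{t} + 3)$ is already a sequence constraint as written, while a mixed inequality such as $\seqlen{s} + 3 \leq \seqlen{t}$ is already an arithmetic constraint since both sides have sort $\sint$. One should also observe that $\seqnth$ returns sort $\selem$ rather than $\sint$, so element-access terms can never appear as the sides of an $\leq$-atom and hence never produce an ill-classified arithmetic literal. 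Finally, if the fragment admits bare Boolean-sorted atoms (e.g.\ Boolean variables), these can be absorbed into the DNF step by guessing their truth values, at the cost of enlarging $n$.
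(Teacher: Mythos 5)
Your proposal is correct and takes essentially the same route as the paper's proof: transform $\varphi$ into a disjunction of conjunctions of literals and partition each disjunct into equalities/disequalities (sequence constraints) and $\leq$-literals (arithmetic constraints). The only divergence is that the paper additionally flattens the literals using fresh variables (yielding an equisatisfiable rather than logically equivalent disjunction), which this lemma does not strictly need but which lines up with the flatness requirements of \Cref{assumption} used later by the calculi; your observation that no purification across the sequence/arithmetic boundary is required is sound, since the paper's definitions place no sort or structure restrictions on sequence constraints and only a sort restriction on arithmetic ones.
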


\noindent 
Throughout the presentation of the calculi, we will make a few simplifying assumptions.

\begin{assumption}
\label{assumption}
  Whenever we refer to a set $\Sc$ of sequence constraints, we assume:
\begin{enumerate}
    \item 
  for every non-variable term $t\in\ter{\Sc}$, there exists a variable $x$ such that $x\teq
  t\in\Sc$;
  \item for every $\sseq$-variable $x$, there exists a variable $\ell_x$ such that
  $\ell_x\teq\seqlen{x}\in\Sc$;
  \item all literals in $\Sc$ are flat.
  \end{enumerate}
  Whenever we refer to a set of arithmetic constraints, we assume all its literals are flat.
\end{assumption}
These assumptions are without loss of generality as any set 
can easily be transformed into an equisatisfiable set satisfying the assumptions by the addition of fresh
variables and equalities. 
Note that some rules below introduce non-flat literals.  In such cases, we assume that similar transformations are done immediately after applying the rule to maintain the invariant that all literals in $\Sc\cup\Ac$ are flat.
Rules may also introduce fresh variables $k$ of sort $\sseq$.
We further assume that in such cases, a corresponding 
constraint $\ell_{k}\teq\seqlen{k}$ is added to $\Sc$ with
a fresh variable $\ell_k$.

\begin{definition}
\label{def:entempty}
Let $\Cc$ be a set of constraints. 
We write $\Cc \ent \varphi$ to denote that
$\Cc$ entails formula $\varphi$ in the empty theory, and write
$\sceq{\Cc}$ to denote the binary relation over $\ter{\Cc}$ 
such that
$s\sceq{\Cc}t$ iff $\ \Cc\ent s\teq t$.
\end{definition}

\begin{lemma}
\label{lem:equiv_class_nonempty}
For all set $\Sc$ of sequence constraints,
$\sceq{\Sc}$ is an equivalence relation;
furthermore,
  every equivalence class of $\sceq{\Sc}$ contains at least
  one variable.
\end{lemma}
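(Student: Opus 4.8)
The plan is to treat the two claims separately, since the first is a general fact about equality reasoning whereas the second is exactly where \Cref{assumption} does the work. Throughout I work with $\sceq{\Sc}$ as defined in \Cref{def:entempty}, namely the relation on $\ter{\Sc}$ with $s \sceq{\Sc} t$ iff $\Sc \ent s \teq t$.

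For the equivalence-relation claim, I would observe that reflexivity, symmetry, and transitivity each transfer from the pointwise (per-interpretation) level to the entailment level, because $\teq$ is interpreted as the identity relation in every interpretation $\M$. Concretely: for reflexivity, every $\M$ satisfies $s \teq s$, so $\Sc \ent s \teq s$ for each $s \in \ter{\Sc}$; for symmetry, $\M \models s \teq t$ implies $\M \models t \teq s$, hence $\Sc \ent s \teq t$ yields $\Sc \ent t \teq s$; and transitivity follows identically from transitivity of identity. Since the defining property holds in every model of $\Sc$ in each case, the corresponding property of $\sceq{\Sc}$ follows, and so $\sceq{\Sc}$ is an equivalence relation on $\ter{\Sc}$.

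For the second claim, the key ingredient is \Cref{assumption}(1). Given an arbitrary equivalence class $C$, I would pick any representative $t \in C$ and split on whether $t$ is a variable. If $t$ is a variable, then $C$ already contains a variable. If $t$ is a non-variable term, then since $t \in \ter{\Sc}$, \Cref{assumption}(1) supplies a variable $x$ with $x \teq t \in \Sc$; this literal immediately gives $\Sc \ent x \teq t$, hence $x \sceq{\Sc} t$, so $x$ lies in the same class $C$. In either case $C$ contains a variable.

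There is no genuine obstacle here, as both parts are short, but the point worth flagging is that the second claim is not a purely logical fact: it depends essentially on the purification assumption \Cref{assumption}(1) that every non-variable subterm is named by an equation in $\Sc$; without it a class could consist solely of compound terms. I would also take care to note that the witness $x$ above indeed belongs to $\ter{\Sc}$, being a subterm of the literal $x \teq t \in \Sc$, so that it is a legitimate element of the domain on which $\sceq{\Sc}$ is defined. Finally, the argument is uniform in whether $\Sc$ is satisfiable (if it is not, $\sceq{\Sc}$ is simply the total relation on $\ter{\Sc}$, and the empty-$\ter{\Sc}$ case is vacuous), so no case split on consistency is needed.
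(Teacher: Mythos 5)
Your proof is correct. One thing to be aware of: the paper itself never proves \Cref{lem:equiv_class_nonempty} --- it is not one of the restatable results reproduced and proved in the appendix, and no argument for it appears anywhere in the source --- so there is no paper proof to compare against; the lemma is evidently treated as routine. Your write-up is the natural argument that the paper implicitly relies on, and it is complete: the equivalence-relation half follows because $\teq$ is interpreted as the identity relation in every interpretation, so reflexivity, symmetry, and transitivity transfer pointwise from models of $\Sc$ to the entailment relation of \Cref{def:entempty}; and the variable-witness half correctly isolates where the actual content lies, namely part 1 of \Cref{assumption}, which supplies for every non-variable $t \in \ter{\Sc}$ a variable $x$ with $x \teq t \in \Sc$, hence $x \sceq{\Sc} t$. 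Your two side remarks are also the right ones to make explicit: that the witness $x$ is itself a member of $\ter{\Sc}$ (being a subterm of a literal of $\Sc$), so it genuinely lies in the domain of $\sceq{\Sc}$; and that no case split on the satisfiability of $\Sc$ is needed, since an unsatisfiable $\Sc$ just makes $\sceq{\Sc}$ the total relation on $\ter{\Sc}$, for which both claims still hold by the same argument.
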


\noindent
We denote the equivalence class of a term $s$ according to $\sceq{\Sc}$
by $[s]_{\sceq{\Sc}}$ and 
drop the $\sceq{\Sc}$ subscript when it is clear from the context.

\begin{figure}[t]
\centering
\setlength{\tabcolsep}{12pt}
\begin{tabular}{@{}ll@{}}
     $\seqlen{\seqempty} \to 0$ & $\seqlen{\sequnit(t)} \to 1$ \\
     $\seqlen{\sequpdate(s, i, t)} \to \seqlen{s}$ & $\seqlen{\seqconiii{s_1}{\cdots}{s_n}} \to \seqlen{s_1} + \cdots + \seqlen{s_n}$ \\[2ex]
     $\seqconiii{\vec{u}}{\seqempty}{\vec{v}} \to \seqconii{\vec{u}}{\vec{v}}$ & $\seqconiii{\vec{u}}{(\seqconiii{s_1}{\cdots}{s_n})}{\vec{v}} \to \seqconv{\vec{u}}{s_1}{\cdots}{s_n}{\vec{v}}$ \\
\end{tabular}
{\vskip -0.5em}
\caption{Rewrite rules for the reduced form $\nf{t}$ of a term $t$,
obtained from $t$ by applying these rules to completion.}
\label{fig:nf}
\end{figure}

In the presentation of the calculus, it will often be useful to normalize terms to what will be called a \emph{reduced form}.

\begin{definition}
\label{def:nf}
  Let $t$ be a $\ssig$-term.
  The \emph{reduced form} of $t$, denoted by $\nf{t}$, is
  the term obtained by applying the rewrite rules listed in
  \Cref{fig:nf} to completion.
\end{definition}

\noindent 
Observe that $\nf{t}$ is well defined
because the given rewrite rules form a terminating rewrite system.
This can be seen by noting that 
each rule reduces the number of applications of sequence operators
in the left-hand side term
or keeps that number the same but reduces the size of the term.
It is not difficult to show that $\ent[\sth] t \teq \nf{t}$.
%


We now introduce some basic definitions related to concatenation terms.

\begin{definition}
\label{def:atomic}
A \emph{concatenation term} is a term of the form
$\seqconiii{s_1}{\cdots}{s_n}$ with $n\geq 2$.  
If each $s_i$ is a variable, it is a \emph{variable concatenation term}.
For a set $\Sc$ of sequence constraints,
%
a variable concatenation term $\seqconiii{x_1}{\cdots}{x_n}$
is \emph{singular} in $\Sc$
if $\Sc \not\ent x_i \teq \seqempty$ for at most one variable $x_i$
with $i \in [1,n]$.
A sequence variable $x$ is \emph{atomic in $\Sc$}
if
$\Sc\notent x\teq\seqempty$ and
for all variable concatenation terms $s\in\ter{\Sc}$ such that $\Sc\ent x\teq s$,
$s$ is singular in $\Sc$.
\end{definition}

\noindent We lift the concept of atomic variables to atomic representatives
of equivalence classes. 

\begin{definition}
\label{def:equiv-represent}
Let $\Sc$ be a set of sequence constraints.
Assume a choice function $\cf: \ter{\Sc} /{\sceq{\Sc}} \rightarrow
\ter{\Sc}$ that chooses a variable from each equivalence class of $\sceq{\Sc}$.
A sequence variable $x$
is an \emph{atomic representative in $\Sc$} if it is atomic in $\Sc$ and 
$x=\cf([x]_{\sceq{\Sc}})$.
\end{definition}

\noindent Finally, we introduce a relation that is the foundation for reasoning
about concatenations.

\begin{definition}
\label{def:entnf}
Let $\Sc$ be a set of sequence constraints.  We inductively define a relation $\Sc \entnf x \teq s$, where $x$ is a sequence variable in $\Sc$ and $s$ is a sequence term whose variables are in $\ter{\Sc}$, as follows:
\begin{enumerate}
    \item\label{it:entnf-ref} $\Sc \entnf x \teq x$ for all sequence variables $x\in\ter{\Sc}$.
    \item\label{it:entnf-con} $\Sc \entnf x \teq t$ for all sequence variables $x\in\ter{\Sc}$ and variable concatenation terms $t$, where $x\teq t\in\Sc$.
    \item\label{it:entnf-rec} If $\Sc \entnf x \teq \nf{(\seqconiii{\vec{w}}{y}{\vec{z}})}$ and $\Sc \ent y \teq t$ and
    $t$ is $\seqempty$ or a variable concatenation term in $\Sc$ that is not singular in $\Sc$,
    then $\Sc \entnf x \teq \nf{(\seqconiii{\vec{w}}{t}{\vec{z}})}$.
\end{enumerate}
Let $\cf$ be a choice function for $\Sc$ as defined in \Cref{def:equiv-represent}. We additionally define the entailment relation $\Sc \entnf^\ast x \teq \vec{y}$, where $\vec{y}$ is of length $n \ge 0$,
to hold if each element of $\vec{y}$ is an atomic representative in $\Sc$
and there exists $\vec{z}$ of length $n$ such that
$\Sc \entnf x \teq \vec{z}$ and
$\Sc \ent y_i \teq z_i$ for $i \in [1,n]$.
\end{definition}

\noindent In other words, $\Sc \entnf^\ast x \teq t$ holds when $t$ is a
concatenation of atomic representatives and is entailed to be equal to $x$
by $\Sc$.  
In practice, $t$ is determined by
recursively expanding concatenations using equalities in $\Sc$ until a fixpoint
is reached.

\begin{example}
\label{ex:entnf}
  Suppose $\Sc=\{x\teq \seqconii{y}{z},y\teq \seqconii{w}{u},u \teq v\}$ (we omit the additional constraints required by \Cref{assumption}, part 2 for brevity).  It is easy to see that $u$, $v$, $w$, and $z$ are atomic in $\Sc$, but $x$ and $y$ are not.  Furthermore, $w$ and $z$ (and one of $u$ or $v$) must also be atomic representatives.
Clearly, $\Sc\entnf x\teq x$ and
$\Sc\ent x\teq\seqconii{y}{z}$.
Moreover, $\seqconii{y}{z}$ is a variable concatenation term that is not singular in $\Sc$.
Hence, we have
$\Sc\entnf x\teq\nf{( \seqconii{y}{z})}$, and so
$\Sc\entnf x\teq \seqconii{y}{z}$ (by using either \Cref{it:entnf-con} or \Cref{it:entnf-rec} of \Cref{def:entnf}, 
as in fact $x\teq\seqconii{y}{z}\in\Sc$.
	).
Now, since
$\Sc\entnf x\teq \seqconii{y}{z}$,
$\Sc\ent y\teq \seqconii{w}{u}$, and
$\seqconii{w}{u}$ is a variable concatenation term not singular
in $\Sc$, we get that
$\Sc\entnf x\teq \nf{(\seqconii{(\seqconii{w}{u})}{z})}$,
and so
$\Sc\entnf x\teq \seqconiii{w}{u}{z}$.
Now, assume that
$v=\alpha([v]_{\sceq{\Sc}})=\alpha(\{v,u\})$.
Then,
$\Sc\entnf^{\ast}x\teq\seqconiii{w}{v}{z}$.
\end{example}



\noindent 
Our calculi can be understood as modeling abstractly a cooperation 
between an \emph{arithmetic subsolver} and a \emph{sequence subsolver}.
Many of the derivation rules lift those in the string calculus of Liang et al.~\cite{DBLP:conf/cav/LiangRTBD14} to sequences of elements of an arbitrary type.  
We describe them similarly as rules that modify \emph{configurations}.

\begin{definition}
A \emph{configuration} is either the distinguished configuration $\unsat$ or
a pair $\conf{\Sc,\Ac}$ of a set $\Sc$ of sequence constraints and
a set $\Ac$ of arithmetic constraints.
\end{definition}

\noindent The rules are given in \define{guarded assignment form}, 
where 
the rule premises describe the conditions on the current configuration
under which the rule can be applied,
and the conclusion is either $\unsat$,
or otherwise describes the resulting modifications to the configuration.
A rule may have multiple conclusions separated by $\parallel$.
In the rules, some of the premises have the form $\Sc\ent[]s\teq t$
(see \Cref{def:entempty}).  
Such entailments can be checked with standard algorithms for congruence closure.
Similarly, premises of the form $\Sc\ent[\lian]s\teq t$ can be checked
by solvers for linear integer arithmetic.

An application of a rule is \define{redundant} if it has a conclusion
where each component in the derived configuration is a subset of 
the corresponding component in the premise configuration.  We assume that for rules that introduce fresh variables, the introduced variables are identical whenever the premises triggering the rule are the same (i.e., we cannot generate an infinite sequence of rule applications by continuously using the same premises to introduce fresh variables).\footnote{In practice, this is implemented by associating each introduced variable with a \emph{witness term} as described in~\cite{DBLP:conf/fmcad/ReynoldsNBT20}.}
%
%
A configuration other than \unsat is \define{saturated} with respect to 
a set $R$ of derivation rules
if every possible application of a rule in $R$ to it is redundant.
A \define{derivation tree} is a tree where each node is a configuration 
whose children, if any, 
are obtained by a non-redundant application of a rule of the calculus.
A derivation tree is \define{closed} if all of its leaves are $\unsat$.
As we show later,
a closed derivation tree with root node $\conf{ \Sc, \Ac }$ is a proof that
$\Ac \cup \Sc$ is unsatisfiable in  $\sth$.
In contrast, a derivation tree with root node $\conf{ \Sc, \Ac }$ and a
saturated leaf with respect to all the rules of the calculus
is a witness that $\Ac \cup \Sc$ is satisfiable in $\sth$.

\subsection{Basic Calculus}


%

\begin{figure}[t]
\centering
\scalebox{.85}{
\begin{tabular}{cc}
\rn{A-Conf}
\(
\inferrule{
  \Ac \ent[\lian] \ff
}{
 \unsat
}
\)
\qquad
\rn{A-Prop}
\(
\inferrule{
  \Ac \ent[\lian] s \teq t
  \\
  s, t \in \ter{\Sc}
}{
  \Sc := \Sc, s \teq t
}
\)
\\[4ex]
\rn{S-Conf}
\(
\inferrule{
 \Sc \ent \ff
}{
 \unsat
}
\)
\qquad
\rn{S-Prop}
\(
\inferrule{
  \Sc \ent s \teq t \\ s,t\in\ter{\Sc} \\
  s,t\text{ are } \lsig\text{-terms}
}{
  \Ac := \Ac, s \teq t
}
\)
\\[4ex]
\rn{S-A}
\(
\inferrule{
  x,y\in\ter{\Sc}\cap\ter{\Ac} \\
  x,y:\sint
}{
  \Ac := \Ac, x \teq y\ror \Ac:=\Ac,x\tneq y
}
\)
\\[4ex]
\rn{L-Intro}
\(
\inferrule{
  s \in \ter{\Sc}
  \\
  s : \sseq
}{
  \Sc := \Sc, \seqlen{s} \teq \nf{(\seqlen{s})}
}
\)
\quad
\rn{L-Valid}
\(
\inferrule{
  x \in \ter{\Sc}
  \\
  x : \sseq
}{
  \Sc := \Sc, x \teq \seqempty
  \ror
  \Ac := \Ac, \ell_x > 0
}
\)
\\[4ex]
\rn{U-Eq}
\(
\inferrule{
  \Sc \ent \sequnit(x) \teq \sequnit(y)
}{
  \Sc := \Sc, x \teq y
}
\)
\quad
\rn{C-Eq}
\(
\inferrule{
  \Sc \entnf^\ast x \teq \vec{z}
  \\
  \Sc \entnf^\ast y \teq \vec{z}
}{
  \Sc := \Sc, x \teq y
}
\)
%
%
\\[4ex]
\rn{C-Split}
\(
\inferrule{
  \Sc \entnf^\ast x \teq \nf{(\seqconiii{\vec{w}}{y}{\vec{z}})}
  \\
  \Sc \entnf^\ast x \teq \nf{(\seqconiii{\vec{w}}{y'}{\vec{z'}})}
}{
  \Ac := \Ac, \ell_{y} > \ell_{y'} \\ \Sc := \Sc, y \teq \seqconii{y'}{k}
  \ror \\\\
  \Ac := \Ac, \ell_{y} < \ell_{y'} \\ \Sc := \Sc, y' \teq \seqconii{y}{k}
  \ror \\\\
  \Ac := \Ac, \ell_{y} \teq \ell_{y'} \\ \Sc := \Sc, y \teq y'
  \\\\
}
\)
\\[4ex]
\rn{Deq-Ext}
\(
\inferrule{
 x \tneq y\in\Sc \\ x, y : \sseq
}{
  \Ac := \Ac, \ell_{x} \tneq \ell_{y}
  \ror \\\\
  \Ac := \Ac, \ell_{x} \teq \ell_{y}, 0 \leq i < \ell_{x} \quad \Sc := \Sc, w_1\teq\seqnth(x, i),w_2\teq \seqnth(y, i), w_1\tneq w_2
}
\)
\end{tabular}}
{\vskip -0.5em}
\caption{
Core derivation rules.
The rules use $k$ and $i$ to denote fresh variables of sequence and integer sort, respectively, and $w_1$ and $w_2$ for fresh element variables.
}
\label{fig:core-rules}
\end{figure}

\begin{figure}
\centering
\scalebox{.85}{
\begin{tabular}{cc}
\\[4ex]
\rn{R-Extract}
\(
\inferrule{
  x\teq\seqextract(y,i,j)\in\Sc
}{
  \Ac := \Ac, i<0\vee i\geq \ell_{y}\vee j\leq 0 \\ \Sc := \Sc, x\teq\seqempty
  \ror \\
  \Ac := \Ac, 0\leq i < \ell_{y}, j > 0, \ell_{k}\teq i,
\ell_{x} \teq \min(j,\ell_{y}-i)\\\\
  \Sc := \Sc, y\teq \seqconiii{k}{x}{k'}
}
\)
\\[8ex]
\rn{R-Nth}
\(
\inferrule{
  x \teq \seqnth(y, i) \in \Sc
}{
  \Ac := \Ac, i < 0 \vee i \geq \ell_{y} 
  \ror \\\\
  \Ac := \Ac, 0 \leq i < \ell_{y}, \ell_{k} \teq i \\ \Sc := \Sc, y \teq \seqconiii{k}{\sequnit(x)}{k'}
  \\\\
}
\)
\\[4ex]
\rn{R-Update}
\(
\inferrule{
  x \teq \sequpdate(y, i, z) \in \Sc
}{
  \Ac := \Ac, i < 0 \vee i \geq \ell_{y} \\ \Sc := \Sc, x\teq y
  \ror \\\\
  \Ac := \Ac, 0 \leq i < \ell_{y}, \ell_{k} \teq i, \ell_{k'} \teq 1
  \\
  \Sc := \Sc, y \teq \seqconiii{k}{k'}{k''}, x \teq \seqconiii{k}{\sequnit(z)}{k''}
}
\)
\end{tabular}}
\caption{
Reduction rules for $\seqextract$, $\seqnth$, and $\sequpdate$.  The rules use $k$, $k'$, and $k''$ to denote fresh sequence variables.  We write $s\teq\min(t,u)$ as an abbreviation for $s\teq t \vee s\teq u, s\le t, s\le u$.
}
\label{fig:reduction-rules}
\end{figure}

\begin{definition}
\label{def:basiccal}
The calculus $\basiccal$ consists of the derivation rules in
\Cref{fig:core-rules,fig:reduction-rules}.
\end{definition}
Some of the rules are adapted from previous work on string
solvers~\cite{DBLP:conf/cav/LiangRTBD14, DBLP:conf/cav/ReynoldsWBBLT17}.
Compared to that work, our presentation of the rules is noticeably simpler,
due to our use of the relation $\entnf^{\ast}$ 
from \Cref{def:entnf}.
In particular, our configurations consist only of pairs of sets of formulas, without
any auxiliary data-structures. 

Note that judgments of the form
$\Sc\entnf^{\ast} x\teq t$ are used in premises 
of the calculus.
It is possible to compute whether such a premise holds
thanks to the following lemma.

\begin{restatable}{lemma}{entnfterm}
\label{lem:entnf-term}
Let $\Sc$ be a set of sequence constraints
and $\Ac$ a set of arithmetic constraints.
If $\conf{\Sc,\Ac}$ is saturated w.r.t.
$\rn{S-Prop}$,
$\rn{L-Intro}$
and $\rn{L-Valid}$,
the problem of determining whether 
$\Sc\entnf^{\ast}x\teq s$ for given $x$ and $s$
is decidable.
\end{restatable}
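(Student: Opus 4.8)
The plan is to reduce the decision problem to a terminating search over the inductive definition of $\entnf$, using the number of non-empty components of a concatenation as a well-founded measure. First I would isolate the purely static subchecks. By \Cref{lem:equiv_class_nonempty} the relation $\sceq{\Sc}$ is a computable equivalence over the finite set $\ter{\Sc}$, so deciding $\Sc \ent u \teq v$ (and in particular whether a variable is $\sceq{\Sc}$-equal to $\seqempty$) reduces to congruence closure. Consequently, one can decide whether a given variable concatenation term is singular, and whether a sequence variable is atomic and equals its chosen representative $\cf$; all of these are quantifier-free checks over $\ter{\Sc}$. Thus it suffices to decide, given $x$ and a target $s = \seqconiii{y_1}{\cdots}{y_n}$ whose $y_i$ are atomic representatives, whether there is a derivation $\Sc \entnf x \teq \vec{z}$ with $\vec{z}$ of length $n$ and $\Sc \ent y_i \teq z_i$ for each $i$; here each $z_i$ is necessarily non-empty, since it is $\sceq{\Sc}$-equal to the non-empty atomic representative $y_i$.

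Next I would organize the derivations of $\entnf$ as a worklist/fixpoint search and equip it with the measure $\mu(\vec{t})$ defined as the number of components of $\vec{t}$ that $\Sc$ does not entail equal to $\seqempty$. The key monotonicity facts, all decided in the empty theory over $\Sc$, are: expanding a component $y$ with $\Sc \ent y \teq \seqempty$ into $\seqempty$ (the recursive case of \Cref{def:entnf} with $t = \seqempty$) removes $y$ after reduction and leaves $\mu$ unchanged; and expanding a component into a non-singular concatenation $t$ replaces it by at least two non-empty components, so $\mu$ strictly increases. Hence $\mu$ never decreases along a derivation. Since the sought witness $\vec{z}$ has $\mu(\vec{z}) = n$, I only need to explore derivations whose intermediate forms satisfy $\mu \le n$. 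Saturation enters precisely here: by \rn{L-Valid} every sequence variable either has $x \teq \seqempty \in \Sc$ or has $\ell_x > 0 \in \Ac$, and by \rn{L-Intro} together with \rn{S-Prop} the additive length identities for concatenation terms are propagated into $\Ac$; this guarantees that the components counted by $\mu$ genuinely have length at least one and that $\mu$ is bounded by the value forced on $\ell_x$, so the $\mu \le n$ cutoff is sound and no productive derivation is missed.

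Finally I would bound the whole search. Any successful derivation performs exactly $n - \mu_0$ strictly increasing steps (where $\mu_0$ is the initial measure), the concatenation terms available for expansion come from the finite set of variable concatenation terms in $\Sc$ and have bounded arity, and the only other steps discard entailed-empty components, of which each expansion introduces boundedly many. Normalizing each intermediate form by eagerly removing entailed-empty components leaves a finite reachable set of sequences of non-empty components of length at most $n$ drawn from $\ter{\Sc}$, and a standard fixpoint computation over this finite set decides reachability of a $\vec{z}$ matching $s$, which together with the static checks above settles $\Sc \entnf^{\ast} x \teq s$.

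I expect the main obstacle to be this termination/boundedness argument, specifically the interaction between the two notions of emptiness in play: the empty-theory entailment $\Sc \ent \cdot \teq \seqempty$ that drives $\entnf$, and the arithmetic length reasoning in $\Ac$. In particular, one must rule out that a cyclic equation in $\Sc$ (for example $x \sceq{\Sc} \seqconii{x}{z}$ with $z$ non-empty) generates an unbounded family of expansions. The clean way around this is the observation that $\mu$ is monotone and that the fixed target length $n$ caps it, so such cycles are cut off by the $\mu \le n$ bound without appealing to consistency of $\Ac$; making that cap provably complete, i.e.\ showing every genuine witness survives the bounded search, is exactly where the saturation hypotheses on \rn{L-Intro}, \rn{S-Prop}, and \rn{L-Valid} are needed.
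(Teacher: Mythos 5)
Your proposal is correct, but it takes a genuinely different route from the paper's proof. The paper establishes something stronger than decidability: the set of \emph{all} pairs $(x,s)$ with $\Sc \entnf x \teq s$ is finite. It organizes derivations into finitely branching trees and shows every path is finite using a strictly \emph{decreasing} measure, namely the (well-founded) Dershowitz--Manna multiset extension of the arithmetic order $s \prec t$ iff $\Ac \ent[\lian] \ell_{x_s} < \ell_{x_t}$; the saturation hypotheses are used exactly there (via \Cref{lem:concat-term-len}) to show that each component of a non-singular concatenation has strictly smaller entailed length than the variable it replaces. You instead use a monotonically \emph{non-decreasing}, purely combinatorial measure $\mu$ (the number of components not entailed equal to $\seqempty$), capped by the target length $n$, together with a step-count bound, which decides a single query without ever proving the relation finite. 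Two consequences are worth noting. First, contrary to your closing paragraph, your argument does not actually use the saturation hypotheses anywhere: monotonicity of $\mu$ follows from the definition of singularity alone, and the cap and the step bounds are pure counting. This makes your proof more elementary and strictly more general --- in particular it survives the corner case where $\Ac$ is $\lian$-inconsistent (not excluded by the lemma's hypotheses, since saturation w.r.t.\ \rn{A-Conf} is not assumed), a case in which the paper's order $\prec$ is no longer well-founded and, with cyclic constraints in $\Sc$, the derivable set can in fact be infinite. Second, the paper's stronger finiteness conclusion is what gets reused downstream: the proofs of \Cref{lem:nf-unique} and \Cref{lem:entnf-sound} appeal to the finite derivation tree of \Cref{lem:entnf-term}, which your per-query bound would not directly provide.

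Two small repairs to your write-up. The number of strictly increasing steps is \emph{at most}, not exactly, $n-\mu_0$, since a single expansion can raise $\mu$ by more than one. More importantly, your final ``eagerly remove entailed-empty components'' normalization is not obviously complete in the degenerate case where some variable satisfies both $\Sc \ent y \teq \seqempty$ and $\Sc \ent y \teq t$ for a non-singular $t$ (again only possible when $\Ac$ is inconsistent): eager removal of $y$ forfeits the expansion of $y$ into $t$, which a genuine $\entnf$ derivation may perform. The plain bounded search you describe first --- explore all derivations, prune when $\mu > n$, and use the step bound --- is the version that is airtight, so you should present that as the algorithm and drop the eager-normalization refinement or restrict it to the case where $\Ac$ is consistent.
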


\noindent
\Cref{lem:entnf-term} assumes saturation with respect to certain rules. 
Accordingly, our proof strategy,  described in \Cref{sec:implementation},
will ensure such saturation before attempting to apply rules relying on $\entnf^{\ast}$.
The relation $\entnf^{\ast}$ induces a normal form
for each equivalence class of $\sceq{\Sc}$.

\begin{restatable}{lemma}{nfunique}
\label{lem:nf-unique}
Let $\Sc$ be a set of sequence constraints
and $\Ac$ a set of arithmetic constraints.
Suppose $\conf{\Sc,\Ac}$ is saturated w.r.t.
$\rn{A-Conf}$,
$\rn{S-Prop}$,
$\rn{L-Intro}$,
$\rn{L-Valid}$, and
$\rn{C-Split}$.
Then, 
    for every equivalence class $e$ of $\sceq{\Sc}$ whose terms are of sort $\sseq$, there 
    exists a unique (possibly empty) $\vec{s}$
    such that whenever $\Sc \entnf^\ast x \teq \vec{s'}$ for $x \in e$, then
    $\vec{s'} =\vec{s}$.
    In this case, we call $\vec{s}$
    the \emph{normal form} of $e$ (and of $x$).
\end{restatable}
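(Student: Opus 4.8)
The plan is to prove the two assertions of the statement—existence of a normal form for $e$ and uniqueness of the value produced by any $\entnf^\ast$ derivation from any member of $e$—by largely independent arguments, with uniqueness carrying the technical weight. For existence I would show that every member $x\in e$ has at least one derivation $\Sc\entnf^\ast x\teq\vec s$: if $\Sc\ent x\teq\seqempty$, then one application of clause~\ref{it:entnf-rec} of \Cref{def:entnf} yields the empty vector; otherwise, starting from reflexivity and repeatedly using clause~\ref{it:entnf-rec} to rewrite any element that is entailed equal to $\seqempty$ or to a non-singular variable concatenation term, the process terminates (by the reasoning underlying \Cref{lem:entnf-term}, where saturation rules out infinite expansion through length constraints) at a vector each of whose elements is atomic, and replacing each element by its $\cf$-representative gives the required $\vec s$.

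For uniqueness, suppose toward a contradiction that $\Sc\entnf^\ast x\teq\vec s$ and $\Sc\entnf^\ast y\teq\vec{s'}$ with $x,y\in e$ and $\vec s\neq\vec{s'}$. The first move is to reduce to a single starting variable. Because clause~\ref{it:entnf-rec} expands the right-hand side using only the entailments $\Sc\ent y_i\teq t$, which do not refer to the starting variable, any concatenation of atomic representatives entailed equal to the class can be reproduced by a derivation from $x$; in particular, using $\Sc\ent x\teq y$ to align the two derivations on a common leading concatenation, I obtain $\Sc\entnf^\ast x\teq\vec{s'}$ as well, so I may assume both derivations start from the same variable $x$. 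A straightforward induction over \Cref{def:entnf} (using $\ent[\sth] t\teq\nf t$ for the reduced-form steps) shows $\entnf$ is sound, so $\Sc\ent[\sth] x\teq\vec s$ and $\Sc\ent[\sth] x\teq\vec{s'}$, and hence $\Sc\ent[\sth]\vec s\teq\vec{s'}$.

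I would then compare the two vectors of atomic representatives and split on the first point of disagreement. If one vector is a proper prefix of the other, the extra suffix is nonempty; but the length constraints attached by \rn{L-Intro} to the concatenation terms arising in the derivations, propagated into $\Ac$ by \rn{S-Prop}, force the two total lengths to coincide in $\Ac$, so the suffix has total length $0$ there—contradicting the fact that every atomic representative has strictly positive length (its \rn{L-Valid} instance cannot be redundant on the $x\teq\seqempty$ branch, since atomicity gives $\Sc\notent x\teq\seqempty$, so $\ell_x>0$ is forced into $\Ac$), while $\Ac$ is consistent by \rn{A-Conf}. Otherwise there is a first index $k$ with $s_k\neq s'_k$ and a common prefix $\vec w$; regrouping the vectors, the two derivations from $x$ witness exactly the premises of \rn{C-Split} with split elements $s_k$ and $s'_k$. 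Saturation forces this application to be redundant, so one of its three conclusions is already present, and I rule out all three: the equal branch needs $s_k\teq s'_k\in\Sc$, whence $s_k\sceq{\Sc}s'_k$ and thus $s_k=s'_k$ (distinct atomic representatives lie in distinct classes and are fixed by $\cf$); each strict branch needs a constraint such as $s_k\teq\seqconii{s'_k}{k}\in\Sc$, exhibiting $s_k$ as equal to a variable concatenation term that is non-singular—since $s'_k$ is atomic and $k$ is not entailed empty (the strict length inequality forces $\ell_k>0$)—contradicting the atomicity of $s_k$. Every branch is therefore non-redundant, contradicting saturation with respect to \rn{C-Split}.

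The main obstacle is this \rn{C-Split} branch analysis: the crux is translating the syntactic redundancy conditions of the rule into the semantic notions of atomicity and singularity from \Cref{def:atomic}, and especially arguing in the strict-length branches that the split-off piece $k$ cannot be entailed empty, so that the implied concatenation is genuinely non-singular. The reduction to a common starting variable is the second delicate point, since $\entnf$ is a syntactic recursion whose base case (clause~\ref{it:entnf-con}) is sensitive to which variable carries a concatenation constraint; it must be justified through the entailment-based expansion of clause~\ref{it:entnf-rec}, with separate attention to atomic members of $e$ and to singular intermediate concatenations. The existence argument and the arithmetic length bookkeeping are comparatively routine given saturation.
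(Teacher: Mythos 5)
Your proposal has the same core as the paper's proof. Existence: extend a derivation maximally (termination as in \Cref{lem:entnf-term}) and observe that the elements of a terminal vector must be atomic, else \Cref{it:entnf-rec} of \Cref{def:entnf} would still apply. Uniqueness: locate the first disagreement between two normal forms and invoke saturation w.r.t.\ \rn{C-Split}, ruling out the equal branch because distinct atomic representatives lie in distinct classes, and the two strict branches because the split-off variable $k$ cannot be entailed empty (length bookkeeping through \rn{L-Intro}, \rn{S-Prop}, and consistency from \rn{A-Conf}), so the exhibited concatenation would be non-singular, contradicting atomicity. This branch analysis coincides with the paper's almost verbatim. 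You are in fact more careful than the paper on two points: the paper's proof silently assumes that two distinct normal forms have a common prefix followed by a differing pair, skipping the case where one is a proper prefix of the other (your length argument disposes of it correctly), and the paper fixes a single arbitrary $x \in e$, never explicitly showing that derivations started from \emph{different} members of $e$ agree, which the statement as written requires; your reduction step targets exactly this.

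The one weak point is the justification of that reduction. Your stated reason---that \Cref{it:entnf-rec} of \Cref{def:entnf} does not refer to the starting variable, hence any concatenation of atomic representatives entailed equal to the class "can be reproduced by a derivation from $x$"---fails as a direct replay, because the base case \Cref{it:entnf-con} \emph{does} depend on which variable carries the concatenation constraint. Concretely, with $\Sc \supseteq \{x \teq y,\ y \teq \seqconii{u}{v},\ v \teq \seqempty\}$ one derives $\Sc \entnf y \teq \seqconii{u}{v}$ by \Cref{it:entnf-con} and then $\Sc \entnf y \teq u$, yet no derivation from $x$ can reach $u$: $\Sc \notent x \teq u$ in the empty theory, and $\seqconii{u}{v}$ is singular, so \Cref{it:entnf-rec} cannot introduce it. (In a configuration saturated w.r.t.\ \rn{C-Split} the classes of $x$ and $u$ would get merged, but that rescue comes from saturation, not from the replay argument you give.) The repair is to reorder the argument: first establish same-variable uniqueness exactly as you do with \rn{C-Split}; then, for $x, y \in e$ with $\Sc \entnf^\ast y \teq \vec{s'}$, take a \emph{maximal} derivation from $y$ that starts with \Cref{it:entnf-ref}---it exists by termination, its terminal elements are atomic, so by same-variable uniqueness it also yields $\vec{s'}$---and note that all of its steps are \Cref{it:entnf-rec} steps driven by entailments $\Sc \ent \cdot \teq \cdot$, which hold equally with $x$ in place of $y$ since $x \sceq{\Sc} y$; hence it replays verbatim from $x$, giving $\Sc \entnf^\ast x \teq \vec{s'}$, and same-variable uniqueness at $x$ finishes the proof. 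With that adjustment your proposal is complete.
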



%
We now turn to the description of the rules in \Cref{fig:core-rules},
which form the core of the calculus.
For greater clarity, some of the conclusions of the rules include terms before they are flattened.
First, either subsolver can report that the current set of constraints is
unsatisfiable by using the rules \rn{A-Conf} or \rn{S-Conf}.
For the former, the entailment $\ent[\lian]$ 
(which abbreviates $\ent[\lth]$)
can be checked by a standard
procedure for linear integer arithmetic, and the latter corresponds to a situation where congruence closure detects a conflict between an equality and a disequality.
The rules \rn{A-Prop}, \rn{S-Prop}, and \rn{S-A} correspond to a form of
Nelson-Oppen-style theory combination between the two sub-solvers.
The first two communicate equalities between the sub-solvers, while the third
guesses arrangements for shared variables of sort $\sint$.
%
%
\rn{L-Intro} ensures that the length term $\seqlen{s}$ for each sequence term $s$ is equal to its reduced form $\nf{(\seqlen{s})}$.
\rn{L-Valid} restricts sequence lengths to be non-negative, splitting on whether each sequence is empty or has a length greater than $0$.
%
%
The $\sequnit$ operator is injective, which is captured by \rn{U-Eq}.
\rn{C-Eq} concludes that two sequence terms are equal if they have the same normal form.
If two sequence variables have different normal forms, then \rn{C-Split} takes the first differing components $y$ and $y'$ from the two normal forms and splits on their length relationship.
Note that \rn{C-Split} is the source for non-termination of the calculus (see, e.g., \cite{DBLP:conf/cav/LiangRTBD14, DBLP:conf/cav/ReynoldsWBBLT17}).
Finally, \rn{Deq-Ext} handles disequalities between sequences $x$ and $y$ by either asserting that their lengths are different or by choosing an index $i$ at which they differ.

\Cref{fig:reduction-rules} includes a set of reduction rules
for handling operators that are not directly handled by the core rules.
These reduction rules capture the semantics of these operators by reduction
to concatenation.
\rn{R-Extract} 
splits into two cases: Either the extraction uses an out-of-bounds
index or a non-positive length, in which case the result is the empty sequence,
or the original sequence can be described as a concatenation that includes
the extracted sub-sequence.
\rn{R-Nth} creates an equation between $y$ and a concatenation term
with $\sequnit(x)$ as one of its components, as long as $i$ is not
out of bounds.
\rn{R-Update} considers two cases. If $i$ is out of bounds, then the update term is equal to $y$.  Otherwise, $y$ is equal to a concatenation, with the middle component ($k'$) representing the part of $y$ that is updated. In the $\sequpdate$ term,  $k'$ is replaced by $\sequnit(z)$.

\begin{example}
  Consider a configuration $\conf{\Sc,\Ac}$,
  where $\Sc$ contains the formulas 
  $x \teq \seqconii{y}{z}$, 
  $z  \teq \seqconiii{v}{x}{w}$,
  and 
  $v \teq \sequnit(u)$,
  and $\Ac$ is empty.
  Hence, $\Sc\ent \seqlen{x}\teq\seqlen{\seqconii{y}{z}}$.
  By $\rn{L-Intro}$, we have
  $\Sc\ent\seqlen{\seqconii{y}{z}}\teq\seqlen{y}+\seqlen{z}$.
  Together with \Cref{assumption}, we have
  $\Sc\ent\ell_x\teq\ell_y+\ell_z$, and then with $\rn{S-Prop}$, we have  
$\ell_x\teq\ell_y+\ell_z\in\Ac$.
  Similarly,  we can derive
  $\ell_z\teq \ell_v+\ell_x+\ell_w,\ell_v\teq 1\in\Sc$, and so
  $(\ast)~\Ac\ent[\lian]\ell_z\teq 1+\ell_y+\ell_z+\ell_w$.
  Notice that for any variable $k$ of sort $\sseq$,
  we can apply $\rn{L-Valid}$, $\rn{L-Intro}$,
  and $\rn{S-Prop}$ to add to $\Ac$ either $\ell_k>0$ or
  $\ell_k=0$.
  Applying this to $y,z,w$,  we have that $\Ac\ent[\lian]\bot$ in each branch thanks
  to $(\ast)$, and  so \rn{A-Conf} applies and we get \unsat.
\end{example}

\subsection{Extended Calculus}
\begin{definition}
The calculus $\extcal$ is comprised of the derivation rules in
\Cref{fig:core-rules,fig:ext-rules}, with the addition of rule
\rn{R-Extract} from \Cref{fig:reduction-rules}.
\end{definition}

Our extended calculus combines array reasoning,
based on \cite{DBLP:conf/frocos/ChristH15}
and expressed by the rules in \Cref{fig:ext-rules},
with the core rules of \Cref{fig:core-rules}
and the \rn{R-Extract} rule.
Unlike in $\basiccal$, those rules do not reduce $\seqnth$ and $\sequpdate$.
Instead, they reason about those operators directly
and handle their combination with concatenation.
\rn{Nth-Concat} identifies the $i$th element of sequence $y$ with
the corresponding element selected from its normal form (see \Cref{lem:nf-unique}).
\rn{Update-Concat} operates similarly, applying $\sequpdate$ to
all the components. 
\rn{Update-Concat-Inv} operates similarly on the updated sequence
rather than on the original sequence.
\rn{Nth-Unit} captures the semantics of $\seqnth$ when applied to a
$\sequnit$ term. 
\rn{Update-Unit} is similar and distinguishes an update on an
out-of-bounds index (different from $0$) from an update within the bound.
\rn{Nth-Intro} is meant to ensure that \rn{Nth-Update} (explained below) and 
\rn{Nth-Unit} (explained above) are applicable whenever an $\sequpdate$ term exists
in the constraints.
\rn{Nth-Update} captures the read-over-write axioms of arrays,
adapted to consider their lengths
(see, e.g., \cite{DBLP:conf/frocos/ChristH15}).
It distinguishes three cases:
In the first, the update index is out of bounds.
In the second, it is not out of bounds, and the corresponding
$\seqnth$ term accesses the same index that was updated.
In the third case, the index used in the $\seqnth$ term is different
from the updated index.
\rn{Update-Bound} considers two cases:
either the update changes the sequence, or
the sequence remains the same.
Finally, \rn{Nth-Split} introduces a case split on the equality between
two sequence variables $x$ and $x'$ whenever they appear as arguments to $\seqnth$ with
equivalent second arguments. This is needed to ensure that we detect all cases where the arguments of two $\seqnth$ terms must be equal.

\begin{figure}[!ht]
\centering
\scalebox{.85}{
\begin{tabular}{cc}


\rn{Nth-Concat}
\(
\inferrule{
  x\teq\seqnth(y,i) \in \Sc
  \\
  \Sc \entnf^\ast y \teq \seqconiii{w_1}{\cdots}{w_n}
}{
  \Ac := \Ac, i < 0 \vee i \geq \ell_{y} 
  \ror \\\\
  \Ac := \Ac, 0 \leq i < \ell_{w_1} \\ \Sc := \Sc, x \teq \seqnth(w_1, i)
  \ror \ldots \ror \\\\
  \Ac := \Ac, \sum_{j=1}^{n-1}\ell_{w_j} \leq i < \sum_{j=1}^{n}\ell_{w_j} \\ \Sc := \Sc, x \teq \seqnth(w_n, i - \sum_{j=1}^{n-1}\ell_{w_j})
  \\\\
}
\)
\\[4ex]
\rn{Update-Concat}
\(
\inferrule{
  x\teq \sequpdate(y,i,v) \in \Sc
  \\
  \Sc \entnf^\ast y \teq \seqconiii{w_1}{\cdots}{w_n}
}{
  \Sc := \Sc, x \teq \seqconiii{z_1}{\cdots}{z_n},\\
  z_1\teq \sequpdate(w_1,i,v),\ldots,
  z_n\teq\sequpdate(w_n,i-\sum_{j=1}^{n-1}\ell_{w_j},v)
  \\\\
}
\)
\\[4ex]
\rn{Update-Concat-Inv}
\(
\inferrule{
  x\teq\sequpdate(y,i,v)\in\Sc
  \\
  \Sc \entnf^\ast x \teq \seqconiii{w_1}{\cdots}{w_n}
}{
  \Sc := \Sc, y \teq \seqconiii{z_1}{\cdots}{z_n}, \\\\
  w_1 \teq \sequpdate(z_1, i, v),\ldots, w_n \teq\sequpdate(z_n, i - \sum_{j=1}^{n-1} \ell_{w_j}, v)
  \\\\
}
\)
\\[4ex]
\rn{Nth-Unit}
\(
\inferrule{
  x\teq\seqnth(y, i) \in \Sc 
  \\
  \Sc \ent y\teq\sequnit(u)
}{
  \Ac:=\Ac,i<0 \vee i>0 
  \ror 
  \Ac:=\Ac,i \teq 0 \\
  \Sc:=\Sc,x \teq u
  \\\\
}
\)
\\[4ex]
\rn{Update-Unit}
\(
\inferrule{
  x\teq\sequpdate(y, i, v) \in \Sc 
  \\
  \Sc \ent y\teq\sequnit(u)
}{
  \Ac:=\Ac,i<0 \vee i>0 \\
  \Sc:=\Sc,x \teq \sequnit(u)
  \ror \\\\
  \Ac:=\Ac,i \teq 0  \\
  \Sc:=\Sc,x \teq \sequnit(v)
  \\\\
}
\)
\\[4ex]
\rn{Nth-Intro}
\(
\inferrule{
  s'\teq\sequpdate(s, i, t) \in \Sc
}{
  \Sc := \Sc, e \teq \seqnth(s, i),e'\teq\seqnth(s',i)
  \\\\
}
\)
\\[4ex]
 \rn{Nth-Update}
 \(
 \inferrule{
   \seqnth(x, j) \in \ter{\Sc} \\
   y\teq\sequpdate(z,i,v)\in\Sc \\ \Sc \ent x \teq y \mathrm{\ or\ } \Sc \ent x \teq z
 }{
   \Ac:=\Ac, j<0 \vee j\geq\ell_{x}
   \ror \\\\
   \Ac:=\Ac,i \teq j,0\leq j < \ell_{x} 
   \\
   \Sc:=\Sc,\seqnth(y,j)\teq v
   \ror \\\\
   \Ac:=\Ac,i\tneq j, 0\leq j < \ell_{x} 
   \\
   \Sc:=\Sc,\seqnth(y,j) \teq \seqnth(z,j)
   \\\\
 }
 \)
\\[4ex]
 \rn{Update-Bound}
 \(
 \inferrule{
   x\teq\sequpdate(y,i,v) \in \Sc
 }{
    \Ac:=\Ac, 0 \leq i < \ell_{y} \\
    \Sc:=\Sc, \seqnth(y,i)\tneq v
    \ror
    \Sc := \Sc, x \teq y
 }
 \)
 \\[4ex]
 \rn{Nth-Split}
 \(
 \inferrule{
   \seqnth(x,i),\seqnth(x',i') \in \ter{\Sc}
   \\
   i\teq i'\in\Ac
 }{
    \Sc:=\Sc, x\teq x'
    \ror
    \Sc := \Sc, x \tneq x'
 }
 \)
\end{tabular}}
\smallcap
\caption{Extended derivation rules. 
The rules use $z_1,\ldots,z_n$ to denote fresh sequence variables and
$e,e'$ to denote fresh element variables.
\label{fig:ext-rules}}
\vspace*{-1.8em}
\end{figure}

\subsection{Correctness}
\label{sec:correctness}
In this section we prove the following theorem:

\begin{restatable}{theorem}{correctness}
\label{thm:correctness}
Let $X\in\{\basiccal,\extcal\}$ and $\conf{\Sc_0,\Ac_0}$ 
be a configuration, and assume without loss of generality that $\Ac_0$ contains only arithmetic constraints that are not sequence constraints.
Let $T$ be a derivation tree obtained by applying the rules of $X$ with $\conf{\Sc_0,\Ac_0}$
as the initial configuration.
\begin{enumerate}
    \item\label{item:soundness} If $T$ is closed, then $\Sc_0\cup\Ac_0$ is $\sth$-unsatisfiable.
    \item\label{item:completeness} If $T$ contains a saturated configuration $\conf{\Sc,\Ac}$ w.r.t. $X$, then $\conf{\Sc,\Ac}$ is $\sth$-satisfiable, and so is $\conf{\Sc_0,\Ac_0}$.
\end{enumerate}
\end{restatable}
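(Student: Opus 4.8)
The plan is to reduce both parts to a single structural property of the derivation rules, namely that each rule preserves $\sth$-satisfiability locally, and then to argue by induction on the derivation tree. Concretely, for every rule I would establish two facts: (a) for a conflict rule (\rn{A-Conf}, \rn{S-Conf}), whenever its premises hold the premise configuration $\conf{\Sc,\Ac}$ is already $\sth$-unsatisfiable; and (b) for every other rule, the premise configuration is $\sth$-satisfiable if and only if at least one of its conclusion configurations is. The ``only if'' direction of (b) is the downward-preservation property used for soundness, while the ``if'' direction — which holds because each conclusion only adds constraints and fresh variables, so any model of a child restricts to a model of the parent — is what lets satisfiability propagate upward toward the root.

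For part~\ref{item:soundness} I would check (a) and the forward half of (b) rule by rule. The conflict rules are immediate from the definitions of $\ent[\lian]$ and congruence closure. For the propagation and combination rules (\rn{A-Prop}, \rn{S-Prop}, \rn{S-A}, \rn{L-Intro}, \rn{L-Valid}, \rn{U-Eq}, \rn{C-Eq}) the added constraints are $\sth$-entailed by the premises, using $\ent[\sth]\,t\teq\nf{t}$, injectivity of $\sequnit$, and the normal-form machinery of \Cref{def:entnf}. For the case-splitting rules (\rn{C-Split}, \rn{Deq-Ext}) and the reduction rules (\rn{R-Extract}, \rn{R-Nth}, \rn{R-Update}) I would verify that the disjunction of the conclusion branches is valid in $\sth$; here the key is that the branches exhaust the relevant arithmetic trichotomy (length comparisons, in-bounds versus out-of-bounds), and that each branch faithfully encodes the operator semantics fixed in \Cref{sec:theory}, including the out-of-bounds defaults. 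For $\extcal$ the same is done for the array-style rules of \Cref{fig:ext-rules}, where the read-over-write reasoning of \rn{Nth-Update} and the concatenation lifting of \rn{Nth-Concat} and \rn{Update-Concat} must be matched against the definitions of $\seqnth$ and $\sequpdate$. With all rules shown satisfiability-preserving, an induction on $T$ finishes: if the root were $\sth$-satisfiable, some leaf would be, contradicting that a closed tree has only $\unsat$ leaves.

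For part~\ref{item:completeness} the real content is a model construction from a saturated configuration $\conf{\Sc,\Ac}$. Since \rn{A-Conf} is not applicable, $\Ac$ has an arithmetic model $\arithmodel$, and I would take $\arithmodel(\ell_x)$ as the intended length of each sequence variable. I would then choose $\M(\selem)$ to be an infinite set (the theory leaves $\selem$ free), fix the element-level equivalence classes induced by $\sceq{\Sc}$ on $\selem$-terms, and build sequence values bottom-up: by \Cref{lem:nf-unique} each $\sseq$-equivalence class has a unique normal form that is a concatenation of atomic representatives (each class containing a variable by \Cref{lem:equiv_class_nonempty}), so it suffices to assign a concrete sequence of length $\arithmodel(\ell_x)$ to every atomic representative $x$, padding the positions not already pinned down by element constraints with fresh, pairwise-distinct elements, and then to interpret every other sequence variable as the concatenation dictated by its normal form. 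Finally I would verify that this interpretation is a genuine $\sth$-model of $\Sc\cup\Ac$: length equalities hold by construction and saturation under \rn{L-Intro} and \rn{S-Prop}; sequence equalities hold because equal normal forms yield equal sequences; disequalities $x\tneq y$ hold because \rn{Deq-Ext} forces either a length difference or a witnessing index of disagreement; and the operator constraints are satisfied because \rn{R-Nth}, \rn{R-Update}, \rn{R-Extract} (for $\basiccal$) or the rules of \Cref{fig:ext-rules} (for $\extcal$) have been saturated. Satisfiability of $\conf{\Sc_0,\Ac_0}$ then follows from the upward half of (b) applied along the path from the saturated leaf to the root.

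I expect the model construction to be the main obstacle, and within it the assignment of element values to atomic representatives under the $\extcal$ rules. The difficulty is coordinating three kinds of information simultaneously: the arithmetic lengths from $\arithmodel$, the concatenation skeleton from the normal forms, and the element (dis)equalities forced by $\seqnth$ and $\sequpdate$ constraints. I would need to argue that saturation guarantees global consistency — in particular that \rn{Nth-Concat}, \rn{Nth-Unit}, \rn{Nth-Update}, and \rn{Nth-Split} have propagated enough element constraints that padding with fresh elements never violates a forced equality, and that the read-over-write cases of \rn{Nth-Update} together with \rn{Update-Bound} leave no contradictory requirement on any position of a $\sequpdate$ result. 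Making the bottom-up assignment well-founded (no cyclic length dependencies) relies on the acyclicity already enforced by the arithmetic length reasoning, which I would invoke rather than reprove.
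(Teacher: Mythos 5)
Your overall architecture coincides with the paper's: part~1 is discharged by a local soundness check of each rule (the paper likewise treats \rn{C-Split} as the only non-routine case, via a lemma showing that $\Sc\entnf^{\ast}x\teq s$ implies $\Sc\ent[\sth]x\teq s$), and part~2 is a model construction from a saturated configuration built from an arithmetic model for $\Ac$, element equivalence classes, lengths $\ell_x$, and the unique normal forms of \Cref{lem:nf-unique}, with satisfiability propagating back to the root because children only add constraints.

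There is, however, a genuine gap in your model construction for $\extcal$, precisely at the point you flag as the main obstacle. In $\extcal$ the $\sequpdate$ constraints are \emph{not} reduced away, so a constraint $x\teq\sequpdate(y,i,v)\in\Sc$ with in-bounds $\M(i)$ semantically forces $\M(x)$ and $\M(y)$ to agree at \emph{every} position $j\neq\M(i)$, including positions for which no $\seqnth$ term occurs in $\ter{\Sc}$. Saturation does not propagate element constraints for such positions: \rn{Nth-Intro} introduces $\seqnth$ terms only at the updated index, and \rn{Nth-Update} fires only for $\seqnth$ terms already present. Hence, if $x$ and $y$ are distinct atomic representatives, your scheme of padding unpinned positions with ``fresh, pairwise-distinct elements'' makes $\M(x)$ and $\M(y)$ differ at those positions and falsifies the update constraint; the hope that the $\seqnth$-related rules have ``propagated enough element constraints'' is exactly what fails. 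The paper's construction avoids this by building a weak equivalence graph in the style of~\cite{DBLP:conf/frocos/ChristH15}: atomic equivalence classes are linked by edges labeled with the in-bounds updated indices, $e_1\sim_i e_2$ holds when some path avoids label $i$, and for each index $i$ the unpinned $i$th positions of \emph{all} classes in one $\sim_i$-class receive the \emph{same} fresh value---one fresh value per $\sim_i$-class, not per variable. (For $\basiccal$ your padding is fine, since \rn{R-Update} reduces updates to concatenations sharing the variables $k,k''$, so agreement is structural; this is why the paper calls that case ``similar and simpler.'') A second, smaller omission: you never fix the interpretation of $\seqnth$ at out-of-bounds indices, which the theory leaves free but which $\Sc$ may still constrain through terms $\seqnth(s,y)$ whose index evaluates out of bounds; the paper must (and does) define these values consistently with $\sceq{\Sc}$ rather than arbitrarily, using \rn{Nth-Split} and \rn{Deq-Ext} to show well-definedness.
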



\noindent The theorem states that the calculi are correct in the following sense:
if a closed derivation tree is obtained for the constraints $\Sc_0 \cup \Ac_0$ 
then those constraints are unsatisfiable in $\sth$;
if a tree with a saturated leaf is obtained, then they are satisfiable.
It is possible, however, that neither kind of tree can be derived 
by the calculi, making them neither refutation-complete nor terminating.
This is not surprising since, as mentioned in the introduction, 
the decidability of even weaker theories is still unknown.

Proving the first claim in Theorem~\ref{thm:correctness} reduces 
to a local soundness argument for each of the rules. 
For the second claim,
we sketch below how to construct a satisfying model $\M$ 
from a saturated configuration for the case of $\extcal$. 
The case for $\basiccal$ is similar and simpler.

\paragraph{Model construction steps}
The full model construction and its correctness are described in
\begin{report}%
the appendix
\end{report}%
\begin{ijcar}%
a longer version of this paper~\cite{ShengEtAl-arXiv-22}
\end{ijcar}%
together with a proof of the theorem above.
Here is a summary of the steps needed for the model construction.

\begin{enumerate}
    \item\label{step:domains} Sorts: $\M(\selem)$ is interpreted as some
arbitrary countably infinite set.
$\M(\sseq)$ and $\M(\sint)$ are then determined by
the theory. 
\item\label{step:symbol} $\ssig$-symbols: $\sth$ enforces the interpretation of almost all $\ssig$-symbols, except
for
$\seqnth$ when the second input is out of bounds. 
We cover this case below.
\item\label{step:int-elem} Integer variables: based on the saturation of $\rn{A-Conf}$,
we know there is some $\lth$-model satisfying $\Ac$.
We set $\M$ to interpret integer variables according to this model.
\item\label{step:elem_var} Element variables: these are partitioned into their $\sceq{\Sc}$ equivalence classes.
Each class is assigned a distinct element
from $\M(\selem)$, which is possible since it is infinite.
\item Atomic sequence variables: these are assigned interpretations in several sub-steps:
\begin{enumerate}
    \item\label{step:length} length: we first use the assignments to variables $\ell_x$ to set the length of $\M(x)$, without assigning its actual value. 
    \item\label{step:unit} unit variables: for variables $x$ with $x\sceq{\Sc}\sequnit(z)$, we set $\M(x)$ to be $[\M(z)]$.
    \item\label{step:atomic-non-unit} non-unit variables: All other sequence variables are assigned values according to a {\em weak equivalence graph} we construct in a manner similar to \cite{DBLP:conf/frocos/ChristH15}. 
This construction takes into account constraints that involve $\sequpdate$ and $\seqnth$.
\end{enumerate}
\item\label{step:nonatomic_seq_var} Non-atomic sequence variables: these are first transformed to their unique normal form (see \Cref{lem:nf-unique}), consisting of concatenations of atomic variables. Then, the values assigned to these variables are concatenated.
\item\label{step:seqnth_terms} $\seqnth$-terms: for out-of-bounds indices in $\seqnth$-terms, we rely on $\sceq{\Sc}$ to make sure that the assignment is consistent. 
\end{enumerate}

\noindent
We conclude this section with an example of
the construction of $\M$.

\begin{example}
\label{ex:extcalmc}
Consider a signature in which $\selem$ is $\sint$, and 
a saturated configuration 
$\conf{\Sc^{\ast}, \Ac^{\ast}}$ 
w.r.t.
$\extcal$
that includes the following formulas:
    $y\teq \seqconii{y_1}{y_2}$,
    $x\teq \seqconii{x_1}{x_2}$,
    $y_2\teq x_2$,
    $y_1\teq\sequpdate(x_1, i, a)$,
    $\seqlen{y_1}=\seqlen{x_1}$,
    $\seqlen{y_2}=\seqlen{x_2}$,
    $\seqnth(y, i) \teq a$,
    $\seqnth(y_1, i)\teq a$.
    %
    Following the above construction, a satisfying interpretation $\M$ can be built as follows:
        \begin{description}
        \item[Step~\ref{step:domains}] Set both $\M(\sint)$ and $\M(\selem)$ to be the set of integer numbers. $\M(\sseq)$ is fixed by the theory.
        \item[Step~\ref{step:int-elem}, Step~\ref{step:elem_var}] First, find an arithmetic model,
        $\M(\ell_x)=\M(\ell_y)=4, \M(\ell_{y_1})=\M(\ell_{x_1})= 2, \M(\ell_{y_2})=\M(\ell_{x_2})=2, \M(i) = 0$.
        Further, set
        $\M(a)=0$.
        \item[Step~\ref{step:length}] Start assigning values to sequences. First, set the lengths of
        $\M(x)$ and $\M(y)$ to be $4$, and the lengths of
        $\M(x_1),\M(x_2),\M(y_1),\M(y_2)$ to be $2$.
        \item[Step~\ref{step:unit}] is skipped as there are no $\sequnit$ terms.
        \item[Step~\ref{step:atomic-non-unit}]
Set the $0$th element of $\M(y_1)$ to $0$
to satisfy $\seqnth(y_1,i)=a$ ($y_1$ is atomic, $y$ is not).
Assign fresh values to the remaining indices of atomic variables.
        The result can be, e.g.,
        $\M(y_1)=[0, 2], \M(x_1)=[1, 2], \M(y_2)=\M(x_2)=[3, 4]$. 
        \item[Step~\ref{step:nonatomic_seq_var}] Assign non-atomic sequence
          variables based on equivalent concatenations:
        $\M(y)=[0, 2, 3, 4], \M(x)=[1, 2, 3, 4]$.
        \item[Step~\ref{step:seqnth_terms}] No integer variable in the formula was assigned an out-of-bound value, and so
        the interpretation of $\seqnth$ on out-of-bounds cases
        is set arbitrarily. 
    \end{description}
\end{example}

\section{Implementation}
\label{sec:implementation}

We implemented our procedure for sequences
as an extension of a previous theory solver for strings~\cite{DBLP:conf/cav/LiangRTBD14, DBLP:conf/cav/ReynoldsWBBLT17}.
This solver is integrated in \cvc, 
and has been generalized to reason about both strings and sequences.
In this section, we describe how the rules of the calculus are implemented and the overall strategy for when they are applied.

Like most SMT solvers, \cvc is based on the \cdclt architecture~\cite{NieOT-JACM-06}
which combines several subsolvers, each specialized on a specific theory,
with a solver for propositional satisfiability (SAT).
Following that architecture, \cvc maintains an evolving set of formulas $\Fc$.
When $\Fc$ starts with quantifier-free formulas over the theory $\sth$,
the case targeted by this work,
the SAT solver searches for a satisfying assignment for $\Fc$, 
represented as the set $\Mc$ of literals it satisfies. 
If none exists, the problem is unsatisfiable at the propositional level
and hence $\sth$-unsatisfiable.
Otherwise, $\Mc$ is partitioned into the arithmetic constraints $\Ac$ and 
the sequence constraints $\Sc$ and checked for $\sth$-satisfiability
using the rules of the $\extcal$ calculus.
Many of those rules, including all those with multiple conclusions, 
are implemented by adding new formulas to $\Fc$ (following the
splitting-on-demand approach~\cite{BarNOT-LPAR-06}).
This causes the SAT solver to try to extend its assignment to those formulas,
which results in the addition of new literals to $\Mc$ 
(and thereby also to $\Ac$ and $\Sc$).

In this setting, the rules of the two calculi are implemented as follows.
The effect of rule \rn{A-Conf} is achieved by invoking \cvc's theory solver 
for linear integer arithmetic.
Rule \rn{S-Conf} is implemented by the congruence closure submodule of 
the theory solver for sequences.
Rules \rn{A-Prop} and \rn{S-Prop} are implemented by the standard mechanism for theory combination.
Note that each of these four rules may be applied \emph{eagerly}, that is, before constructing a complete satisfying assignment $\Mc$ for $\Fc$.

The remaining rules are implemented in the theory solver for 
sequences.
Each time $\Mc$ is checked for satisfiability, \cvc follows a strategy to determine which rule to apply next.
If none of the rules apply and the configuration is different from \unsat, 
then it is saturated, and the solver returns \sat.
The strategy for $\extcal$ prioritizes rules as follows.  
Only the first applicable rule is applied 
(and then control goes back to the SAT solver).
\begin{enumerate}
    \item (Add length constraints) For each sequence term in $\Sc$, apply \rn{L-Intro} or \rn{L-Valid}, if not already done. We apply \rn{L-Intro} for non-variables, and \rn{L-Valid} for variables.
    \item (Mark congruent terms) For each set of $\sequpdate$ (resp. $\seqnth$) terms that are congruent to one another in the current configuration, mark all but one term and ignore the marked terms in the subsequent steps.
    \item (Reduce $\seqextract$) For $\seqextract(y,i,j)$ in $\Sc$, apply \rn{R-Extract} if not already done.
    \item (Construct normal forms) Apply \rn{U-Eq} or \rn{C-Split}. We choose how to apply the latter rule based on constructing normal forms for equivalence classes in a bottom-up fashion, where the equivalence classes of $x$ and $y$ are considered before the equivalence class of $\seqconii{x}{y}$. We do this until we find an equivalence class such that $\Sc \entnf^\ast z \teq u_1$ and $\Sc \entnf^\ast z \teq u_2$ for distinct $u_1, u_2$.
    \item (Normal forms) Apply \rn{C-Eq} if two equivalence classes have the same normal form.
    \item (Extensionality) For each disequality in $\Sc$, apply \rn{Deq-Ext}, if not already done.
    \item\label{item:dist-update-nth} (Distribute $\sequpdate$ and $\seqnth$) For each term $\sequpdate(x,i,t)$ (resp. $\seqnth(x,j)$) such that the normal form of $x$ is a concatenation term, apply \rn{Update-Concat} and \rn{Update-Concat-Inv} (resp. \rn{Nth-Concat}) if not already done. Alternatively, if the normal form of the equivalence class of $x$ is a unit term, apply \rn{Update-Unit} (resp. \rn{Nth-Unit}).
    \item\label{item:array-reasoning} (Array reasoning on atomic sequences) Apply \rn{Nth-Intro} and \rn{Update-Bound} to $\sequpdate$ terms. For each $\sequpdate$ term, find the matching $\seqnth$ terms and apply $\rn{Nth-Update}$. Apply $\rn{Nth-Split}$ to pairs of $\seqnth$ terms with equivalent indices.
    \item (Theory combination) Apply \rn{S-A} for all arithmetic terms occurring in both $\Sc$ and $\Ac$.
\end{enumerate}

\noindent Whenever a rule is applied, the strategy will restart  
from the beginning in the next iteration. 
The strategy is designed to apply  with higher priority steps 
that are easy to compute and are likely to lead to conflicts. 
Some steps are ordered based on dependencies from other steps.
For instance, Steps 5 and 7 use normal forms, which are computed in Step 4.
The strategy for the $\basiccal$ calculus
is the same, except that Steps 7 and 8 are replaced by one 
that applies \rn{R-Update} and \rn{R-Nth} 
to all $\sequpdate$ and $\seqnth$ terms in $\Sc$.

We point out that the \rn{C-Split} rule may cause non-termination of 
the proof strategy described above in the presence of
\emph{cyclic} sequence constraints, for instance, constraints 
where sequence variables appear on both sides of an equality.
The solver uses methods for detecting some of these cycles, 
to restrict when \rn{C-Split} is applied.
In particular, when $\Sc \entnf^\ast x \teq \nf{(\seqconiii{\vec{u}}{s}{\vec{w}})}$,
$\Sc \entnf^\ast x \teq \nf{(\seqconiii{\vec{u}}{t}{\vec{v}})}$,
and $s$ occurs in $\vec{v}$, then \rn{C-Split} is not applied.
Instead, other heuristics are used, and in some cases the solver terminates with a response of ``unknown'' (see e.g.,~\cite{DBLP:conf/cav/LiangRTBD14} for details).
In addition to the version shown here, we also use another variation of the \rn{C-Split} rule 
where the normal forms are matched in reverse 
(starting from the last terms in the concatenations).
The implementation also uses fast entailment tests for length inequalities.
These tests may allow us to conclude which branch of \rn{C-Split}, if any, 
is feasible, without having to branch on cases explicitly.

Although not shown here,
the calculus can also accommodate certain \emph{extended} sequence constraints, that is, constraints using
a signature with additional functions.
For example, our implementation supports sequence containment, replacement, and reverse.
It also supports an extended variant of the $\sequpdate$ operator, in which the third argument is a sequence that
overrides the sequence being updated starting from the index given in the second argument.
Constraints involving these functions are handled by reduction rules, similar to those shown in 
\Cref{fig:reduction-rules}.
The implementation is further optimized by using context-dependent simplifications, which may eagerly infer when certain sequence terms can be simplified to constants based on the current set of assertions~\cite{DBLP:conf/cav/ReynoldsWBBLT17}.


\section{Evaluation}
\label{sec:evaluation}
We evaluate the performance of our approach,
as implemented in \cvc.
The evaluation investigates:
\begin{enumerate*}[label=(\roman*)]
    \item whether the use of sequences is a viable option for reasoning about vectors in programs,
    \item how our approach compares with other sequence solvers, and
    \item what is the performance impact of our array-style extended rules
\end{enumerate*}.
As a baseline, we use Version~4.8.14 of the \zzz SMT solver, which supports a theory of sequences without $\sequpdate$s.
For \cvc, we evaluate implementations of both the basic calculus (denoted \textbf{cvc5}) and the extended array-based calculus (denoted \textbf{cvc5-a}).
The benchmarks, solver configurations, and logs from our runs are available for download.\footnote{\url{http://dx.doi.org/10.5281/zenodo.6146565}}
We ran all experiments on a cluster equipped with Intel Xeon E5-2620 v4 CPUs.
We allocated one physical CPU core and 8GB of RAM for each solver-benchmark pair and used a time limit of 300 seconds.
We use the following two sets of benchmarks:

\newpar{Array Benchmarks (\textsc{Arrays})}
The first set of benchmarks is derived from the \verb|QF_AX| benchmarks in SMT-LIB~\cite{SMTLib2017}.
To generate these benchmarks, we
\begin{enumerate*}[label=(\roman*)]
\item replace declarations of arrays with declarations of sequences of uninterpreted sorts,
\item change the sort of index terms to integers, and
\item replace $\mathsf{store}$ with $\mathsf{update}$ and
 $\mathsf{select}$ with $\mathsf{nth}$
\end{enumerate*}.
The resulting benchmarks are quantifier-free and do not contain concatenations.
Note that the original and the derived benchmarks are not equisatisfiable, 
because sequences take into account out-of-bounds cases that do not occur in arrays.
For the \zzz runs, we add to the benchmarks a definition of $\sequpdate$ in terms of extraction and concatenation.
%

\newpar{Smart Contract Verification (\textsc{Diem})}
The second set of benchmarks consists of verification conditions generated by running the Move Prover~\cite{ZCQ+20} on smart contracts written for the Diem framework.
%
By default, the encoding does not use the sequence update operation, and so \zzz can be used directly.  
However, we also modified the Move Prover encoding to generate benchmarks that do use the update operator, and ran \cvc on them.
In addition to using the sequence theory, the benchmarks make heavy use of quantifiers and the SMT-LIB theory of datatypes.

\begin{figure}
    \centering
    \begin{subfigure}[b]{0.41\textwidth}
    \setlength{\tabcolsep}{4pt}
    \scalebox{0.9}{
    \begin{tabular}{@{}clrrrrrrrrrrrrrrrrrrrrr@{}}
\toprule
 & & \multicolumn{2}{c}{w/ update} & \\
\cmidrule(lr){3-4}
\textbf{Set} &  & \textbf{\cvc} & \textbf{\cvc-a} & \textbf{\ziii} \\
\midrule
\textsc{Arrays} & Slvd & 242& \textbf{390}& 170 \\
(551) & Time & 162& 303& 4329 \\
\midrule
\textsc{Diem} & Slvd & 542& \textbf{547}& 443 \\
(558) & Time & 518& 440& 639 \\
\bottomrule
\end{tabular}

    }
    \bigskip
    \caption{}
    \label{fig:table}
    \end{subfigure}
    \begin{subfigure}[b]{0.28\textwidth}
    \includegraphics[width=\linewidth]{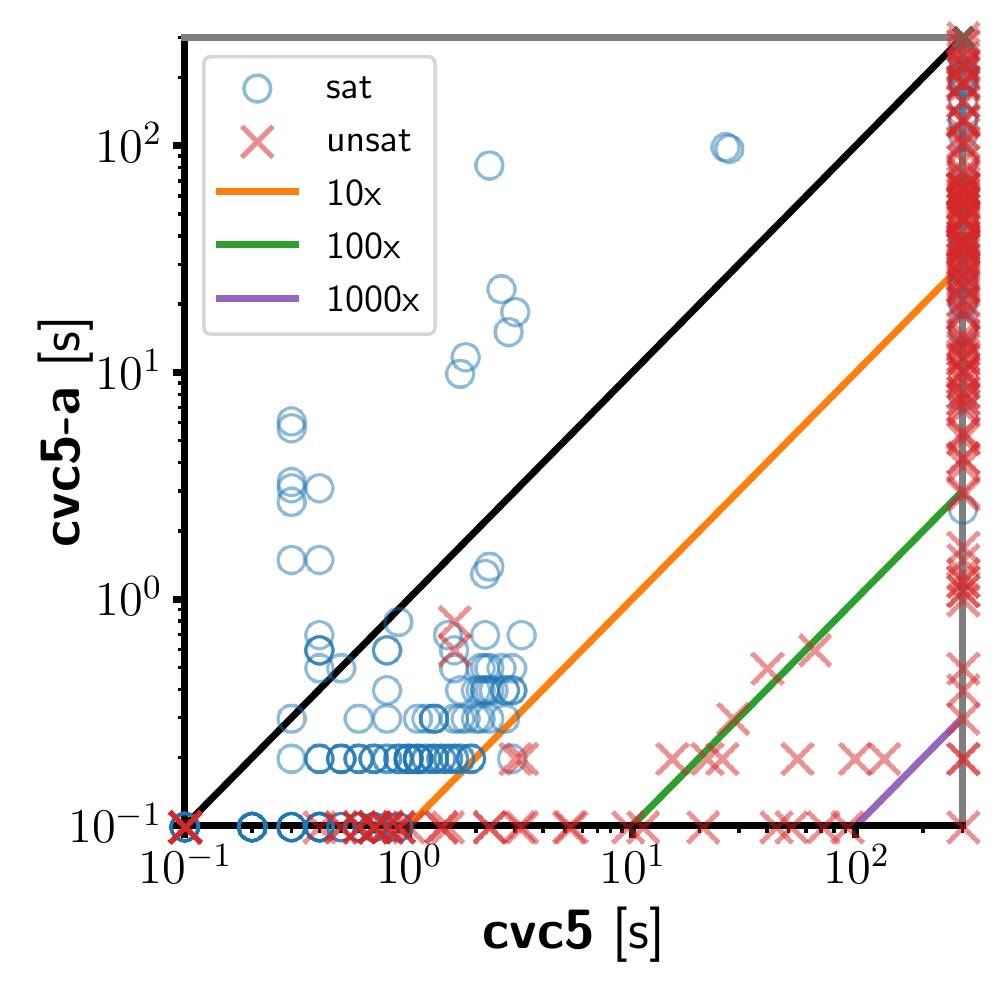}
    \caption{\textsc{Arrays}}
    \label{fig:qf-ax}
    \end{subfigure}
    \begin{subfigure}[b]{0.28\textwidth}
    \includegraphics[width=\linewidth]{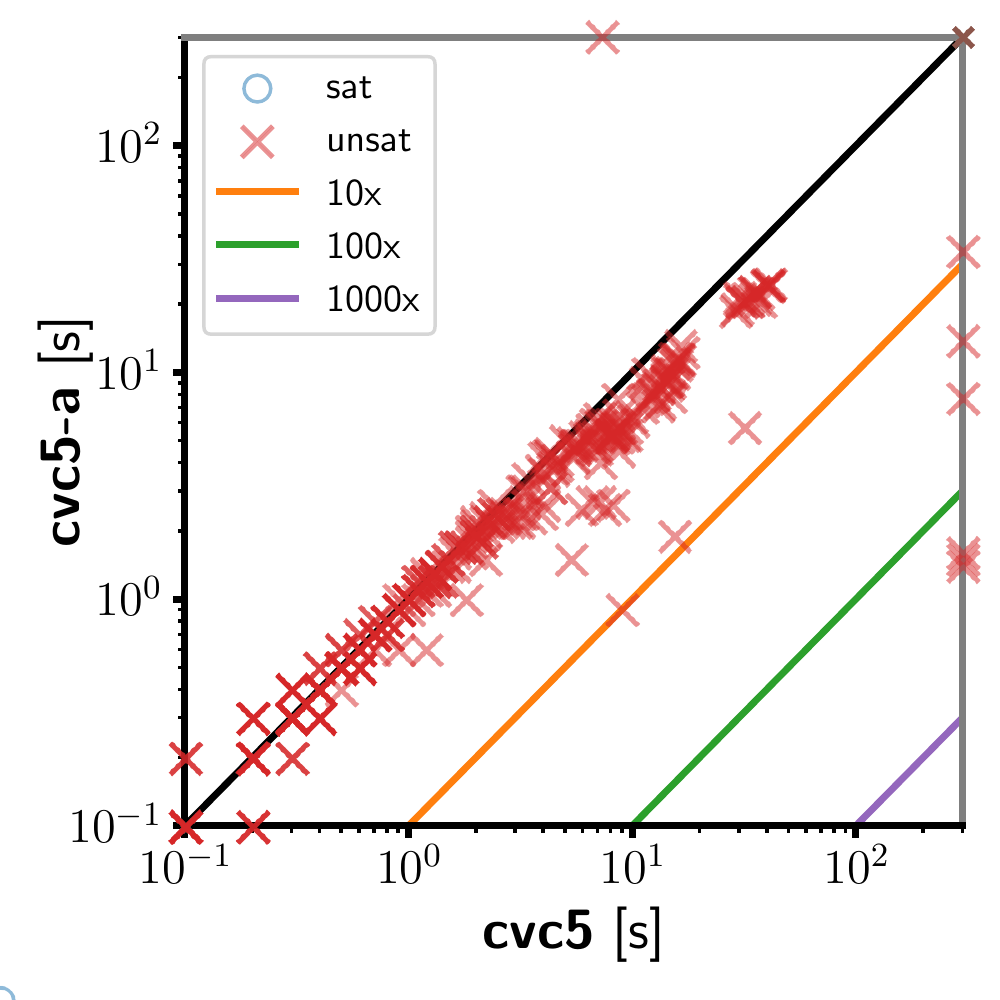}
      \caption{\textsc{Diem}}
    \label{fig:seq}
    \end{subfigure}

    \caption{
      \Cref{fig:table} lists the number of solved benchmarks and total time on commonly solved benchmarks.
      The scatter plots compare the base solver (\textbf{cvc5}) and the extended solver (\textbf{cvc5-a}) on \textsc{Array} (\Cref{fig:qf-ax}) and \textsc{Diem} (\Cref{fig:seq}) benchmarks.
    }
    \label{fig:results}
\end{figure}

\Cref{fig:table} summarizes the results 
in terms of number of
solved benchmarks and total time in seconds on commonly solved benchmarks.
The configuration that solves the largest number of benchmarks is the implementation of the extended calculus (\textbf{cvc5-a}).
This approach also successfully solves most of the \textsc{Diem} benchmarks,
which suggests that sequences are a promising option for encoding
vectors in programs.
The results further show that the sequences solver of \cvc significantly outperforms \zzz on both
the number of solved benchmarks and the solving time on commonly solved benchmarks.
%

\Cref{fig:qf-ax,fig:seq} show scatter plots comparing \textbf{cvc5} and \textbf{cvc5-a} on the two benchmark sets.  We can see a clear trend towards better performance when using the extended solver.  In particular, 
the table shows 
that in addition to solving the most benchmarks, \textbf{cvc5-a} is also fastest on the commonly solved instances from the \textsc{Diem} benchmark set.

For the \textsc{Arrays} set, we can see that some benchmarks are slower with the extended solver.  This is also reflected in the table, where \textbf{cvc5-a} is slower on the commonly solved instances.  This is not too surprising, as the extra machinery of the extended solver can sometimes slow down easy problems.  As problems get harder, however, the benefit of the extended solver becomes clear.  For example, if we drop \zzz and consider just the commonly solved instances between \textbf{cvc5} and \textbf{cvc5-a} (of which there are 242), \textbf{cvc5-a} is about $2.47\times$ faster (426 vs 1053 seconds).  Of course, further improving the performance of \textbf{cvc5-a} is something we plan to explore in future work.

\section{Conclusion}
\label{sec:conclusion}
We introduced calculi for checking satisfiability in the theory of sequences, which can be used to model 
the vector data type.
We described our implementation in \cvc and provided an evaluation, showing that the proposed
 theory is rich enough to naturally express verification conditions without
introducing quantifiers, and that our implementation is efficient. We believe that 
verification tools can benefit by changing their encoding of verification conditions
that involve vectors to use the proposed theory and implementation.

We plan to propose the incorporation of this theory in the SMT-LIB standard and
contribute our benchmarks to SMT-LIB.
As future research, we plan to integrate other approaches for array solving
into our basic solver. 
We also plan to study the politeness~\cite{JBLPAR,RRZ05} and decidability
of various fragments of the theory of sequences.

\bibliographystyle{abbrv}
\bibliography{main.bib}

\begin{thebibliography}{10}

\bibitem{DBLP:journals/fmsd/AlbertiGP17}
F.~Alberti, S.~Ghilardi, and E.~Pagani.
\newblock Cardinality constraints for arrays (decidability results and
  applications).
\newblock {\em Formal Methods Syst. Des.}, 51(3):545--574, 2017.

\bibitem{DBLP:conf/tacas/BarbosaBBKLMMMN22}
H.~Barbosa, C.~W. Barrett, M.~Brain, G.~Kremer, H.~Lachnitt, M.~Mann,
  A.~Mohamed, M.~Mohamed, A.~Niemetz, A.~N{\"{o}}tzli, A.~Ozdemir, M.~Preiner,
  A.~Reynolds, Y.~Sheng, C.~Tinelli, and Y.~Zohar.
\newblock cvc5: {A} versatile and industrial-strength {SMT} solver.
\newblock In {\em {TACAS} {(1)}}, volume 13243 of {\em Lecture Notes in
  Computer Science}, pages 415--442. Springer, 2022.

\bibitem{SMTLib2017}
C.~Barrett, P.~Fontaine, and C.~Tinelli.
\newblock {The SMT-LIB Standard: Version 2.6}.
\newblock Technical report, Department of Computer Science, The University of
  Iowa, 2017.
\newblock Available at {\tt www.SMT-LIB.org}.

\bibitem{BarNOT-LPAR-06}
C.~Barrett, R.~Nieuwenhuis, A.~Oliveras, and C.~Tinelli.
\newblock Splitting on demand in {SAT} modulo theories.
\newblock In M.~Hermann and A.~Voronkov, editors, {\em Proceedings of the 13th
  International Conference on Logic for Programming, Artificial Intelligence
  and Reasoning (LPAR'06), Phnom Penh, Cambodia}, volume 4246 of {\em Lecture
  Notes in Computer Science}, pages 512--526. Springer, 2006.

\bibitem{DBLP:reference/mc/BarrettT18}
C.~W. Barrett and C.~Tinelli.
\newblock Satisfiability modulo theories.
\newblock In {\em Handbook of Model Checking}, pages 305--343. Springer, 2018.

\bibitem{DBLP:conf/fmcad/BerzishGZ17}
M.~Berzish, V.~Ganesh, and Y.~Zheng.
\newblock Z3str3: {A} string solver with theory-aware heuristics.
\newblock In D.~Stewart and G.~Weissenbacher, editors, {\em 2017 Formal Methods
  in Computer Aided Design, {FMCAD} 2017, Vienna, Austria, October 2-6, 2017},
  pages 55--59. {IEEE}, 2017.

\bibitem{programmingz3}
N.~Bj{\o}rner, L.~de~Moura, L.~Nachmanson, and C.~Wintersteiger.
\newblock {Programming Z3}.
\newblock
  \url{https://theory.stanford.edu/~nikolaj/programmingz3.html#sec-sequences-and-strings},
  2018.

\bibitem{bjorner2012smt}
N.~Bj{\o}rner, V.~Ganesh, R.~Michel, and M.~Veanes.
\newblock An {SMT-LIB} format for sequences and regular expressions.
\newblock {\em SMT}, 12:76--86, 2012.

\bibitem{DBLP:conf/tacas/BjornerTV09}
N.~Bj{\o}rner, N.~Tillmann, and A.~Voronkov.
\newblock Path feasibility analysis for string-manipulating programs.
\newblock In S.~Kowalewski and A.~Philippou, editors, {\em Tools and Algorithms
  for the Construction and Analysis of Systems, 15th International Conference,
  {TACAS} 2009, Held as Part of the Joint European Conferences on Theory and
  Practice of Software, {ETAPS} 2009, York, UK, March 22-29, 2009.
  Proceedings}, volume 5505 of {\em Lecture Notes in Computer Science}, pages
  307--321. Springer, 2009.

\bibitem{DBLP:conf/frocos/ChristH15}
J.~Christ and J.~Hoenicke.
\newblock Weakly equivalent arrays.
\newblock In {\em FroCos}, volume 9322 of {\em Lecture Notes in Computer
  Science}, pages 119--134. Springer, 2015.

\bibitem{DBLP:conf/tacas/MouraB08}
L.~M. de~Moura and N.~Bj{\o}rner.
\newblock {Z3:} an efficient {SMT} solver.
\newblock In {\em {TACAS}}, volume 4963 of {\em Lecture Notes in Computer
  Science}, pages 337--340. Springer, 2008.

\bibitem{DBLP:journals/cacm/DershowitzM79}
N.~Dershowitz and Z.~Manna.
\newblock Proving termination with multiset orderings.
\newblock {\em Commun. {ACM}}, 22(8):465--476, 1979.

\bibitem{DBLP:conf/cav/EladRIKS21}
N.~Elad, S.~Rain, N.~Immerman, L.~Kov{\'{a}}cs, and M.~Sagiv.
\newblock Summing up smart transitions.
\newblock In {\em {CAV} {(1)}}, volume 12759 of {\em Lecture Notes in Computer
  Science}, pages 317--340. Springer, 2021.

\bibitem{Enderton2001}
H.~B. Enderton.
\newblock {\em A mathematical introduction to logic}.
\newblock Academic Press, 2 edition, 2001.

\bibitem{DBLP:conf/vstte/FalkeMS13}
S.~Falke, F.~Merz, and C.~Sinz.
\newblock Extending the theory of arrays: memset, memcpy, and beyond.
\newblock In {\em {VSTTE}}, volume 8164 of {\em Lecture Notes in Computer
  Science}, pages 108--128. Springer, 2013.

\bibitem{DBLP:conf/hvc/GaneshMSR12}
V.~Ganesh, M.~Minnes, A.~Solar{-}Lezama, and M.~C. Rinard.
\newblock Word equations with length constraints: What's decidable?
\newblock In {\em Haifa Verification Conference}, volume 7857 of {\em Lecture
  Notes in Computer Science}, pages 209--226. Springer, 2012.

\bibitem{JBLPAR}
D.~Jovanovic and C.~W. Barrett.
\newblock Polite theories revisited.
\newblock In C.~G. Ferm{\"{u}}ller and A.~Voronkov, editors, {\em Logic for
  Programming, Artificial Intelligence, and Reasoning - 17th International
  Conference, LPAR-17, Yogyakarta, Indonesia, October 10-15, 2010.
  Proceedings}, volume 6397 of {\em Lecture Notes in Computer Science}, pages
  402--416. Springer, 2010.

\bibitem{DBLP:conf/cav/LiangRTBD14}
T.~Liang, A.~Reynolds, C.~Tinelli, C.~W. Barrett, and M.~Deters.
\newblock A {DPLL(T)} theory solver for a theory of strings and regular
  expressions.
\newblock In A.~Biere and R.~Bloem, editors, {\em Computer Aided Verification -
  26th International Conference, {CAV} 2014, Held as Part of the Vienna Summer
  of Logic, {VSL} 2014, Vienna, Austria, July 18-22, 2014. Proceedings}, volume
  8559 of {\em Lecture Notes in Computer Science}, pages 646--662. Springer,
  2014.

\bibitem{NO79}
G.~Nelson and D.~C. Oppen.
\newblock Simplification by cooperating decision procedures.
\newblock {\em {ACM} Trans. Program. Lang. Syst.}, 1(2):245--257, 1979.

\bibitem{NieOT-JACM-06}
R.~Nieuwenhuis, A.~Oliveras, and C.~Tinelli.
\newblock {Solving SAT and SAT Modulo Theories: from an abstract
  Davis-Putnam-Logemann-Loveland Procedure to DPLL(T)}.
\newblock {\em Journal of the ACM}, 53(6):937--977, Nov. 2006.

\bibitem{RRZ05}
S.~Ranise, C.~Ringeissen, and C.~G. Zarba.
\newblock Combining data structures with nonstably infinite theories using
  many-sorted logic.
\newblock In B.~Gramlich, editor, {\em Frontiers of Combining Systems, 5th
  International Workshop, FroCoS 2005, Vienna, Austria, September 19-21, 2005,
  Proceedings}, volume 3717 of {\em Lecture Notes in Computer Science}, pages
  48--64. Springer, 2005.
\newblock Extended technical report is available at
  \url{https://hal.inria.fr/inria-00070335/}.

\bibitem{DBLP:conf/fmcad/ReynoldsNBT20}
A.~Reynolds, A.~N{\"{o}}tzli, C.~W. Barrett, and C.~Tinelli.
\newblock Reductions for strings and regular expressions revisited.
\newblock In {\em 2020 Formal Methods in Computer Aided Design, {FMCAD} 2020,
  Haifa, Israel, September 21-24, 2020}, pages 225--235. {IEEE}, 2020.

\bibitem{DBLP:conf/cav/ReynoldsWBBLT17}
A.~Reynolds, M.~Woo, C.~W. Barrett, D.~Brumley, T.~Liang, and C.~Tinelli.
\newblock Scaling up {DPLL(T)} string solvers using context-dependent
  simplification.
\newblock In R.~Majumdar and V.~Kuncak, editors, {\em Computer Aided
  Verification - 29th International Conference, {CAV} 2017, Heidelberg,
  Germany, July 24-28, 2017, Proceedings, Part {II}}, volume 10427 of {\em
  Lecture Notes in Computer Science}, pages 453--474. Springer, 2017.

\bibitem{ZCQ+20}
J.~E. Zhong, K.~Cheang, S.~Qadeer, W.~Grieskamp, S.~Blackshear, J.~Park,
  Y.~Zohar, C.~Barrett, and D.~L. Dill.
\newblock The {Move} prover.
\newblock In S.~K. Lahiri and C.~Wang, editors, {\em Proceedings of the {
  32{nd}} International Conference on Computer Aided Verification (CAV '20)},
  volume 12224 of {\em Lecture Notes in Computer Science}, pages 137--150.
  Springer International Publishing, July 2020.

\end{thebibliography}

\begin{report}
\newpage
\appendix
\section{Appendix}
In this appendix, each section presents a proof
that was omitted from the main text due to lack
of space.
For convenience, the lemma/theorem that is being proved is repeated at the beginning of each section.

Below, we make use of the following helper lemma.

\begin{lemma}
\label{lem:concat-term-len}
Let $\Sc$ be a set of sequence constraints
and $\Ac$ a set of arithmetic constraints.
Suppose $\conf{\Sc,\Ac}$ is saturated w.r.t.
$\rn{S-Prop}$,
$\rn{L-Intro}$
and $\rn{L-Valid}$.
If $\seqconiii{x_1}{\cdots}{x_n}\in\ter{\Sc}$ is a variable concatenation term of size $n$ not singular in $\Sc$, then
\begin{enumerate}
    \item $\Ac\ent[\lian] \Sigma_{k=1}^{n}\ell_{x_k} \ge 2$; and
    \item for each $m\in[1,n]$, $\Ac\ent[\lian] \Sigma_{k=1}^{n}\ell_{x_k} > \ell_{x_m}$.
\end{enumerate}

\end{lemma}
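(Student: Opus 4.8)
The plan is to reduce both claims to pure linear-arithmetic facts about the integer variables $\ell_{x_1}, \ldots, \ell_{x_n}$ that saturation forces into $\Ac$. Concretely, I want to establish from the saturation hypotheses that (a) $\Ac \ent[\lian] \ell_{x_k} \ge 0$ for every $k \in [1,n]$, and (b) there are two \emph{distinct} indices $i, j$ with $\Ac \ent[\lian] \ell_{x_i} > 0$ and $\Ac \ent[\lian] \ell_{x_j} > 0$. Once these are in hand, both parts follow by elementary arithmetic in $\lth$, so the real work is extracting (a) and (b). Note first that since $\seqconiii{x_1}{\cdots}{x_n} \in \ter{\Sc}$ and $\ter{\cdot}$ is closed under subterms, each $x_k$ is a sequence variable in $\ter{\Sc}$, so \rn{L-Valid}, \rn{L-Intro}, and \rn{S-Prop} are all applicable to the relevant subterms.

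For (b), I would unfold the definition of \emph{not singular}: it means the set $\{k : \Sc \not\ent x_k \teq \seqempty\}$ has size at least two, which yields the desired distinct indices $i \neq j$ with $\Sc \not\ent x_i \teq \seqempty$ and $\Sc \not\ent x_j \teq \seqempty$. For each such index I apply the redundancy analysis of \rn{L-Valid}: its two conclusions add either $x_i \teq \seqempty$ to $\Sc$ or $\ell_{x_i} > 0$ to $\Ac$, and saturation means one of these is already present. Since $\Sc \not\ent x_i \teq \seqempty$ we have $x_i \teq \seqempty \notin \Sc$, so the first branch cannot be the redundant one; hence $\ell_{x_i} > 0 \in \Ac$, and likewise for $j$.

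For (a), I again invoke \rn{L-Valid} on each $x_k$: either $\ell_{x_k} > 0 \in \Ac$ (done), or $x_k \teq \seqempty \in \Sc$. The latter case is the step I expect to be the most delicate, since linear arithmetic by itself does not constrain lengths to be nonnegative, so I must place $\ell_{x_k} \ge 0$ into $\Ac$ explicitly. Here I chain the remaining two rules: \rn{L-Intro} applied to $\seqempty$ (which occurs in $\ter{\Sc}$ precisely because $x_k \teq \seqempty \in \Sc$) gives $\seqlen{\seqempty} \teq 0 \in \Sc$ via the rewrite $\seqlen{\seqempty} \to 0$, and together with $\ell_{x_k} \teq \seqlen{x_k} \in \Sc$ (\Cref{assumption}, part 2) and $x_k \teq \seqempty \in \Sc$, congruence closure yields $\Sc \ent \ell_{x_k} \teq 0$. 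As $\ell_{x_k}$ and $0$ are $\lsig$-terms in $\ter{\Sc}$, saturation of \rn{S-Prop} puts $\ell_{x_k} \teq 0$ into $\Ac$, so $\Ac \ent[\lian] \ell_{x_k} \ge 0$.

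With (a) and (b) established, both parts are immediate. For part 1, $\Sigma_{k=1}^n \ell_{x_k} = \ell_{x_i} + \ell_{x_j} + \Sigma_{k \neq i,j} \ell_{x_k} \ge 1 + 1 + 0 = 2$. For part 2, fix $m$; since $i \neq j$, at least one of them — call it $i'$ — differs from $m$, so $\Sigma_{k=1}^n \ell_{x_k} - \ell_{x_m} = \Sigma_{k \neq m} \ell_{x_k} \ge \ell_{x_{i'}} \ge 1 > 0$, using $\ell_{x_k} \ge 0$ for the other summands. Both inequalities are entailed by $\Ac$ in linear integer arithmetic, which is exactly the two claims.
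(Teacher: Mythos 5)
Your proof is correct and takes essentially the same approach as the paper's: non-singularity yields two distinct indices whose variables are not entailed empty, saturation of \rn{L-Valid} forces $\ell_{x_i}>0,\ell_{x_j}>0$ into $\Ac$, saturation of \rn{L-Valid}, \rn{L-Intro}, and \rn{S-Prop} together with \Cref{assumption} gives $\Ac\ent[\lian]\ell_{x_k}\ge 0$ for all $k$, and both claims then follow by linear arithmetic. Your write-up merely spells out in more detail the redundancy analysis of \rn{L-Valid} and the congruence-closure chain for the $x_k\teq\seqempty$ case, which the paper's proof leaves implicit.
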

\begin{proof}
    Let $i,j\in[1,n], i\neq j$,
    such that
    $\Sc\notent x_i\teq \seqempty$ and
    $\Sc\notent x_j\teq\seqempty$.
    We know that $x_i,x_j\in\ter{S}$, so by saturation of 
    $\rn{L-Valid}$, we have
    $\Ac\ent\ell_{x_i}>0$ and $\Ac\ent\ell_{x_j}>0$. 
Also, by saturation of $\rn{L-Valid}$, $\rn{L-Intro}$, and $\rn{S-Prop}$, 
together with \Cref{assumption},
we know that $\Ac \ent[\lian] \ell_{x_k} \ge 0$ for $k\in[1,n]$.  It follows that $\Ac\ent[\lian]\Sigma_{k=1}^{n}\ell_{x_k}\ge 2$.  Furthermore,
    for each $m\in[1,n]$, 
    $\Ac\ent[\lian]\Sigma_{k=1}^{n}\ell_{x_k}>\ell_{x_m}$.  
\end{proof}

\subsection{Proof of \Cref{lem:flatas}}
\flatas*
\begin{proof}
Using standard transformations, $\varphi$ can be transformed into an equisatisfiable disjunction $\varphi'=\varphi_1\vee\ldots\vee\varphi_n$, where
for $i\in[1,n]$, $\varphi_i$ is a conjunction
$\varphi_i^{1},\ldots,\varphi_i^{n_i}$ of flat literals.
Each flat literal in $\varphi'$ is either a sequence constraint or an arithmetic constraint.
For each $i\in[1,n]$ we set 
$\Sc_i = \{\varphi_i^j\ |\ $such that $\varphi_i^j$ is a sequence constraint $\}$
and
$\Ac_i = \{\varphi_i^j\ |\ $such that $\varphi_i^j$ is an arithmetic constraint $\}$.
Hence, for every $\sth$-interpretation $\M$, 
$\M\models\varphi$ iff $\M\models \Sc_i\cup\Ac_i$ for some $i\in[1,n]$.
\end{proof}

\subsection{Proof of \Cref{lem:entnf-term}}
\entnfterm*
\begin{proof}

We show that the set of pairs $(x,s)$ for which $\Sc \entnf x \teq s$ is finite (it is then easy to see that the set of pairs $(y,t)$ for which $\Sc\entnf^{\ast} y\teq t$ is also finite).  Consider a tree whose root is obtained by \Cref{it:entnf-ref} or \Cref{it:entnf-con} of \Cref{def:entnf} where the children of a node are all possible results of applying \Cref{it:entnf-rec} of \Cref{def:entnf}.  Note that each node can have only finitely many children as there are only finitely many pairs $(y,t)$ where $y$ is a variable in $\Sc$ and $t$ is $\seqempty$ or a variable concatenation term in $\ter{S}$.  Below, we show that every path in the tree is finite, from which it follows that the tree is finite.  Since there are only finitely many such trees, it follows that the set of $(x,s)$ for which $\Sc \entnf x \teq s$ is finite.

    Define a partial order $\prec$ over terms of sort $\sseq$ in $\ter{\Sc}$, by
    $s\prec t$ iff $\Ac\ent[\lian]\ell_{x_s}<\ell_{x_t}$
    for some variables $x_s$ and $x_t$ such that
    $x_s\sceq{\Sc}s$ and $x_t\sceq{\Sc}t$.
    We show that
    $\prec$ is well defined.
    First, by \Cref{assumption}, such $x_s$ and $x_t$ exist.
    Next,
    if $x_s\sceq{\Sc}s$, $x_t\sceq{\Sc}t$,
    $x_s'\sceq{\Sc}s$ and $x_t'\sceq{\Sc}t$, then
    $\Ac\ent[\lian]\ell_{x_s}<\ell_{x_t}$  iff
    $\Ac\ent[\lian]\ell_{x_s'}<\ell_{x_t'}$,
    since $\Sc\ent x_s\teq x_s'$ and $\Sc\ent x_t\teq x_t'$
    and saturation of \rn{S-Prop}.
    Let $\prec^{\ast}$ be the (well-founded)
    Dershowitz-Manna
    multiset ordering
    induced by $\prec$ \cite{DBLP:journals/cacm/DershowitzM79}, 
    that is, the minimal transitive 
    relation that satisfies:
    $A\prec^{\ast} B$ whenever
    $A$ is obtained from $B$ by removing a single
    element $b$, and possibly adding any finite number
    of elements $a$ such that $a\prec b$ for every such
    new element.
    Now, for each term $t$ of sort $\sseq$ in $\ter{\Sc}$, let
    $m(t)$ be the multiset of the terms occurring in it
    (the multiplicity of each element in this multiset
    is the number of times it occurs in $t$).
    We prove the following claim, which establishes that every path in the tree mentioned above is finite: 
    if $\Sc \entnf x \teq \nf{(\seqconiii{\vec{w}}{y}{\vec{z}})}$,
    $\Sc \ent y \teq t$ and
    $t$ is $\seqempty$ or a variable concatenation term in $\Sc$ that is not singular in $\Sc$,
    then $m(\nf{(\seqconiii{\vec{w}}{y}{\vec{z}})})\succ^{\ast} m(\nf{(\seqconiii{\vec{w}}{t}{\vec{z}})})$.
    %
    
    
    %
    We consider two cases:
    $t$ is $\seqempty$ or $t$ is a variable concatenation term in $\ter{S}$
    not singular in $S$.
    In the first case,
    $\nf{(\seqconiii{\vec{w}}{t}{\vec{z}})}=
    \nf{(\seqconii{\vec{w}}{\vec{z}})}$. Note that the only role of $\downarrow$ here is to flatten nested concatenations.  Hence, $y$ is removed
    from the multiset of sub-terms and is not replaced
    by anything, so
    $m(\nf{(\seqconiii{\vec{w}}{y}{\vec{z}{}{}})})\succ^{\ast}m(\nf{(\seqconiii{\vec{w}}{t}{\vec{z}})})$.

    In the second case, $t$ is a variable concatenation term in $\Sc$ not singular in $\Sc$.  Let $t=\seqconiii{t_1}{\dots}{t_n}$.  Now, 
    $\nf{(\seqconiii{\vec{w}}{(\seqconiii{t_1}{\cdots}{t_n})}{\vec{z}})}$
 is a flat concatenation in which
    $y$ was removed, and $t_1,\ldots,t_n$ were added.
    To prove a decrease in $\prec^{\ast}$ from 
    $m(\nf{(\seqconiii{\vec{w}}{y}{\vec{z}})})$ to
    $m(\nf{(\seqconiii{\vec{w}}{t}{\vec{z}})})$,
    we show that for every $k\in[1,n]$,
    $t_k\prec y$, that is,
    $\Ac\ent[\lian]\ell_{t_k}<\ell_{y}$ (notice that $t_k$ and $y$ are variables).
    By \Cref{lem:concat-term-len}, we know that
    for each $k\in[1,n]$, 
    $\Ac\ent[\lian]\Sigma_{i=1}^{n}\ell_{t_i}>\ell_{t_k}$.  Since $\Sc\ent y\teq t$, we have $\Sc\ent\seqlen{y}=\seqlen{t}$. 
     By saturation of $\rn{L-Intro}$ and $\rn{S-Prop}$, and by \Cref{assumption},
    it follows that $\Ac\ent\ell_{y}=\Sigma_{i=1}^{n}\ell_{t_i}$.  Thus, 
$\Ac\ent[\lian]\ell_{y}>\ell_{t_k}$.

\end{proof}

\subsection{Proof of \Cref{lem:nf-unique}}
\nfunique*

\begin{proof}

Let $e$ be an equivalence class of $\sceq{\Sc}$ whose terms are of sort $\sseq$, and let $x\in e$ (we know every equivalence class has at least one variable by \Cref{lem:equiv_class_nonempty}). We show existence and uniqueness of a normal form for $e$.

\medskip
\noindent
{\bf Existence: }
We show that for some $\vec{s}$, $\Sc\entnf^{\ast}x\teq \vec{s}$.
Consider the tree construction of \Cref{lem:entnf-term}.  If we follow some path in the tree, we will reach a node $\Sc \entnf x\teq \vec{t}$ for which no children can be derived (i.e., \Cref{it:entnf-rec} of \Cref{def:entnf} doesn't apply).  It is not hard to see that $\vec{t} = (t_1,\dots,t_n)$, where $n\ge 0$ and each $t_i$ is a variable for $i\in[1,n]$.

Let $\vec{s}$ have the same length as $\vec{t}$ and let $s_i:=\cf(t_i)$ for $i\in[1,n]$.
We prove that $\Sc\entnf^{\ast}x\teq \vec{s}$.
We know that $\Sc\entnf x\teq \vec{t}$.  If $n=0$, then trivially, $\Sc\entnf^{\ast}x\teq \vec{s}$.  Otherwise,
for each $i\in[1,n]$, we further know that $\Sc\ent t_i\teq s_i$ and $s_i = \cf([s_i])$, so it only remains to show that
$s_i$ is atomic in $\Sc$.
Assume it is not.
Then either $\Sc \ent s_i\teq \seqempty$ or
there exists a variable concatenation term $\vec{u} \in \ter{\Sc}$
not singular in $\Sc$ such that
$\Sc\ent s_i\teq \vec{u}$.  In either case, this would imply that \Cref{it:entnf-rec} of \Cref{def:entnf} is applicable to $\Sc \entnf x\teq \vec{t}$, contradicting our assumption.

\medskip
\noindent
{\bf Uniqueness: }
By the existence argument above, there exists some $\vec{s}$ such that
$\Sc\entnf^{\ast}x\teq \vec{s}$.
Now, suppose $\Sc\entnf^{\ast}x\teq \vec{s'}$, and assume
that $\vec{s}\neq \vec{s'}$.
Then there must be $\vec{w},\vec{z},\vec{z'},y,y'$, each containing only variables that are atomic representatives,
such that $s=\nf{(\seqconiii{\vec{w}}{y}{\vec{z}})}$ and
$s'=\nf{(\seqconiii{\vec{w}}{y'}{\vec{z'}})}$, with $y \neq y'$.
By saturation w.r.t. $\rn{C-Split}$, there are three possibilities:
$\Ac\ent\ell_{y}>\ell_{y'}$ and $y\teq \seqconii{y'}{k}\in\Sc$ for some $k$;
$\Ac\ent\ell_{y}<\ell_{y'}$ and $y'\teq \seqconii{y}{k}\in\Sc$ for some $k$; or
$\Ac\ent\ell_{y} = \ell_{y'}$ and $y\teq y'\in\Sc$.
In the first case, notice that since $\Sc\ent y\teq \seqconii{y'}{k}$, it follows that $\Sc\ent \seqlen{y} \teq \seqlen{\seqconii{y'}{k}}$.  Also, by saturation of $\rn{L-Intro}$, $\Sc\ent\seqlen{\seqconii{y'}{k}} \teq \seqlen{y'} + \seqlen{k}$.  So, $\Sc \ent \seqlen{y} \teq \seqlen{y'} + \seqlen{k}$.  And, by saturation of $\rn{S-Prop}$ and  \Cref{assumption} also $\Ac\ent\ell_{y} \teq \ell_{y'} + \ell_{k}$.  It follows that $\Sc \notent k \teq \seqempty$; otherwise, we would have $\Ac\ent\ell_{k}\teq 0$ by $\rn{L-Intro}$ and $\rn{S-Prop}$, which together with $\Ac\ent\ell_{y} > \ell_{y'}$ contradicts saturation of $\rn{A-Conf}$.
Also, $y'$ is atomic and hence, $\Sc\notent y'\teq\seqempty$.
Thus, $\Sc\ent y \teq \seqconii{y'}{k}$ but $\seqconii{y'}{k}$ is not singular in $\Sc$
as $\Sc\notent k\teq\seqempty$ and $\Sc\notent y'\teq\seqempty$.
In particular, this means that $y$ is not atomic, which contradicts our assumption, so
the first case is impossible. 
The second case is analogous to the first.
In the third case, we have
$y\teq y'\in\Sc$, but we know that $y$ and $y'$ are both equivalence class representatives, so it must be that $y = y'$, which contradicts our assumption that $y\neq y'$.

\end{proof}

\subsection{Proof of \Cref{thm:correctness}, \Cref{item:soundness}}

\correctness*

The proof of \Cref{item:soundness} is routine, and amounts to a local soundness check of each of the derivation rules. 
The only non-trivial case is $\rn{C-Split}$, which relies on the following lemma:

\begin{lemma}
\label{lem:entnf-sound}
Let $\Sc$ be a set of sequence constraints
and $\Ac$ a set of arithmetic constraints.
Suppose $\conf{\Sc,\Ac}$ is saturated w.r.t.
$\rn{S-Prop}$,
$\rn{L-Intro}$
and $\rn{L-Valid}$.
If $\Sc\entnf^{\ast}x\teq s$ then $\Sc \ent[\sth] x\teq s$.
\end{lemma}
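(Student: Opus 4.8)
The plan is to reduce the $\ast$-version of the relation to the plain relation $\entnf$ and then induct on the latter's definition. First I would unfold $\Sc\entnf^{\ast}x\teq s$ via \Cref{def:entnf}: here $s$ is a concatenation $\vec{y}$ of atomic representatives, and there is a tuple $\vec{z}$ of the same length $n$ with $\Sc\entnf x\teq\vec{z}$ and $\Sc\ent y_i\teq z_i$ for each $i\in[1,n]$. The crucial, entirely routine observation is that empty-theory entailment is \emph{stronger} than $\sth$-entailment, since every $\sth$-interpretation is in particular an interpretation; hence each $\Sc\ent y_i\teq z_i$ upgrades to $\Sc\ent[\sth]y_i\teq z_i$, and by congruence of concatenation $\Sc\ent[\sth]\vec{z}\teq\vec{y}$, i.e. $\Sc\ent[\sth]\vec{z}\teq s$. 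It therefore suffices to prove the auxiliary claim that $\Sc\entnf x\teq t$ implies $\Sc\ent[\sth]x\teq t$ for all $x$ and $t$, and then combine it with the above by transitivity of $\teq$ under $\ent[\sth]$.

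I would establish the auxiliary claim by structural induction on the derivation of $\Sc\entnf x\teq t$ according to the three clauses of \Cref{def:entnf}. The two base clauses are immediate: for \Cref{it:entnf-ref}, $\Sc\ent[\sth]x\teq x$ holds trivially; for \Cref{it:entnf-con}, the literal $x\teq t$ belongs to $\Sc$, and any member of $\Sc$ is entailed by $\Sc$ in every theory, in particular in $\sth$. For the inductive clause \Cref{it:entnf-rec}, the premise supplies a derivation of $\Sc\entnf x\teq\nf{(\seqconiii{\vec{w}}{y}{\vec{z}})}$, to which the induction hypothesis applies to give $\Sc\ent[\sth]x\teq\nf{(\seqconiii{\vec{w}}{y}{\vec{z}})}$, together with the side condition $\Sc\ent y\teq t$, which upgrades to $\Sc\ent[\sth]y\teq t$ as above. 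Applying congruence to the latter yields $\Sc\ent[\sth]\seqconiii{\vec{w}}{y}{\vec{z}}\teq\seqconiii{\vec{w}}{t}{\vec{z}}$, and the goal $\Sc\ent[\sth]x\teq\nf{(\seqconiii{\vec{w}}{t}{\vec{z}})}$ then follows by stitching these facts together.

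The only point that needs care is the reduced-form operator: both the induction hypothesis and the goal mention \emph{normalized} terms, whereas congruence is most naturally applied to the un-normalized concatenations. This is resolved with the fact, already noted in the excerpt, that $\ent[\sth]r\teq\nf{r}$ for every term $r$, so $\nf{\cdot}$ may be freely inserted or removed on either side of an $\sth$-entailment; concretely I would chain $x\teq\nf{(\seqconiii{\vec{w}}{y}{\vec{z}})}\teq\seqconiii{\vec{w}}{y}{\vec{z}}\teq\seqconiii{\vec{w}}{t}{\vec{z}}\teq\nf{(\seqconiii{\vec{w}}{t}{\vec{z}})}$ under $\ent[\sth]$ and transitivity. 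I expect this normalization bookkeeping to be the only mild obstacle, and it is not a genuine difficulty. I would also remark that the saturation hypotheses with respect to $\rn{S-Prop}$, $\rn{L-Intro}$, and $\rn{L-Valid}$ play no role in this soundness direction; they are carried along only to match the statements of the neighboring lemmas, and the argument here goes through from the definitions and from $\ent[\sth]r\teq\nf{r}$ alone.
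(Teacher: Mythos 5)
Your proof is correct and follows essentially the same route as the paper's: unfold $\entnf^{\ast}$ into a plain $\entnf$ derivation plus componentwise empty-theory equalities, upgrade those to $\sth$-entailments, and prove soundness of $\entnf$ by induction over its derivation, handling the three clauses of \Cref{def:entnf} exactly as the paper does and discharging the normalization bookkeeping via $\ent[\sth] r \teq \nf{r}$. The only divergence is your closing remark, and it is accurate: the paper inducts on the number of derivation steps and cites \Cref{lem:entnf-term} (which is where the saturation hypotheses enter) to justify that this number is finite, whereas your structural induction needs no such appeal---derivations in an inductively defined relation are finite by construction---so the saturation hypotheses are indeed dispensable for this soundness direction.
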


\begin{proof}
Suppose $\Sc \entnf^\ast x \teq \vec{s}$ where $\vec{s} = (s_1, \ldots, s_n)$.  Then, for some $\vec{s'} = (s'_1, \ldots, s'_n)$,
$\Sc \entnf x \teq \vec{s'}$ and
$\Sc \ent s_i \teq s'_i$ for $i \in [1,n]$.  To show that     $\Sc\ent[\sth]x\teq \vec{s}$, it thus suffices to show that $\Sc \ent[\sth] x \teq \vec{s'}$.
    As shown in \Cref{lem:entnf-term}, there is a finite number $k$ of derivation steps
    in $\entnf$ that yield $x\teq \vec{s'}$.
    We prove the claim by induction on $k$.
    For $k=1$, we either apply \Cref{it:entnf-ref} or \Cref{it:entnf-con}.
    In the former case we get a trivial identity $\Sc\entnf x\teq x$,
    and clearly
    $\Sc\ent[\sth]x\teq x$.
    In the latter case, we get an identity $\Sc\entnf x\teq t$ such that
    $\Sc\ent x\teq t$, and so in particular
    $\Sc\ent[\sth]x\teq t$.
    Suppose $k>1$, and that
    $x\teq \vec{s'}$ was obtained using
    \Cref{it:entnf-rec} of \Cref{def:entnf}.
    Hence, $\vec{s'}$ has the form
    $\nf{(\seqconiii{\vec{w}}{t}{\vec{z}})}$, where
    $\Sc\entnf x\teq\nf{(\seqconiii{\vec{w}}{y}{\vec{z}})}$ with
    a shorter derivation,
    and $\Sc\ent y\teq t$.
    By the induction hypothesis,
    $\Sc\ent[\sth]x\teq\nf{(\seqconiii{\vec{w}}{y}{\vec{z}})}$.
    Clearly, 
    $\Sc\ent[\sth]y\teq t$.
    Hence,
    $\Sc\ent[\sth]x\teq \vec{s'}$.
\end{proof}

Using \Cref{lem:entnf-sound},
since $\Sc\entnf^{\ast}x\teq s$ implies that
$x\teq s$ follows from $\Sc$ in $\sth$, we have that
every model of the assumptions of $\rn{C-Split}$
assigns the same values to 
$\nf{(\seqconiii{\vec{w}}{y}{\vec{z}})}$ and
$\nf{(\seqconiii{\vec{w}}{y'}{\vec{z'}})}$.
Assuming that the first two branches of the conclusion
do not hold in some model, we indeed get that
the interpretations of $y$ and $y'$ must be
identical, as they are the sub-sequence of the same sequence,
that begins at the same index and whose length is the same.\qed

\subsection{Proof of \Cref{thm:correctness}, \Cref{item:completeness}}

\correctness*
    
Consider a saturated configuration $\conf{\Sc, \Ac}$.
We define an interpretation $\M$.
Unless stated otherwise, by \define{equivalence class} we mean an equivalence class
w.r.t. $\sceq{\Sc}$ and may drop the subscript when referring to it
(e.g., writing $[x]$ for the equivalence class of $x$).
We also sometimes drop ``{\em in $\Sc$}'' from
``{\em atomic in $\Sc$}'' when $\Sc$ is clear from context.
We say that an equivalence class $e$ is \define{atomic} if its terms are of sort $\sseq$ and all variables in it are atomic.
Note that if $e$ contains any variable that is atomic in $\Sc$, then all variables in $e$ are atomic in $\Sc$.

The following definitions completely define the model $\M$.
Concretely, 
\Cref{def:domains} defines the domains of $\M$;
\Cref{def:mc-fun} defines the function symbols of $\ssig$.
\Cref{def:mc-int-elem} defines the interpretations of variables of sort $\sint$ and $\selem$;
\Cref{def:mc-length} assigns lengths to the interpretations of variables of sort $\sseq$ that are atomic in $\Sc$ (it does not set their values, but only their lengths);
\Cref{def:mc-unit} defines the interpretations of variables of sort $\sseq$ that are equivalent to $\sequnit$ terms;
\Cref{def:mc-modelcon} defines the interpretations of variables of sort $\sseq$ that are atomic in $\Sc$ but are not equivalent to $\sequnit$ terms;
\Cref{def:mc-non-atom} defines the interpretations of variables of sort $\sseq$ that are not atomic in $\Sc$;
Finally, \Cref{def:mc-nth} defines the interpretation of the $\seqnth$ symbol in $\M$ (note that only a part of its interpretation is fixed by the theory).

The definitions are shown to be well-defined in \Cref{lem:wd}. 

We start by assigning domains to the sorts of $\sth$.

\begin{definition}[Model construction: domains]
\label{def:domains}
\begin{enumerate}
    \item $\M(\sint)=\mathbb{Z}$, the set of integers.
    \item $\M(\selem)$ is some countably infinite set.
    \item $\M(\sseq)$ is the set of finite sequences whose elements are taken from $\M(\selem)$.
\end{enumerate}
\end{definition}
The domains for $\sint$ and $\sseq$ were chosen to be consistent with the requirements of
$\sth$. The exact identity of $\selem$ is not important, and so we set
its elements to be arbitrary. In contrast, its cardinality
is important, and is set to be infinite.

\begin{definition}[Model construction: function symbols]
\label{def:mc-fun}
    The symbols shown in \Cref{fig:sig} are interpreted in such a way that $\M$ is both a $\lth$-interpretation and a $\sth$-interpretation, as described in \Cref{fig:sig} and in Sections \ref{sec:preliminaries} and \ref{sec:theory}.
\end{definition}

\noindent
Now we can assign values to variables of sort $\sint$ and $\selem$. 

\begin{definition}[Model construction: $\sint$ and $\selem$ variables]
\label{def:mc-int-elem}
\begin{enumerate}
    \item\label{item:mc-arith} Let $\arithmodel$ be a $\lth$-interpretation with $\arithmodel(\sint)=\mathbb{Z}$
    that satisfies $\Ac$. Then 
    $\M(x):=\arithmodel(x)$ for every variable $x$ of sort $\sint$.
    \item\label{item:mc-elem} Let $a_1,a_2,\ldots$ be an enumeration of $\M(\selem)$, and let
    $e_1,\ldots,e_n$ be an enumeration of the equivalence classes of
    $\sceq{\Sc}$ whose variables have sort $\selem$. Then, for every $i\in [1,n]$ and every variable $x\in e_i$, $\M(x):=a_i$.
    \end{enumerate}
\end{definition}

\begin{lemma}
\label{lem:eq_len}
If $y\teq \sequpdate(x,i,z)\in\Sc$, then $\M(\ell_{x})=\M(\ell_{y})$.
\end{lemma}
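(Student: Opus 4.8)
The plan is to reduce the claim to an arithmetic equality $\ell_x \teq \ell_y$ that $\M$ is forced to satisfy. Intuitively the statement holds because the semantics of $\sequpdate$ preserves length in every case, whether or not the index is in bounds; but the formal argument instead traces how this length equality is made explicit in the saturated configuration. First I would observe that, since all literals in $\Sc$ are flat (\Cref{assumption}, part 3), the membership $y \teq \sequpdate(x,i,z) \in \Sc$ forces $x$, $y$, $i$, and $z$ to be variables; in particular $x$ and $y$ are $\sseq$-variables, so by \Cref{assumption}, part 2, both $\ell_x \teq \seqlen{x}$ and $\ell_y \teq \seqlen{y}$ belong to $\Sc$.

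The core of the argument is a short chain of entailments in the empty theory. Since $y \teq \sequpdate(x,i,z) \in \Sc$, congruence gives $\Sc \ent \seqlen{y} \teq \seqlen{\sequpdate(x,i,z)}$. Because $\sequpdate(x,i,z) \in \ter{\Sc}$, saturation with respect to \rn{L-Intro} yields $\seqlen{\sequpdate(x,i,z)} \teq \nf{(\seqlen{\sequpdate(x,i,z)})} \in \Sc$, and by the rewrite rule $\seqlen{\sequpdate(s,i,t)} \to \seqlen{s}$ of \Cref{fig:nf} (with $\seqlen{x}$ already reduced, as $x$ is a variable) we have $\nf{(\seqlen{\sequpdate(x,i,z)})} = \seqlen{x}$. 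Chaining these with the two defining equalities for $\ell_x$ and $\ell_y$ gives $\Sc \ent \ell_y \teq \ell_x$. As $\ell_x$ and $\ell_y$ are $\lsig$-terms in $\ter{\Sc}$, saturation with respect to \rn{S-Prop} places $\ell_y \teq \ell_x$ into $\Ac$. Finally, by \Cref{def:mc-int-elem}, $\M$ interprets integer variables according to a $\lth$-interpretation $\arithmodel$ satisfying $\Ac$, so $\M(\ell_x) = \arithmodel(\ell_x) = \arithmodel(\ell_y) = \M(\ell_y)$, as desired.

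Since the pieces are all standard, the only thing to get right is the bookkeeping of saturation hypotheses: one must ensure the configuration is saturated with respect to \rn{L-Intro} and \rn{S-Prop} (which holds, since it is a fully saturated configuration in the model construction) and that the reduced form of $\seqlen{\sequpdate(x,i,z)}$ is exactly $\seqlen{x}$ rather than something requiring further reduction --- this is immediate because $x$ is a variable. I do not anticipate any genuine obstacle here; the lemma is essentially a syntactic consequence of the length rewrite rule together with the length-propagation rules of the calculus.
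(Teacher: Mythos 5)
Your proof is correct and follows essentially the same route as the paper's: congruence on $y\teq\sequpdate(x,i,z)$, saturation of \rn{L-Intro} together with the rewrite rule $\seqlen{\sequpdate(s,i,t)}\to\seqlen{s}$ to get $\Sc\ent\seqlen{\sequpdate(x,i,z)}\teq\seqlen{x}$, then \Cref{assumption}, saturation of \rn{S-Prop}, and \Cref{def:mc-int-elem} to transfer the equality $\ell_x\teq\ell_y$ into $\Ac$ and hence into $\M$. Your additional bookkeeping (flatness forcing $x,y,i,z$ to be variables, and the observation that $\seqlen{x}$ is already in reduced form) is implicit in the paper's argument but entirely consistent with it.
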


\begin{proof}

By congruence, we have that $\Sc\ent \seqlen{y} \teq \seqlen{\sequpdate(x,i,z)}$, and by saturation of $\rn{L-Intro}$ and the definition of $\downarrow$, we know that $\Sc\ent \seqlen{\sequpdate(x,i,z)}\teq \seqlen{x}$.  Thus, $\Sc \ent \seqlen{x} \teq \seqlen{y}$. By \Cref{assumption}, $\Sc\ent\ell_x\teq\ell_y$, so by saturation of $\rn{S-Prop}$, we have $\Ac\ent\ell_x\teq\ell_y$.  It follows from \Cref{def:mc-int-elem} that $\arithmodel(\ell_x)=\arithmodel(\ell_y)$, and thus, $\M(\ell_x)\teq\M(\ell_y)$.
\end{proof}

We now turn to assigning values to variables of sort $\sseq$.
We will first associate a sequence value with every sequence equivalence class (we write $\M(e)$ for the value associated with equivalence class $e$).
Then, for each equivalence class $e$ and for each $x\in e$, we define $\M(x) := \M(e)$.  We start by defining lengths.

\begin{definition}[Model construction: length of atomic variables]
\label{def:mc-length}
For each variable $x$ in an atomic equivalence class $e$ of $\sceq{\Sc}$,
we constrain $\M(e)$ to be a sequence of length $\M(\ell_x)$.
(the elements of $\M(e)$ are not defined yet).
\end{definition}
Note that because every atomic variable is in an atomic equivalence class, \Cref{def:mc-length} assigns a length to every atomic variable.
In what follows, we denote the length assigned to $\M(e)$ in \Cref{def:mc-length} by $\ell_{e}$.

Next, we define model values for atomic variables.
We start with equivalence classes that contain $\sequnit$-terms.

\begin{definition}
An equivalence class $e$ of $\sceq{\Sc}$ is called a {\em unit equivalence class} if
$\sequnit(x)\in e$ for some $x$.
\end{definition}

\begin{definition}[Model construction: atomic variables in unit equivalence classes]
\label{def:mc-unit}
For every atomic variable $x$ such that $x \sceq{\Sc} \sequnit(y)$ for some $y$,
we set $\M([x])$ to be a sequence of length $1$
whose only element is set to $\M(y)$.
\end{definition}

Next we turn to atomic equivalence classes that are not unit. 
We begin by defining a graph in order to keep track of constraints
that originate from the $\sequpdate$ symbol.
This definition is an adaptation of the weak equivalence graph from 
\cite{DBLP:conf/frocos/ChristH15} to the context of sequences, where
operations are richer.
In particular, we take into account not only whether
a constraint of the form
$y\teq\sequpdate(x,i,a)$ appears in $\Sc$, but also
whether $x$ and $y$ are atomic, and whether 
the interpretation already given to $i$ in \Cref{def:mc-int-elem}
is within the range determined by the length assigned to
$x$ and $y$ in \Cref{def:mc-length}.

\begin{definition}[Weak equivalence graph]
\label{def:mc-weak-equiv}
Define a graph $G=(V,E,\delta)$ as follows.
$V$ is the set of atomic equivalence classes. $E\subseteq V\times V$ is a set of unordered edges, and $\delta: E\rightarrow\powerset{\mathbb{N}}$ is a labeling function on edges, such that $(e_1,e_2)\in E$ and $k\in\delta((e_1,e_2))$ iff
there are $x\in e_1$ and $y\in e_2$ such that
    $y\teq\sequpdate(x, i, z) \in \Sc$ or
    $x\teq\sequpdate(y, i, z) \in \Sc$, where
    $\M(i)=k$ and
    $0\leq k < \M(\ell_{e_1}) = \M(\ell_{e_2})$.
\end{definition}

\begin{definition}[Weak equivalence]
\label{def:weak_equiv_rel}
Given a weak equivalence graph $G=(V,E,\delta)$, for each $i\in\mathbb{N}$, we define
a binary relation $\sim_i$ over $V$
as follows: $e_1 \sim_i e_2$ iff there exists a path $p$ between
$e_1$ and $e_2$ in $G$,
such that for each edge $d$ in $p$, $\delta(d)\setminus \{i\} \neq \emptyset$.
\end{definition}

It is routine to verify that:

\begin{lemma}
\label{lem:simi-eq}
For each $i\in\mathbb{N}$, $\sim_i$ is an equivalence relation over the atomic equivalence classes of $\sceq{\Sc}$.
\end{lemma}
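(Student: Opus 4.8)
The plan is to observe that $\sim_i$ is nothing more than the \emph{connectivity relation} of a particular subgraph of $G$, and then to read off the three defining properties of an equivalence relation from that observation. For each fixed $i$, I would consider the subgraph $G_i = (V, E_i)$ obtained by keeping exactly the index-$i$-relevant edges, i.e.\ $E_i = \{d \in E : \delta(d)\setminus\{i\}\neq\emptyset\}$. By \Cref{def:weak_equiv_rel}, $e_1 \sim_i e_2$ holds precisely when there is a path from $e_1$ to $e_2$ all of whose edges lie in $E_i$, that is, when $e_1$ and $e_2$ lie in the same connected component of $G_i$. Since connectivity in an undirected graph is the paradigmatic example of an equivalence relation, the statement follows, and it remains only to verify reflexivity, symmetry, and transitivity directly against the definition.

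For \textbf{reflexivity}, given any atomic equivalence class $e$, I would exhibit the trivial (zero-length) path from $e$ to itself as the witness for $e \sim_i e$: the requirement that $\delta(d)\setminus\{i\}\neq\emptyset$ hold for every edge $d$ of the path is satisfied vacuously when the path has no edges. For \textbf{symmetry}, suppose $e_1 \sim_i e_2$ via a path $p$. Because the edges of $G$ are unordered, reversing $p$ yields a path $p'$ from $e_2$ to $e_1$ whose edge set coincides with that of $p$; hence every edge of $p'$ still satisfies $\delta(d)\setminus\{i\}\neq\emptyset$, giving $e_2 \sim_i e_1$. For \textbf{transitivity}, given paths $p_1$ witnessing $e_1 \sim_i e_2$ and $p_2$ witnessing $e_2 \sim_i e_3$, concatenating them at $e_2$ produces a walk from $e_1$ to $e_3$ every edge of which originates in $p_1$ or $p_2$ and therefore meets the label condition; this walk (which the definition permits as a path, or which can be pruned to a simple path if desired) establishes $e_1 \sim_i e_3$.

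The argument is entirely routine, which is why the paper flags it as such; there is no genuine obstacle. The only points that warrant a moment's care are admitting the empty path in the reflexivity case and noting that the edge-label condition $\delta(d)\setminus\{i\}\neq\emptyset$ is a property of individual \emph{unordered} edges, so it is manifestly preserved under both reversal and concatenation of paths. These two observations are precisely what reduce the claim to the standard fact that being in the same connected component of an undirected graph is an equivalence relation.
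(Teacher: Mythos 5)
Your proposal is correct and matches the paper's treatment: the paper states this lemma without proof, explicitly flagging it as routine, and your verification (connectivity in the subgraph of edges $d$ with $\delta(d)\setminus\{i\}\neq\emptyset$, checked via empty path, path reversal, and path concatenation) is precisely that routine argument. Your care about admitting the zero-length path for reflexivity is exactly the reading the paper itself relies on in its remark that $e \sim_i e$ holds even when $e$ has no $\sequpdate$-terms.
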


\noindent
Notice that even though $\sim_i$ is defined using $\sequpdate$-terms,
every atomic equivalence class $e$ satisfies $e \sim_i e$, even if
it has no $\sequpdate$-terms.

Let $\ell=\max_{e \in \sceq{\Sc}} \ell_e$ be the maximal sequence length assigned in
\Cref{def:mc-length}.
Let $a_1^{1},a_1^{2},\ldots, a_1^{\ell},a_2^{1},\ldots,a_2^{\ell},\ldots$ be an enumeration of the elements in $\M(\selem)$ that were not assigned to any variable of $\selem$ sort.
For each $i\in[0,\ell)$, let $E_1^{i},E_2^{i},\ldots,E_{n_{i}}^{i}$ be an enumeration of the equivalence classes of $\sim_i$.

By \Cref{lem:simi-eq}, we have:
\begin{lemma}
\label{lem:eqcsimi}
For every atomic equivalence class $e$ of $\sceq{\Sc}$,
and for each $i\in[0,\ell_{e})$,
there exists some $j$ such that
$e\in E_{j}^{i}$.
\end{lemma}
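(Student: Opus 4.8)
The plan is to observe that this is an immediate consequence of \Cref{lem:simi-eq}, once one checks that the index $i$ lies in the range for which the enumeration $E_1^i,\ldots,E_{n_i}^i$ has been introduced. The only content of the lemma is this bookkeeping, namely that the partition induced by $\sim_i$ covers every atomic equivalence class, and not merely those carrying $\sequpdate$-terms.

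First I would note that, since $\ell = \max_{e'} \ell_{e'}$ is the maximal length assigned in \Cref{def:mc-length}, we have $\ell_e \le \ell$ for the given atomic equivalence class $e$; hence $[0,\ell_e) \subseteq [0,\ell)$, and so for each $i \in [0,\ell_e)$ the enumeration $E_1^i,\ldots,E_{n_i}^i$ of the $\sim_i$-equivalence classes is in fact defined. Next, by \Cref{lem:simi-eq}, $\sim_i$ is an equivalence relation over the set $V$ of atomic equivalence classes of $\sceq{\Sc}$; in particular, as observed immediately after that lemma, every atomic equivalence class $e$ satisfies $e \sim_i e$ via the trivial (empty) path, so $e \in V$ lies in the domain of $\sim_i$. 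Since the equivalence classes of an equivalence relation partition its domain, $e$ belongs to exactly one $\sim_i$-class, which by the enumeration is some $E_j^i$; this $j$ witnesses the claim.

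I do not expect any genuine difficulty here: the statement is purely a well-definedness guarantee, needed so that the subsequent model-construction step (which assigns a value to each $\sim_i$-class $E_j^i$ at index $i$) indeed fixes an element at every index $i \in [0,\ell_e)$ of every atomic equivalence class $e$. The one point deserving care is the range inequality $i < \ell_e \le \ell$, without which the objects $E_j^i$ would not have been introduced and the quantification over $j$ would be vacuous.
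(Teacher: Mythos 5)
Your proof is correct and follows essentially the same route as the paper, which derives this lemma directly from \Cref{lem:simi-eq} (that $\sim_i$ is an equivalence relation over the atomic equivalence classes, with $e \sim_i e$ holding trivially even when $e$ has no $\sequpdate$-terms). The paper treats it as an immediate corollary; your write-up simply makes explicit the same two bookkeeping points, namely $\ell_e \le \ell$ and the fact that the $\sim_i$-classes partition the set of atomic equivalence classes.
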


\begin{definition}[Model construction: atomic variables in non-unit equivalence classes]
\label{def:mc-modelcon}
Let $e$ be an atomic equivalence class of $\sceq{\Sc}$ that is not a unit equivalence class.
We set the $i$th element of $\M(e)$ for every $i\in[0,\ell_e)$
as follows.
By \Cref{lem:eqcsimi}, there exists some $j$
such that $e\in E_{j}^{i}$.
\begin{enumerate}
    \item\label{item:update-prop} If there are $e'\in E_j^i$, $s\in e'$, $x$, and $y$ such that
    $x\sceq{\Sc}\seqnth(s,y)$ 
    and $\M(y)=i$,
    we set the $i$th element of $\M(e)$ to be $\M(x)$.
    \item\label{item:update-fresh} Otherwise, the $i$th element of $\M(e)$ is set to $a_j^{i}$.
\end{enumerate}
\end{definition}

Next, we set the interpretation of non-atomic variables of sort $\sseq$.

\begin{definition}[Model construction: non-atomic sequence variables]
\label{def:mc-non-atom}
Let $x$ be a non-atomic variable of sort $\sseq$, and let $\vec{y}$ be its normal form.
$\M([x])$ is set to be the concatenation of the interpretations of the variables in $\vec{y}$ ($\M([x])$ is the empty sequence if $\vec{y}$ is of size 0).
\end{definition}

Finally, we define the interpretation of $\seqnth$.
$\M(\seqnth)$ is a function from $\M(\sseq)\times\M(\sint)$ to $\M(\selem)$.
Given $a\in\M(\sseq)$ and an integer $i$, if
$i$ is non-negative and smaller than the length of $a$,
the value of $\M(\seqnth)(a,i)$ is fixed by the theory.
The following definition assigns a value to $\M(\seqnth)(a,i)$ when $i$ is out of bounds.

\begin{definition}[Model construction: $\seqnth$ terms]
\label{def:mc-nth}
For every element $a\in\M(\sseq)$ with length $n$ and for every $i\in\mathbb{N}$
such that $i<0$ or $i \ge n$,
if there are $s$, $x$, and $y$ such that $x\sceq{\Sc}\seqnth(s,y)$, $\M(s)=a$,
    and $\M(y)=i$,
    $\M(\seqnth)(a,i)$ is set to $\M(x)$.
Otherwise, it is set arbitrarily.
\end{definition}

This concludes the construction of $\M$.

\begin{restatable}{lemma}{lemwd}
\label{lem:wd}
\Cref{def:domains,def:mc-fun,def:mc-int-elem,def:mc-length,def:mc-unit,def:mc-modelcon,def:mc-non-atom,def:mc-nth} are well-defined.
\end{restatable}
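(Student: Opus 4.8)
The plan is to verify each of \Cref{def:domains,def:mc-fun,def:mc-int-elem,def:mc-length,def:mc-unit,def:mc-modelcon,def:mc-non-atom,def:mc-nth} in turn, checking in each case that (i) every object the definition refers to actually exists, (ii) the prescribed value is single-valued, i.e.\ independent of the choices the definition leaves open, and (iii) it is compatible with the assignments made by the earlier definitions. Processing the definitions in the listed order lets me treat each as extending an already-well-defined prefix of $\M$, so that compatibility checks only look backward.

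The first block is routine. \Cref{def:domains,def:mc-fun} introduce no ambiguity. For \Cref{def:mc-int-elem}, existence of an arithmetic model $\arithmodel$ satisfying $\Ac$ follows from saturation w.r.t.\ \rn{A-Conf}: were $\Ac$ $\lian$-unsatisfiable we would have $\Ac \ent[\lian] \ff$ and \rn{A-Conf} would apply, contradicting saturation; single-valuedness is immediate since $\arithmodel$ is one fixed model, and the $\selem$-assignment is single-valued because the equivalence classes partition the variables and $\M(\selem)$ is infinite, so distinct classes receive distinct elements. For \Cref{def:mc-length} I must show that any two variables $x, x'$ in the same atomic class yield the same length: from $x \sceq{\Sc} x'$ we get $\Sc \ent \seqlen{x} \teq \seqlen{x'}$ by congruence, hence $\Ac \ent \ell_x \teq \ell_{x'}$ by \Cref{assumption} and saturation of \rn{S-Prop}, so $\M(\ell_x) = \M(\ell_{x'})$; nonnegativity of this length, needed for such a sequence to exist, comes from saturation of \rn{L-Valid}. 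For \Cref{def:mc-unit}, if $x \sceq{\Sc} \sequnit(y)$ and $x \sceq{\Sc} \sequnit(y')$ then $\Sc \ent \sequnit(y) \teq \sequnit(y')$, and \rn{U-Eq} gives $\Sc \ent y \teq y'$, so $\M(y) = \M(y')$ and the single element is unambiguous; compatibility with \Cref{def:mc-length} reduces to $\M(\ell_x) = 1$, which holds because $\seqlen{\sequnit(y)}$ reduces to $1$ under the rewrite rules of \Cref{fig:nf} together with \rn{L-Intro}.

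The main obstacle is \Cref{def:mc-modelcon}, the assignment of elements to non-unit atomic classes via the weak-equivalence graph. Two points must be settled. First, the index $j$ with $e \in E_j^i$ is unique: this is exactly the statement that $E_1^i, \dots, E_{n_i}^i$ partition the atomic classes, which follows from \Cref{lem:simi-eq} (that $\sim_i$ is an equivalence relation), while existence of $j$ is \Cref{lem:eqcsimi}. Second, and this is the crux, the value produced by case \Cref{item:update-prop} must be independent of the witnesses chosen: if two reads $x \sceq{\Sc} \seqnth(s,y)$ and $x' \sceq{\Sc} \seqnth(s',y')$ with $\M(y) = \M(y') = i$ arise from classes $e', e''$ lying in the same $E_j^i$, I must show $\M(x) = \M(x')$. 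I expect to carry this out by induction along a $\sim_i$-path connecting $e'$ and $e''$, using that every edge on the path carries a label containing some index $\ne i$, so the corresponding $\sequpdate$ leaves position $i$ untouched, and invoking saturation of the array rules: \rn{Nth-Update} to transport a read across an update at an index distinct from $i$, \rn{Nth-Split} to force agreement of the sequence arguments when the read indices coincide, and \rn{Nth-Intro} to ensure the needed $\seqnth$ terms are present. Single-valuedness of the fresh case \Cref{item:update-fresh} is immediate, since $a_j^i$ depends only on $(j,i)$, and it cannot clash with case \Cref{item:update-prop} as the two cases are mutually exclusive.

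The remaining definitions are then short. \Cref{def:mc-non-atom} is single-valued because the normal form $\vec{y}$ of $[x]$ is unique by \Cref{lem:nf-unique} and each component of $\vec{y}$ is an atomic representative already assigned a value in the earlier steps. \Cref{def:mc-nth} concerns only out-of-bounds indices; its consistency---that reads $\seqnth(s,y)$ and $\seqnth(s',y')$ with $\M(s) = \M(s') = a$ and $\M(y) = \M(y') = i$ yield the same value---is handled by the same \rn{Nth-Split}/congruence argument used above, after noting that saturation of \rn{S-A} forces the relevant integer arrangement. Assembling these checks establishes well-definedness of the whole construction. I expect the read-over-write consistency in \Cref{def:mc-modelcon} to be the only genuinely delicate part; everything else is bookkeeping over congruence closure and the saturation hypotheses.
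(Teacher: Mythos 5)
Your decomposition and most of your individual checks coincide with the paper's own proof: the easy definitions are handled by the same congruence-plus-saturation arguments (\rn{A-Conf} for \Cref{def:mc-int-elem}, \rn{S-Prop} for \Cref{def:mc-length}, \rn{U-Eq} and \rn{L-Intro} for \Cref{def:mc-unit}, uniqueness of normal forms for \Cref{def:mc-non-atom}), and your treatment of the crux, \Cref{def:mc-modelcon}, is exactly the paper's: induct along a $\sim_i$-path, refute the out-of-bounds and equal-index branches of \rn{Nth-Update} by index/length reasoning, and use its third branch to transport the read across each update edge. Two small corrections there: \rn{Nth-Split} cannot ``force agreement of the sequence arguments'' in that setting, because two distinct atomic classes connected by the path may legitimately satisfy the disequality branch (their model values are not even defined yet, and weakly equivalent sequences need not be equal); and \rn{Nth-Intro} is not needed, since the initial read term is given and \rn{Nth-Update}'s own conclusion introduces the subsequent read terms. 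Neither remark breaks your plan, because the \rn{Nth-Update} transport you name is the mechanism that actually does the work.

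The genuine gap is in your treatment of \Cref{def:mc-nth}. You dispose of it with ``the same \rn{Nth-Split}/congruence argument used above,'' but no such argument is available: after \rn{S-A} and \rn{A-Prop} give $y\teq y'\in\Sc$ and \rn{Nth-Split} gives either $s\teq s'\in\Sc$ or $s\tneq s'\in\Sc$, you must refute the disequality branch using only the hypothesis $\M(s)=\M(s')$. That requires proving that the partially constructed model respects the sequence disequalities of $\Sc$ --- the paper's \Cref{lem:disequal} --- and this is a substantial piece of work: by saturation of \rn{Deq-Ext}, either the length variables are forced apart, or there is a witness index at which the two sequences store provably distinct elements, and the latter case needs the fact that \emph{in-bounds} reads are correctly reflected by the values assigned in \Cref{def:mc-modelcon,def:mc-non-atom} (the paper's \Cref{lem:nthinrange}, proved via \Cref{lem:seqlen}, \Cref{lem:nthinrange-atomic}, normal forms, and saturation of \rn{Nth-Concat} and \rn{C-Eq}). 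None of this machinery appears in your proposal --- you never invoke \rn{Deq-Ext}, nor argue that the constructed model honors disequalities --- so the well-definedness of \Cref{def:mc-nth}, and with it the lemma, is not established by what you wrote.
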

This lemma is proved in \Cref{sec:proofoflemwd}, below.

By construction, $\M$ is a $\sth$-interpretation, and by \Cref{lem:wd}
it is well-defined. It is left to show that it satisfies $\Ac\cup\Sc$.
The arithmetic constraints are satisfied, by
the fact that $\arithmodel\models\Ac$
(see \Cref{def:mc-fun,def:mc-int-elem}):
\begin{lemma}
\label{lem:arith}
$\M$ satisfies $\Ac$.
\end{lemma}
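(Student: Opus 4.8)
The plan is to transfer the satisfaction $\arithmodel \models \Ac$ enjoyed by the auxiliary interpretation $\arithmodel$ of \Cref{def:mc-int-elem} to $\M$ itself, by showing that $\M$ and $\arithmodel$ agree on everything relevant to arithmetic constraints. First I would recall why $\arithmodel$ exists at all: since $\conf{\Sc,\Ac}$ is saturated and distinct from \unsat, rule $\rn{A-Conf}$ is not applicable, so $\Ac \notent[\lian] \ff$; that is, $\Ac$ is $\lth$-satisfiable, which is exactly what licenses the choice in \Cref{def:mc-int-elem} of an interpretation $\arithmodel$ with $\arithmodel(\sint)=\mathbb{Z}$ and $\arithmodel \models \Ac$.

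Next I would establish the agreement between $\M$ and $\arithmodel$. By \Cref{def:domains} and \Cref{def:mc-fun}, $\M$ is a $\lth$-interpretation with $\M(\sint)=\mathbb{Z}=\arithmodel(\sint)$, so $\M$ and $\arithmodel$ interpret every $\lsig$-symbol (namely $+$, unary $-$, $\leq$, and each numeral) identically, since these interpretations are fixed in any model of $\lth$. By \Cref{def:mc-int-elem}, moreover, $\M(x)=\arithmodel(x)$ for every variable $x$ of sort $\sint$. The key observation is that every constraint in $\Ac$ is an arithmetic constraint whose $\sint$-sorted terms are built solely from integer variables, numerals, and the $\lsig$-symbols: the rules that populate $\Ac$ (for instance $\rn{S-Prop}$, which only propagates $\lsig$-terms, together with the reduction and split rules, which add literals over length variables such as $\ell_x$ and over fresh integer variables) never introduce a genuine sequence operator such as $\seqlen{x}$ into $\Ac$. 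Consequently $\M$ and $\arithmodel$ assign the same value to every $\sint$-term occurring in $\Ac$, and hence the same truth value to each element of $\Ac$ by a simple case analysis on its form (the atomic shapes $s\teq t$, $s\tneq t$, $s\geq t$, $s<t$ and their disjunctions $c_1\lor c_2$). Since $\arithmodel \models \Ac$, I conclude $\M \models \Ac$.

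I do not expect a genuine obstacle here, as the argument is a routine vocabulary-agreement transfer. The only point that requires care is the claim that $\Ac$ contains no sequence-specific symbols, so that $\M$'s truth assignment to $\Ac$ depends only on the arithmetic fragment on which it was explicitly synchronized with $\arithmodel$. This is guaranteed by the form of the rules that add to $\Ac$, together with the flatness invariant of \Cref{assumption}, which ensures that a length term like $\seqlen{x}$ is abstracted by a fresh integer variable $\ell_x$ recorded as the \emph{sequence} constraint $\ell_x\teq\seqlen{x}\in\Sc$, rather than appearing inside an arithmetic constraint.
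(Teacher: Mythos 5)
Your proposal is correct and takes essentially the same route as the paper: the paper's justification is precisely that $\arithmodel \models \Ac$ by \Cref{def:mc-int-elem} (licensed by saturation of \rn{A-Conf}) together with \Cref{def:mc-fun}, so that $\M$ agrees with $\arithmodel$ on all $\lsig$-symbols and integer variables and hence satisfies $\Ac$. Your elaboration of the one point needing care---that $\Ac$ contains only $\lsig$-formulas over integer variables, since the rules and the flattening discipline keep terms like $\seqlen{x}$ out of $\Ac$---is a faithful filling-in of what the paper leaves implicit.
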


Showing that $\M\models \Sc$ is more involved.
Roughly speaking, each possible shape of a sequence constraint is considered separately,
and for each, $\M$ is proved to satisfy constraints of that shape from $\Sc$.
The cases that include $\seqnth$ and $\sequpdate$ terms heavily rely
on the construction of the weak equivalence graph.
Constraints that include concatenation are handled by reasoning about
$\entnf$.

\begin{restatable}{lemma}{msatsc}
\label{lem:MsatSc}
$\M$ satisfies $\Sc$.
\end{restatable}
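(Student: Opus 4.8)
The plan is to verify that $\M \models \ell$ for every (flat) literal $\ell \in \Sc$, by a case analysis on the syntactic form of $\ell$, dispatching the combinatorial base cases first and concentrating the real effort on the $\seqnth$ and $\sequpdate$ literals. A variable equality $x \teq y \in \Sc$ gives $x \sceq{\Sc} y$, hence $[x] = [y]$ and $\M(x) = \M(y)$ by construction; a disequality $w_1 \tneq w_2$ between element variables holds because $w_1 \scneq{\Sc} w_2$ (otherwise \rn{S-Conf} would apply), and distinct element classes receive distinct values by \Cref{def:mc-int-elem}. Integer (dis)equalities appearing in $\Sc$ are transferred to $\Ac$ via \rn{S-Prop} and \rn{S-A} (saturation of \rn{A-Conf} and \rn{S-Conf} rules out the wrong arrangement) and are then satisfied since $\M \models \Ac$ by \Cref{lem:arith}. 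A sequence disequality $x \tneq y$ is handled through saturation of \rn{Deq-Ext}: either $\ell_x \tneq \ell_y \in \Ac$, so $\M(x)$ and $\M(y)$ have different lengths, or some in-bounds index $i$ with $w_1 \teq \seqnth(x,i)$, $w_2 \teq \seqnth(y,i)$, $w_1 \tneq w_2$ witnesses a differing position, reducing to the $\seqnth$ case below. Unit equalities $x \teq \sequnit(y)$ hold directly by \Cref{def:mc-unit}, with \rn{U-Eq} ensuring consistency across the class.

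For a concatenation literal $x \teq \seqconiii{y_1}{\cdots}{y_n} \in \Sc$, I would argue via normal forms. Since $x$ is equated in $\Sc$ to a variable concatenation term, \Cref{def:entnf} gives $\Sc \entnf x \teq \seqconiii{y_1}{\cdots}{y_n}$, and expanding each $y_k$ to its atomic-representative normal form yields, by uniqueness (\Cref{lem:nf-unique}), that the concatenation of the normal forms of $y_1, \ldots, y_n$ is exactly the normal form $\vec{s}$ of $[x]$. By \Cref{def:mc-non-atom}, $\M(x)$ is the concatenation of the $\M$-values of $\vec{s}$, while each $\M(y_k)$ is the concatenation of the $\M$-values of its own normal form; concatenating these componentwise reproduces $\M(x)$. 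The length literals $\ell_x \teq \seqlen{x}$ follow similarly: atomic classes receive length $\M(\ell_x)$ by \Cref{def:mc-length}, and for non-atomic $x$ the length of the concatenation equals the sum of component lengths, which matches $\M(\ell_x)$ by saturation of \rn{L-Intro} and \rn{S-Prop}.

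The substantial cases are $\seqnth$ and $\sequpdate$, which rest on the weak equivalence graph (\Cref{def:mc-weak-equiv,def:weak_equiv_rel}) and \Cref{def:mc-modelcon}. The key invariant I would isolate first is that whenever two atomic classes satisfy $e_1 \sim_i e_2$, the $i$th elements of $\M(e_1)$ and $\M(e_2)$ coincide; this is immediate from \Cref{def:mc-modelcon}, since $e_1$ and $e_2$ lie in the same $\sim_i$-class $E_j^i$ and hence get the same value at position $i$. With this in hand, an in-bounds $x \teq \seqnth(y,i)$ with $[y]$ atomic non-unit is satisfied because case~\ref{item:update-prop} of \Cref{def:mc-modelcon} sets the $\M(i)$th element of $\M([y])$ to $\M(x)$ (well-definedness from \Cref{lem:wd}); the unit and concatenation subcases follow from saturation of \rn{Nth-Unit} and \rn{Nth-Concat} with \Cref{def:mc-unit,def:mc-non-atom}, and out-of-bounds reads are satisfied by \Cref{def:mc-nth}. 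For an in-bounds $x \teq \sequpdate(y,i,z)$ with $[x],[y]$ atomic, the single graph edge between $[x]$ and $[y]$ labeled $\M(i)$ gives $[x] \sim_{i'} [y]$ for every in-bounds $i' \neq \M(i)$, so the invariant forces agreement away from $\M(i)$; at index $\M(i)$, saturation of \rn{Nth-Intro} and \rn{Nth-Update} produces $\seqnth(x,i) \teq z$, whence case~\ref{item:update-prop} of \Cref{def:mc-modelcon} places $\M(z)$ at position $\M(i)$. Out-of-bounds updates yield $x \teq y$ through saturation of \rn{Update-Bound} (the in-bounds branch is excluded since $\M \models \Ac$), so $\M(x) = \M(y)$; the remaining subcases use \rn{Update-Concat}, \rn{Update-Concat-Inv}, and \rn{Update-Unit}.

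The main obstacle is the $\sequpdate$/$\seqnth$ interaction. Making the weak-equivalence argument airtight requires showing that the value propagated along $\sim_i$-paths is simultaneously consistent with every read and every write, and in particular that no two $\seqnth$ terms over $\sim_i$-equivalent classes at the same index demand conflicting element values; this is precisely where saturation of \rn{Nth-Split} and \rn{Nth-Update} is indispensable, and it is the one place where the length-aware adaptation of \cite{DBLP:conf/frocos/ChristH15} must be treated carefully, owing to the out-of-bounds semantics and the need to maintain $\M(\ell_x) = \M(\ell_y)$ across updates (\Cref{lem:eq_len}). Most of this consistency is already packaged into \Cref{lem:wd}, so in this proof I would invoke it and focus on confirming that each individual literal is modeled by the already-constructed $\M$.
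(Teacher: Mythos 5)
Your overall strategy coincides with the paper's: a case analysis on the shape of each flat literal in $\Sc$, leaning on the well-definedness results (\Cref{lem:wd}), the weak-equivalence graph and \Cref{def:mc-modelcon} for the $\seqnth$/$\sequpdate$ cases, and the same rule saturations (\rn{Deq-Ext}, \rn{Nth-Unit}, \rn{Nth-Concat}, \rn{Nth-Intro}, \rn{Nth-Update}, \rn{Update-Bound}, \rn{Update-Concat}, \rn{Update-Concat-Inv}). However, there are two genuine gaps. The first is that your case analysis omits literals of the form $x\teq\seqextract(y,i,j)$. The calculus $\extcal$ retains \rn{R-Extract}, so such literals can occur in a saturated $\Sc$ and must be verified against $\M$; the paper devotes a separate case to them, using saturation of \rn{R-Extract}: either the out-of-bounds branch holds, where $x\teq\seqempty\in\Sc$ and the arithmetic constraints force $\M$ to interpret the extract as the empty sequence, or the in-bounds branch holds, where $y\teq\seqconiii{k}{x}{k'}\in\Sc$ together with the length constraints $\ell_k\teq i$ and $\ell_x\teq\min(j,\ell_y-i)$ pins $\M(x)$ to exactly the extracted subsequence. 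The ingredients are ones you already assembled (the $\seqempty$ and concatenation cases), but the case itself is not optional.

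The second gap is in your concatenation case, where you claim that the concatenation of the normal forms of $y_1,\ldots,y_n$ is the normal form of $x$ ``by uniqueness (\Cref{lem:nf-unique}).'' Uniqueness alone does not give this: to invoke it you must first establish that $\Sc\entnf^{\ast} x\teq u$ where $u$ is that concatenation, i.e., that the component derivations can be inlined into a single derivation for $x$. This is exactly where the difficulty lies, because \Cref{it:entnf-rec} of \Cref{def:entnf} only permits expanding a variable $y$ by a term $t$ with $\Sc\ent y\teq t$ when $t$ is $\seqempty$ or a \emph{non-singular} variable concatenation term, whereas a derivation $\Sc\entnf^{\ast} y_k\teq\vec{u_k}$ may begin with \Cref{it:entnf-con}, using an equation $y_k\teq t\in\Sc$ whose right-hand side is singular; such a step cannot simply be replayed inside the larger derivation for $x$. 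The paper closes this hole with a dedicated detour: it defines the restricted relation $\entnf^{1,3}$, shows every normal-form derivation can be converted into one avoiding \Cref{it:entnf-con} (\Cref{lem:entnf-two-elim}, which itself needs saturation of \rn{C-Eq}), proves a substitution property (\Cref{lem:roe}), and only then derives the composition fact (\Cref{lem:concnormnormconc}) that your sentence takes for granted. Your plan would hit precisely this obstacle when made rigorous, so this machinery (or an equivalent) must be supplied.
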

This lemma will be proved in \Cref{sec:proofofmsatsc}.

We conclude this section by reviewing the earlier example for
the construction of $\M$ with a few more details.

\begin{example}
\label{ex:extcalmc}
Consider a signature in which $\selem$ is $\sint$, and
a saturated configuration 
$\conf{\Sc^{\ast}, \Ac^{\ast}}$ 
w.r.t.
$\extcal$
that includes the following formulas:
    $y\teq \seqconii{y_1}{y_2}$,
    $x\teq \seqconii{x_1}{x_2}$,
    $y_2\teq x_2$,
    $y_1\teq\sequpdate(x_1, i, a)$,
    $\seqlen{y_1}=\seqlen{x_1}$,
    $\seqlen{y_2}=\seqlen{x_2}$,
    $\seqnth(y, i) \teq a$,
    $\seqnth(y_1, i)\teq a$.
    %
    Following the above construction, a satisfying interpretation $\M$ can be built as follows:
    \begin{description}
        \item[\Cref{def:domains}] Set both $\M(\sint)$ and $\M(\selem)$ to be the set of integer numbers. $\M(\sseq)$ is fixed by the theory.
        \item[\Cref{def:mc-int-elem}] First, find an arithmetic model,
        $\M(\ell_x)=\M(\ell_y)=4, \M(\ell_{y_1})=\M(\ell_{x_1})= 2, \M(\ell_{y_2})=\M(\ell_{x_2})=2, \M(i) = 0$.
        Further, set
        $\M(a)=0$.
        \item[\Cref{def:mc-length}] Start assigning values to sequences. First, set the lengths of
        $\M(x)$ and $\M(y)$ to be $4$, and the lengths of
        $\M(x_1),\M(x_2),\M(y_1),\M(y_2)$ to be $2$.
        \item[\Cref{def:mc-modelcon}] \Cref{def:mc-unit} is skipped as there are no $\sequnit$ terms.
        Next, according to \Cref{item:update-prop}, the $0$th element of $\M(y_1)$ is set to $0$ ($y_1$ is atomic, $y$ is not.).
        According to \Cref{item:update-fresh}, assign fresh values to the remaining indices of atomic variables.
        The result can be, e.g.,
        $\M(y_1)=[0, 2], \M(x_1)=[1, 2], \M(y_2)=\M(x_2)=[3, 4]$. 
        \item[\Cref{def:mc-non-atom}] Assign non-atomic sequence variables
          based on equivalent concatenations:
        $\M(y)=[0, 2, 3, 4], \M(x)=[1, 2, 3, 4]$.
        \item[\Cref{def:mc-nth}] No integer variable in the formula was assigned an out-of-bound value, and so
        the interpretation of $\seqnth$ on out-of-bounds cases
        is set arbitrarily. 
    \end{description}
\end{example}

\subsection{Proof of \Cref{lem:wd}}
\label{sec:proofoflemwd}

\lemwd*

Note that \Cref{def:domains,def:mc-fun} are trivially well-defined.
We now go through the remaining definitions.

\begin{lemma}
\Cref{def:mc-int-elem}
is well-defined.
\end{lemma}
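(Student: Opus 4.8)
The plan is to check the two clauses of \Cref{def:mc-int-elem} separately, since they assign values to disjoint sets of variables (those of sort $\sint$ and those of sort $\selem$, respectively) and each has its own possible source of ambiguity. For the integer variables I must confirm that a suitable arithmetic interpretation $\arithmodel$ exists; for the element variables I must confirm that the enumeration referred to is legitimate and that every element variable receives exactly one value.

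For the first clause, I would derive the existence of $\arithmodel$ from saturation. Since $\conf{\Sc,\Ac}$ is saturated, the rule $\rn{A-Conf}$ does not apply, so its premise fails and $\Ac\notent[\lian]\ff$. By the meaning of $\ent[\lian]$, this says precisely that some $\lth$-interpretation satisfies $\Ac$. Picking any such interpretation yields $\arithmodel$, and every $\lth$-interpretation has domain $\mathbb{Z}$ for $\sint$ by the definition of $\lth$, matching the requirement $\arithmodel(\sint)=\mathbb{Z}$. Hence $\M(x):=\arithmodel(x)$ is a well-defined assignment on integer variables.

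For the second clause, the key facts are that the equivalence classes partition the element variables and that $\M(\selem)$ has enough elements to enumerate them. By \Cref{lem:equiv_class_nonempty}, $\sceq{\Sc}$ is an equivalence relation, so its classes are pairwise disjoint; thus every element variable lies in exactly one class $e_i$, making $\M(x):=a_i$ unambiguous. Since $\ter{\Sc}$ is finite there are only finitely many classes, so the enumeration $e_1,\dots,e_n$ is well-formed; and since $\M(\selem)$ is countably infinite by \Cref{def:domains}, the enumeration $a_1,a_2,\dots$ exists and in particular provides the distinct elements $a_1,\dots,a_n$ needed.

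I do not expect a genuine obstacle here, as the argument is essentially bookkeeping. The only steps that truly use the surrounding hypotheses are the existence of $\arithmodel$, which hinges on saturation with respect to $\rn{A-Conf}$, and the appeal to the infinitude of $\M(\selem)$ (via \Cref{def:domains}) to guarantee that distinct classes can be given distinct element values.
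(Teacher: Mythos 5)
Your proposal is correct and follows essentially the same route as the paper's proof: the first clause is justified by saturation w.r.t.\ \rn{A-Conf} guaranteeing that $\Ac$ is $\lth$-satisfiable, and the second by the countably infinite choice of $\M(\selem)$ in \Cref{def:domains} supplying the enumeration. Your additional bookkeeping (unpacking why saturation forces $\Ac\notent[\lian]\ff$, and noting that the $\sceq{\Sc}$-classes partition the element variables into finitely many classes) only makes explicit what the paper leaves implicit.
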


\begin{proof}
To show that 
\Cref{item:mc-arith} is well-defined, it suffices to note that by saturation of $\rn{A-Conf}$, $\Ac$ is $\lth$-satisfiable. To show that \Cref{item:mc-elem} is well-defined, it suffices to establish an infinite
enumeration $a_1,a_2,\ldots$ of $\M(\selem)$, which is guaranteed to exist
due to \Cref{def:domains}.
\end{proof}

\begin{lemma}
\label{lem:mc-length}
\Cref{def:mc-length} is well-defined.
\end{lemma}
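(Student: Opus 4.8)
The plan is to check the two conditions that make the length prescribed in \Cref{def:mc-length} unambiguous. First, an atomic equivalence class $e$ is nonempty by \Cref{lem:equiv_class_nonempty} but may contain several variables, so I must show that the value $\M(\ell_x)$ is independent of the chosen representative $x \in e$. Second, I must show that this common value is a non-negative integer, so that a sequence of that length genuinely exists in the domain $\M(\sseq)$ of finite sequences fixed in \Cref{def:domains}.

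For independence, take any two variables $x, x' \in e$. Since $x \sceq{\Sc} x'$ we have $\Sc \ent x \teq x'$, hence $\Sc \ent \seqlen{x} \teq \seqlen{x'}$ by congruence. By \Cref{assumption} both $\ell_x \teq \seqlen{x}$ and $\ell_{x'} \teq \seqlen{x'}$ lie in $\Sc$, so $\Sc \ent \ell_x \teq \ell_{x'}$. As $\ell_x$ and $\ell_{x'}$ are integer variables, saturation of $\rn{S-Prop}$ puts $\ell_x \teq \ell_{x'}$ into $\Ac$, and since $\arithmodel \models \Ac$ (\Cref{def:mc-int-elem}) we conclude $\M(\ell_x) = \arithmodel(\ell_x) = \arithmodel(\ell_{x'}) = \M(\ell_{x'})$.

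For non-negativity I use saturation of $\rn{L-Valid}$ on $x$: either $\ell_x > 0 \in \Ac$, giving $\arithmodel(\ell_x) > 0$ at once, or $x \teq \seqempty \in \Sc$. In the second, more delicate case $\seqempty \in \ter{\Sc}$, so saturation of $\rn{L-Intro}$ yields $\seqlen{\seqempty} \teq \nf{(\seqlen{\seqempty})}$ in $\Sc$ with $\nf{(\seqlen{\seqempty})} = 0$ by the rewrite rule $\seqlen{\seqempty} \to 0$; combining with $\ell_x \teq \seqlen{x}$ and $\Sc \ent \seqlen{x} \teq \seqlen{\seqempty}$ (congruence) gives $\Sc \ent \ell_x \teq 0$, and $\rn{S-Prop}$ then forces $\ell_x \teq 0 \in \Ac$, i.e. $\arithmodel(\ell_x) = 0$. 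In both branches $\M(\ell_x) \ge 0$ is a non-negative integer, so together with the independence step the length constrained onto $\M(e)$ is well-defined. The only real care is needed in this empty-sequence branch, where the value must be chased through the reduced form $\nf{(\seqlen{\seqempty})}$ and the saturation hypotheses rather than by appealing to the semantics of $\sth$ directly, since $\arithmodel$ satisfies only $\Ac$ and not the full theory.
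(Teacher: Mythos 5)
Your proof is correct, and its core---the independence of $\M(\ell_x)$ from the choice of representative $x \in e$---is exactly the paper's argument: congruence gives $\Sc \ent \seqlen{x} \teq \seqlen{y}$, \Cref{assumption} converts this to $\Sc \ent \ell_x \teq \ell_y$, saturation of \rn{S-Prop} pushes the equality into $\Ac$, and \Cref{def:mc-int-elem} then forces equal values. Where you go beyond the paper is the second check, that the common value is a non-negative integer so that a sequence of that length exists in $\M(\sseq)$; the paper's proof of this lemma silently omits this, although the fact itself is established by essentially your argument elsewhere (the empty-sequence branch you handle via \rn{L-Intro}, \rn{S-Prop}, and the rewrite $\seqlen{\seqempty} \to 0$ is reproduced almost verbatim in the paper's later \Cref{lem:seqemptylen}). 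One small simplification you could have made: \Cref{def:mc-length} applies only to \emph{atomic} equivalence classes, and atomicity of $x$ means $\Sc \notent x \teq \seqempty$ by \Cref{def:atomic}, so the delicate $x \teq \seqempty$ branch of \rn{L-Valid} can never arise here and saturation immediately yields $\ell_x > 0 \in \Ac$; your case analysis is harmless but the second case is vacuous in this setting.
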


\begin{proof}
Let $x,y\in e$. We show that $\M(\ell_{x})=\M(\ell_{y})$.
Since $x\sceq{\Sc}y$, we have
$\Sc\ent \seqlen{x}\teq\seqlen{y}$.
By \Cref{assumption}, we also have
$\Sc\ent \ell_{x}\teq \ell_{y}$.
By saturation of $\rn{S-Prop}$, we have
$\ell_{x}\teq\ell_{y}\in\Ac$, and hence,
by \Cref{def:mc-int-elem},
$\M(\ell_{x})=\M(\ell_{y})$.
\end{proof}

\begin{lemma}
\label{lem:unit-wd}
\Cref{def:mc-unit} is well-defined.
\end{lemma}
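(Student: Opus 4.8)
The plan is to verify the two ways in which \Cref{def:mc-unit} could fail to specify a consistent value: its interaction with the length already fixed in \Cref{def:mc-length}, and its dependence on the particular witness $y$ (and on the particular variable $x$ in the class). So I would fix an atomic variable $x$ with $x \sceq{\Sc} \sequnit(y)$ and establish (i) that $\M(\ell_x) = 1$, so that prescribing $\M([x])$ to be a sequence of length $1$ agrees with the length that \Cref{def:mc-length} has already forced on the atomic class $[x]$, and (ii) that the single element of $\M([x])$, namely $\M(y)$, is independent of all the choices involved.

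For (i), I would use congruence to get $\Sc \ent \seqlen{x} \teq \seqlen{\sequnit(y)}$ from $x \sceq{\Sc} \sequnit(y)$, and then invoke saturation of $\rn{L-Intro}$ together with the rewrite rule $\seqlen{\sequnit(t)} \to 1$ of \Cref{fig:nf} to obtain $\Sc \ent \seqlen{\sequnit(y)} \teq 1$. Combining these with the constraint $\ell_x \teq \seqlen{x}$ guaranteed by \Cref{assumption} gives $\Sc \ent \ell_x \teq 1$; saturation of $\rn{S-Prop}$ then places $\ell_x \teq 1$ in $\Ac$, and since $\arithmodel \models \Ac$ and $\M(\ell_x) = \arithmodel(\ell_x)$ by \Cref{def:mc-int-elem}, I conclude $\M(\ell_x) = 1$.

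For (ii), the only genuine content is showing that distinct witnesses produce the same element. If $x \sceq{\Sc} \sequnit(y)$ and $x' \sceq{\Sc} \sequnit(y')$ with $x \sceq{\Sc} x'$ (in particular when $x = x'$), then $\Sc \ent \sequnit(y) \teq \sequnit(y')$. Here I would appeal to saturation of $\rn{U-Eq}$, which forces $y \teq y' \in \Sc$ and hence $y \sceq{\Sc} y'$; since $y$ and $y'$ then lie in the same $\selem$-equivalence class, \Cref{def:mc-int-elem}, part \ref{item:mc-elem} assigns them the same value, so $\M(y) = \M(y')$. This is the same style of argument already used to establish \Cref{lem:mc-length}.

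I do not expect a serious obstacle: the lemma is routine once the right saturation hypotheses are pinned down. The one point requiring a little care is (i)---ensuring the prescribed length is not merely internally consistent but actually equals the length $\M(\ell_x)$ forced by \Cref{def:mc-length}---since overlooking it would leave \Cref{def:mc-unit} and \Cref{def:mc-length} potentially in conflict. Everything else reduces to the injectivity of $\sequnit$ captured by $\rn{U-Eq}$.
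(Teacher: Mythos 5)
Your proposal is correct and follows essentially the same route as the paper's own proof: part (i) matches the paper's derivation of $\M(\ell_x)=1$ via $\rn{L-Intro}$, \Cref{assumption}, $\rn{S-Prop}$, and \Cref{def:mc-int-elem}, and part (ii) matches the paper's use of saturation of $\rn{U-Eq}$ to identify the element values of distinct $\sequnit$ witnesses in the same equivalence class. The only difference is cosmetic---the paper phrases (ii) as $\sequnit(y),\sequnit(z)\in[x]$ rather than as two variables $x\sceq{\Sc}x'$ with unit witnesses---so nothing further is needed.
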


\begin{proof}
We first show that $\ell_{[x]}$ is 1.  To see this, note first that by saturation of $\rn{L-Intro}$ (and \Cref{fig:nf}), we have $\Sc\ent \seqlen{\sequnit(y)} \teq 1$.  We also have $\ell_x \teq \seqlen{x} \in \Sc$ by \Cref{assumption}.  It follows that $\Sc \ent \ell_x \teq 1$, so $\ell_x \teq 1\in\Ac$ by saturation of $\rn{S-Prop}$.  Thus, we must have $\arithmodel(\ell_x)=1$ in \Cref{def:mc-int-elem}, and thus $\M(\ell_x)=1$.  By \Cref{def:mc-length}, we then have that the length of $\M([x])$ must be 1.

Next, suppose $\sequnit(y)$, $\sequnit(z)\in [x]$.
We prove $\M(y)=\M(z)$.
Since $\Sc\ent \sequnit(y)\teq \sequnit(z)$, 
we must have $y\teq z\in\Sc$, by saturation of $\rn{U-Eq}$.
Hence, $y\sceq{\Sc}z$ and so by \Cref{def:mc-int-elem},
we have $\M(y)=\M(z)$.
\end{proof}

\begin{lemma}
\Cref{def:mc-modelcon} is well-defined.
\end{lemma}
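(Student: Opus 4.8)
The statement has three obligations. First, I would argue that for a fixed atomic, non-unit class $e$ and index $i\in[0,\ell_e)$, the index $j$ with $e\in E_j^i$ is uniquely determined: by \Cref{lem:simi-eq} the relation $\sim_i$ is an equivalence relation, so the classes $E_1^i,\dots,E_{n_i}^i$ partition the atomic classes, and existence of such a $j$ is supplied by \Cref{lem:eqcsimi}. Second, the Case~2 value $a_j^i$ then depends only on the (now determined) pair $(j,i)$, and the two cases are exhaustive and mutually exclusive because Case~2 is literally the negation of the Case~1 guard; since both $\M(x)$ and $a_j^i$ lie in $\M(\selem)$, the value produced is legal. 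All the real content is in the third obligation: showing that the Case~1 value is independent of the chosen witness tuple $(e',s,x,y)$.

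For that I would reduce to an entailment statement. As $x$ is an element variable, \Cref{def:mc-int-elem} gives $\M(x)=\M(x')$ exactly when $x\sceq{\Sc}x'$, so it suffices to prove the helper claim: if $s_1\in e_1'$, $s_2\in e_2'$ lie in $\sim_i$-connected atomic classes $e_1',e_2'\in E_j^i$, with $x_1\sceq{\Sc}\seqnth(s_1,y_1)$, $x_2\sceq{\Sc}\seqnth(s_2,y_2)$ and $\M(y_1)=\M(y_2)=i$, then $\Sc\ent \seqnth(s_1,y_1)\teq\seqnth(s_2,y_2)$, whence $x_1\sceq{\Sc}x_2$. The plan is induction on the length of a $\sim_i$-witnessing path between $e_1'$ and $e_2'$. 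In the base case (same class) I would first upgrade the model equality $\M(y_1)=\M(y_2)$ to $y_1\teq y_2\in\Ac$ using saturation of \rn{S-A} on the shared index variables (the disequality branch is excluded since $\M\models\Ac$), then to $y_1\teq y_2\in\Sc$ via \rn{A-Prop}; combined with $\Sc\ent s_1\teq s_2$, congruence closure gives the two reads equal.

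For the inductive step I would invoke read-over-write. An edge of the witnessing path between classes $[z]$ and $[y]$ carries, by \Cref{def:mc-weak-equiv,def:weak_equiv_rel}, an update $y\teq\sequpdate(z,k,v)\in\Sc$ (up to symmetry) with a label $k$ satisfying $\M(k)\neq i$. Given a read $\seqnth(z,y_1)$ already present on $[z]$, saturation of \rn{Nth-Update} (with the out-of-bounds branch excluded since $0\le i<\ell$, and the index-equal branch excluded because $\M(k)\neq i=\M(y_1)$, again through \rn{S-A}) yields $\seqnth(y,y_1)\teq\seqnth(z,y_1)\in\Sc$; this both creates a read on the neighbouring class $[y]$ and propagates the value unchanged, keeping the index variable $y_1$ fixed. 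Chaining these equalities along the whole path, then closing with the base-case argument at $e_2'$, establishes the helper claim.

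The main obstacle is exactly this inductive propagation: one must check that in a saturated configuration all the required \rn{Nth-Update} instances have in fact fired (the reads they place on intermediate classes are not present a priori but are generated by the rule itself as the value is pushed along the path), that each instance lands in the intended read-over-write branch rather than the out-of-bounds or index-equal branch, and that the index equalities among variables sharing the model value $i$ are available as $\Sc$-entailments through \rn{S-A}, \rn{A-Prop}, and \rn{Nth-Split}. Making this chaining precise, including the interaction with \rn{Nth-Intro} which seeds reads at update indices, is the delicate part; the uniqueness of $j$ and the legality of the assigned value are routine by comparison.
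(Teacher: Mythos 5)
Your proposal is correct and follows essentially the same route as the paper's proof: reduce well-definedness to showing the two $\seqnth$ witnesses are $\sceq{\Sc}$-equivalent, lift the index equality from $\M$ to $\Sc$ via saturation of \rn{S-A} and \rn{A-Prop}, and then induct along the $\sim_i$-witnessing path, using saturation of \rn{Nth-Update} at each edge with exactly your branch analysis (out-of-bounds excluded by $i\in[0,\ell_e)$ and equal lengths along the path, index-equal excluded by the label condition $\M(k)\neq i$), so that the rule's third conclusion both generates the read on the next class and propagates the equality. The ``delicate part'' you flag --- that intermediate reads are created by the rule's own conclusions under saturation --- is precisely what the paper's induction hypothesis $\seqnth(s',y')\sceq{\Sc}\seqnth(s'_k,y')$ handles, so your plan fills in as intended.
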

\begin{proof}
Let $e$ be as in \Cref{def:mc-modelcon}.
Suppose there are $e',e''\in E_j^i$, with $s'\in e'$, $s''\in e''$, and $x', y', x'', y''$ such that 
$x'\sceq{\Sc}\seqnth(s',y')$ and 
$x''\sceq{\Sc}\seqnth(s'',y'')$ and
$\M(y')=\M(y'')=i$, with $i\in[0,\ell_e)$.
We prove that $\M(x')=\M(x'')$.

We first show that $y'\teq y''\in\Sc$.  Clearly $y',y''\in\ter{\Sc}$.  Also, $y',y''\in\ter{\Ac}$ since $y'\teq y'\in\Ac$ and $y''\teq y''\in\Ac$ by saturation of $\rn{S-Prop}$.  It follows by saturation of $\rn{S-A}$ that either $y'\teq y''\in\Ac$ or $y'\tneq y''\in\Ac$.
Since $\M(y')=\M(y'')$, the latter cannot hold, and so the former holds.
Then, by saturation of $\rn{A-Prop}$, we have $y'\teq y''\in\Sc$ as well.

Now, notice that $e'\sim_i e''$.
Hence, there exist $e_1,\ldots, e_n$ such that
$e_1=e'$, $e_n=e''$ and for $k\in[1,n)$, $e_k$ and $e_{k+1}$
are connected by an edge $d_k$ in $G$ where $\delta(d_k)\setminus \{i\} \neq \emptyset$.
For every such $k$, we have that
$s_k\!\teq\!\sequpdate(s_k', y_{k}, z_{k})\in\Sc$ or
$s_k'\!\teq\!\sequpdate(s_{k}, y_{k}, z_{k})\in\Sc$
for some $s_k\in e_k$, $s_k'\in e_{k+1}$, integer variable $y_k$, and $\selem$-variable $z_k$.
By \Cref{lem:eq_len}, $\M(\ell_{s_k})=\M(\ell_{s'_k})$.  And by
\Cref{def:mc-length}, $\ell_{e_k} = \M(\ell_{s_k})$ and $\ell_{e_{k+1}} = \M(\ell_{s'_k})$.  It follows that
$\ell_{e'}=\ell_{e_1}=\ldots=\ell_{e_n}=\ell_{e''}$.  By a similar argument, because $e \sim_i e'$, we have $\ell_{e'}=\ell_e$.
We also have:
$(\ast)$~$\M(y_{k})\neq i$ and
$(\ast\ast)$~$\M(y_k)\in[0,\ell_e)$.
Define $s_0'$ to be an alias for $s'$ and then notice that for $k\in[1,n)$, $s_k\sceq{\Sc} s_{k-1}'$ because $s_k, s_{k-1}'$ are both in $e_k$.

We prove by induction that for $k\in[0,n)$, $\seqnth(s',y')\sceq{\Sc}\seqnth(s'_k,y')$. For the base case, we simply note that $\seqnth(s',y')$ and $\seqnth(s'_0,y')$ are identical and $\seqnth(s',y')\in\ter{\Sc}$.
For the induction step, suppose that $\seqnth(s',y')\sceq{\Sc}\seqnth(s'_k,y')$, where $k\in[0,n-1)$.
This implies $\seqnth(s'_k,y')\in\ter{\Sc}$, and we also know $s'_k \sceq{\Sc}s_{k+1}$.  Recalling that $s_{k+1}\teq\sequpdate(s_{k+1}', y_{k+1}, z_{k+1})\in\Sc$ or $s_{k+1}'\teq\sequpdate(s_{k+1}, y_{k+1}, z_{k+1})\in\Sc$, we see that the premises of $\rn{Nth-Update}$ are satisfied.
By saturation of $\rn{Nth-Update}$, then, there are three possibilities.
\begin{enumerate}
\item In the first case, $\Ac\ent y'<0 \vee y'\ge \ell_{s'_k}$.  
We know from \Cref{def:mc-int-elem} that $\M(y')=\arithmodel(y')$ and $\M(\ell_{s'_k})=\arithmodel(\ell_{s'_k})$, so we must have $\M(y')<0$ or $\M(y') \ge \M(\ell_{s'_k})$.  But $\M(y')=i$ and $i\in[0,\ell_e)$, and we know that $\M(\ell_{s'_k})=\ell_{e_{k+1}}=\ell_e$, so this case is not possible.
\item In the second case, $\Ac\ent y' \teq y_{k+1}$.  This is also not possible because we know that $\arithmodel(y')=\M(y')=i\neq\M(y_{k+1})=\arithmodel(y_{k+1})$.
\item We are thus left with the third option,
in which $\seqnth(s_{k+1},y') \sceq{\Sc} \seqnth(s'_{k+1},y')$. But we know that $s'_k\sceq{\Sc} s_{k+1}$, so we also have $\seqnth(s'_k,y') \sceq{\Sc} \seqnth(s'_{k+1},y')$.  Then, by the induction hypothesis, $\seqnth(s',y') \sceq{\Sc} \seqnth(s'_{k+1},y')$, which completes the induction proof.
\end{enumerate}
Letting $k=n-1$, we obtain $\seqnth(s',y')\sceq{\Sc}\seqnth(s'_{n-1},y')$.  But $s'_{n-1}\in e_n$ and $e_n = e''$, so $s'_{n-1}\sceq{\Sc} s''$ and $\seqnth(s',y')\sceq{\Sc}\seqnth(s'',y')$.  Finally, since we showed above that $y'\teq y''\in\Sc$, we have $\seqnth(s',y')\sceq{\Sc}\seqnth(s'',y'')$, and thus 
$x'\sceq{\Sc}x''$, which means that $\M(x')=\M(x'')$
by \Cref{def:mc-int-elem}.
\end{proof}

\begin{lemma}
\Cref{def:mc-non-atom} is well-defined.
\end{lemma}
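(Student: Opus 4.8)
The plan is to reduce well-definedness entirely to \Cref{lem:nf-unique} together with the coverage provided by the earlier model-construction steps. First I would invoke \Cref{lem:nf-unique}: since $\conf{\Sc,\Ac}$ is saturated with respect to all the rules required by that lemma, every sequence equivalence class $e$ has a unique normal form $\vec{y}$, and for every $x \in e$ we have $\Sc \entnf^\ast x \teq \vec{y}$. This already yields representative-independence, because the normal form referenced in \Cref{def:mc-non-atom} depends only on the class $e = [x]$ and not on the chosen $x$; hence $\M([x])$ is a function of $e$ alone.

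Next I would check that every component of $\vec{y}$ has already received a value. By the definition of $\entnf^\ast$ (\Cref{def:entnf}), each element of $\vec{y}$ is an atomic representative, hence an atomic variable lying in an atomic equivalence class. Every atomic equivalence class is either a unit class, whose value was fixed by \Cref{def:mc-unit}, or a non-unit class, handled by \Cref{def:mc-modelcon}; these two cases jointly cover all atomic variables, so $\M$ is defined on each component of $\vec{y}$. I would also record the disjointness observation stated earlier---an equivalence class containing an atomic variable consists entirely of atomic variables---to conclude that the class $[x]$ of a non-atomic variable is itself non-atomic, so $\M([x])$ was not previously assigned by \Cref{def:mc-length,def:mc-unit,def:mc-modelcon} and is set exactly once here.

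Finally, with each component of $\vec{y}$ carrying a well-defined sequence value, the concatenation of these finitely many values is a well-defined element of $\M(\sseq)$ (taking the empty sequence when $\vec{y}$ has length $0$), so $\M([x])$ is well-defined. The only step needing genuine care, rather than unwinding definitions, is the coverage argument: one must be sure that \emph{every} atomic representative occurring in a normal form has already been assigned a value by \Cref{def:mc-unit} or \Cref{def:mc-modelcon}, and that no circularity arises from a non-atomic variable depending on its own value. This is guaranteed because normal forms contain only atomic representatives while a non-atomic variable is never an atomic representative, so the dependency runs strictly from non-atomic classes to atomic ones.
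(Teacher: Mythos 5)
Your proposal is correct and follows essentially the same route as the paper's proof: existence and uniqueness of the normal form come from \Cref{lem:nf-unique}, and well-definedness then follows because every component of the normal form is an atomic representative whose value was already fixed by \Cref{def:mc-unit} or \Cref{def:mc-modelcon}. The paper states this much more tersely; your added remarks on coverage of atomic classes and absence of circular dependencies are correct elaborations of the same argument rather than a different approach.
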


\begin{proof}
$\vec{y}$ exists and is unique by \Cref{lem:nf-unique}.
If $\vec{y}$ is a variable or a variable-concatenation term, then uniqueness guarantees well-definedness.
Further, each variable that occurs in $\vec{y}$ is atomic by \Cref{lem:nf-unique}, and hence its value
in $\M$ was already defined in \Cref{def:mc-unit,def:mc-modelcon}.
\end{proof}

We prove that \Cref{def:mc-nth} is well-defined, but first we prove some helper lemmas.
Recall that we write $\Cc\ent[\lian]\varphi$ to denote that every model of $\lth$ satisfying $\Cc$ also satisfies $\varphi$.
Intuitively, if $\varphi$ can be derived from $\Cc$ using arithmetic reasoning, then $\Cc\ent[\lian]\varphi$.

\begin{lemma}
\label{lem:entnf-len}
If $\Sc\entnf x\teq\vec{z}$ and $\vec{z}$ is of size $n$, then $\Ac\ent[\lian] \ell_{x}=\Sigma_{i=1}^n\ell_{z_i}$.
\end{lemma}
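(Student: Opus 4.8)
The plan is to prove the claim by induction on the number $k$ of steps in the derivation of $\Sc\entnf x\teq\vec{z}$ according to the inductive definition in \Cref{def:entnf}; by the argument already used in the proof of \Cref{lem:entnf-term}, this number is finite, so the induction is well-founded. Throughout, I would use that $\conf{\Sc,\Ac}$ is saturated, in particular w.r.t. $\rn{L-Intro}$ and $\rn{S-Prop}$, together with \Cref{assumption}, which supplies a length variable $\ell_u$ with $\ell_u\teq\seqlen{u}\in\Sc$ for every sequence variable $u$. The single recurring step is the following propagation move: whenever $\Sc\ent\seqlen{s}\teq c$ for a sum $c$ of $\seqlen{}$-subterms, saturation of $\rn{L-Intro}$ (whose rules in \Cref{fig:nf} rewrite $\seqlen{\seqempty}$ to $0$ and $\seqlen{\seqconiii{s_1}{\cdots}{s_m}}$ to $\seqlen{s_1}+\cdots+\seqlen{s_m}$), the flat form guaranteed by \Cref{assumption}, and saturation of $\rn{S-Prop}$ together push the corresponding equalities between $\ell$-variables into $\Ac$, yielding the desired $\ent[\lian]$ entailment.

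For the base cases I would argue as follows. If the derivation is by \Cref{it:entnf-ref}, then $\vec{z}=(x)$ and $\Ac\ent[\lian]\ell_x=\ell_x$ is immediate. If it is by \Cref{it:entnf-con}, then $x\teq t\in\Sc$ for a variable concatenation term $t=\seqconiii{z_1}{\cdots}{z_n}$; congruence gives $\Sc\ent\seqlen{x}\teq\seqlen{t}$, and the propagation move above (applied to $\seqlen{t}$ via the concatenation rewrite rule) gives $\Ac\ent[\lian]\ell_x=\Sigma_{i=1}^n\ell_{z_i}$.

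For the inductive step, the last step of the derivation is an instance of \Cref{it:entnf-rec}: there is a shorter derivation of $\Sc\entnf x\teq\nf{(\seqconiii{\vec{w}}{y}{\vec{z}'})}$, and $\Sc\ent y\teq t$ with $t$ either $\seqempty$ or a non-singular variable concatenation term, and $\vec{z}=\nf{(\seqconiii{\vec{w}}{t}{\vec{z}'})}$. The induction hypothesis gives $\Ac\ent[\lian]\ell_x$ equal to the sum of the $\ell$-variables of the components of $\nf{(\seqconiii{\vec{w}}{y}{\vec{z}'})}$ (those of $\vec{w}$, then $\ell_y$, then those of $\vec{z}'$). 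If $t=\seqempty$, the propagation move gives $\Ac\ent[\lian]\ell_y=0$, and since $\nf$ only deletes the empty component the sum is unchanged. If $t=\seqconiii{t_1}{\cdots}{t_m}$, the move gives $\Ac\ent[\lian]\ell_y=\Sigma_{j=1}^m\ell_{t_j}$, and since $\nf$ merely flattens $t$ into $t_1,\ldots,t_m$, replacing $\ell_y$ by $\Sigma_j\ell_{t_j}$ leaves the sum arithmetically unchanged. Either way, $\Ac\ent[\lian]\ell_x=\Sigma_{i=1}^n\ell_{z_i}$.

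The part I expect to require the most care—though it is bookkeeping rather than genuine difficulty—is isolating the invariant that the reduced form $\nf{}$ preserves the arithmetic sum of the component lengths: flattening nested concatenations leaves the multiset of variable components unchanged, and deleting an $\seqempty$ only removes a component whose length is provably $0$. Once this invariant is stated cleanly, each inductive case collapses to one application of the $\Sc$-to-$\Ac$ propagation move, and the lemma follows.
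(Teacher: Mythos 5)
Your proposal is correct and follows essentially the same route as the paper's proof: structural induction on the $\entnf$ derivation, with the two base cases (\Cref{it:entnf-ref} and \Cref{it:entnf-con}) handled by congruence plus the $\rn{L-Intro}$/\Cref{assumption}/$\rn{S-Prop}$ propagation into $\Ac$, and the recursive case (\Cref{it:entnf-rec}) split into the $\seqempty$ and concatenation subcases exactly as in the paper. Your explicit isolation of the invariant that $\nf{}$ preserves the sum of component lengths is a point the paper handles only implicitly (e.g., via its ``assuming wlog'' remark about flattening), but it is the same argument.
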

\begin{proof}
The proof is by structural induction using \Cref{def:entnf}.  
For \Cref{it:entnf-ref}, clearly $\Ac\ent\ell_x\teq \ell_x$.
For \Cref{it:entnf-con}, 
we have $\Sc\entnf x\teq t$ for some 
variable concatenation term $t=\seqconiii{t_1}{\cdots}{t_n}$
such that
$\Sc\ent x\teq t$.
Therefore, $\Sc\ent \seqlen{x}\teq\seqlen{t}$. 
By saturation of $\rn{L-Intro}$,
$\Sc\ent\seqlen{x}\teq\Sigma_{i=1}^{n}\seqlen{t_i}$,
and using
\Cref{assumption}, we get
$\Sc\ent\ell_x\teq\Sigma_{i=1}^{n}\ell_{t_i}$.
By saturation of $\rn{S-Prop}$,
$\Ac\ent\ell_x\teq\Sigma_{i=1}^{n}\ell_{t_i}$.

Now suppose that $\Sc\entnf x\teq \nf{(\seqconiii{\vec{w}}{y}{\vec{z}})}$ and $\Sc\ent y\teq t$, where $t$ is $\seqempty$ or a variable concatenation term in $\ter{S}$.  Let $\vec{w}=(w_1,\dots,w_m)$ and $\vec{z}=(z_1,\dots,z_n)$, with $m,n\ge0$.  By the induction hypothesis, we have that $\Ac\ent[\lian]\ell_{x}\teq\Sigma_{i=1}^m \ell_{w_i} + \ell_{y} + \Sigma_{i=1}^n \ell_{z_i}$.  We consider two cases. (1) $t=\seqempty$: in this case, $\Sc\ent\seqlen{y}\teq\seqlen{\seqempty}$.  Also, $\Sc\ent\seqlen{\seqempty}=0$ by saturation of $\rn{L-Intro}$, so $\Sc\ent\seqlen{y}=0$, and thus $\Ac\ent\ell_{y}=0$ by saturation of $\rn{S-Prop}$ and \Cref{assumption}.  It follows that $\Ac\ent[\lian]\ell_{x}\teq\Sigma_{i=1}^m\ell_{w_i} + \Sigma_{i=1}^n\ell_{z_i}$. (2) $t$ is a variable concatenation term in $\ter{\Sc}$: let $t=\seqconiii{t_1}{\dots}{t_k}$.  We have $\Sc\ent\seqlen{y}\teq\seqlen{t}$.  Also, $\Sc\ent\seqlen{t}=\Sigma_{i=1}^{k}\seqlen{t_i}$ by saturation of $\rn{L-Intro}$.  By saturation of $\rn{S-Prop}$ and \Cref{assumption} (and assuming wlog that $\Sigma_{i=1}^k\ell_{t_i}$ is the result of flattening $\Sigma_{i=1}^{k}\seqlen{t_i}$), it follows that $\Ac\ent\ell_{y}=\Sigma_{i=1}^{k}\ell_{t_i}$.  Thus, $\Ac\ent[\lian]\ell_{x} \teq \Sigma_{i=1}^m\ell_{w_i} + \Sigma_{i=1}^k\ell_{t_i} + \Sigma_{i=1}^n \ell_{z_i}$.
\end{proof}

\begin{lemma}
\label{lem:seqlen}
For every sequence variable $x\in\Sc$, if $\ell$ is the length of $\M(x)$, then $\M(\ell_x)=\ell$.
\end{lemma}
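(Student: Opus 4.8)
The plan is to prove the claim by case analysis on whether the sequence variable $x$ is atomic in $\Sc$, since the model construction assigns sequence values to atomic and non-atomic variables by different definitions. In both cases the goal is to match the length of the constructed value $\M(x)$ against the arithmetic value $\M(\ell_x)$, using the fact (from \Cref{def:mc-int-elem} and \Cref{lem:arith}) that $\M$ interprets integer variables according to an arithmetic model $\arithmodel$ satisfying $\Ac$.

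First I would dispatch the atomic case. If $x$ is atomic, then its equivalence class $e = [x]$ is atomic, and \Cref{def:mc-length} constrains $\M(e)$ to be a sequence of length $\ell_e = \M(\ell_x)$, where well-definedness of $\ell_e$ is exactly \Cref{lem:mc-length}. The actual value $\M(e)$ is then fixed either by \Cref{def:mc-unit} (if $e$ is a unit class, producing a length-$1$ sequence, which is consistent since \Cref{lem:unit-wd} already gives $\M(\ell_x) = 1$) or by \Cref{def:mc-modelcon} (if $e$ is non-unit, where elements are defined for exactly the indices $i \in [0, \ell_e)$). In either sub-case the length of $\M(x)$ is $\ell_e = \M(\ell_x)$, as required.

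Next I would treat the non-atomic case by reducing it to the atomic one. Let $\vec{y} = (y_1, \ldots, y_n)$ be the normal form of $x$, which exists and is unique by \Cref{lem:nf-unique}; each $y_i$ is an atomic representative and hence atomic, so the atomic case already gives that the length of $\M(y_i)$ is $\M(\ell_{y_i})$. By \Cref{def:mc-non-atom}, $\M(x)$ is the concatenation $\M(y_1) \cdots \M(y_n)$, so its length is $\sum_{i=1}^n \M(\ell_{y_i})$. To identify this sum with $\M(\ell_x)$, I would unfold $\Sc \entnf^\ast x \teq \vec{y}$ to obtain some $\vec{z}$ with $\Sc \entnf x \teq \vec{z}$ and $\Sc \ent y_i \teq z_i$, apply \Cref{lem:entnf-len} to get $\Ac \ent[\lian] \ell_x = \sum_{i=1}^n \ell_{z_i}$, and then replace each $\ell_{z_i}$ by $\ell_{y_i}$ using $\Sc \ent y_i \teq z_i$ together with \Cref{assumption} and saturation of $\rn{S-Prop}$. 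Since $\arithmodel \models \Ac$, this yields $\M(\ell_x) = \sum_{i=1}^n \M(\ell_{y_i})$, matching the computed length.

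I expect the only real work to be in the non-atomic case, specifically the bookkeeping that connects the length of the concatenated value to $\M(\ell_x)$ through \Cref{lem:entnf-len}: one must pass carefully between the normal-form variables $y_i$ and the intermediate witnesses $z_i$ of $\entnf$, and justify that arithmetic entailments in $\Ac$ transfer to the model via $\arithmodel$. The atomic case, the empty normal form (with $n = 0$, giving the empty sequence and $\M(\ell_x) = 0$ by \Cref{lem:entnf-len}), and the unit sub-case are all routine given the earlier lemmas.
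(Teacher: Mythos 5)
Your proposal is correct and follows essentially the same route as the paper's proof: the atomic case is immediate from \Cref{def:mc-length}, and the non-atomic case goes through the normal form $\vec{y}$, \Cref{def:mc-non-atom}, the witness $\vec{z}$ from \Cref{def:entnf}, \Cref{lem:entnf-len}, and the transfer of $\ell_{z_i}\teq\ell_{y_i}$ into $\Ac$ via \Cref{assumption} and saturation of \rn{S-Prop}, finishing by evaluating in $\arithmodel$. The extra sub-case analysis you sketch for atomic variables (unit versus non-unit) is harmless but unnecessary, since \Cref{def:mc-length} already fixes the length regardless of how the elements are later filled in.
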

\begin{proof}
If $x$ is atomic, then $\ell=\M(\ell_x)$ by \Cref{def:mc-length}.  Suppose that $x$ is non-atomic.  Let $\vec{y}$ be the normal form of $x$ where $\vec{y}$ is of size $n$.  Each element of $\vec{y}$ is atomic, so for $i\in[1,n]$, the length of $\M(y_i)$ is $\M(\ell_{y_i})$ by \Cref{def:mc-length}.  Then, $\ell=\Sigma_{i=1}^n \M(\ell_{y_i})$ by \Cref{def:mc-non-atom}.  Let $\vec{z}$ of length $n$ be such that $\Sc\entnf x\teq\vec{z}$ and $\Sc\ent y_i\teq z_i$ for $i\in[1,n]$, which exists by \Cref{def:entnf}.  By \Cref{lem:entnf-len}, we have that $\Ac\ent[\lian] \ell_{x}\teq\Sigma_{i=1}^n\ell_{z_i}$.
For each $i\in[1,n]$, we know that because $\Sc\ent y_i\teq z_i$,
$\Sc\ent \seqlen{y_i}\teq\seqlen{z_i}$, and so
by \Cref{assumption},
$\Sc\ent \ell_{y_i}\teq\ell_{z_i}$.
Therefore,
$\Ac\ent\ell_{y_i}\teq\ell_{z_i}$ by saturation of $\rn{S-Prop}$.  
Then,  we can conclude that $\Ac\ent[\lian] \ell_x\teq\Sigma_{i=1}^n\ell_{y_i}$.  Finally, we have $\M(\ell_x)=\arithmodel(\ell_x)=\Sigma_{i=1}^n\arithmodel(\ell_{y_i})=\Sigma_{i=1}^n\M(\ell_{y_i})=\ell$.

\end{proof}

\begin{lemma}
\label{lem:nthinrange-atomic}
Suppose $k\sceq{\Sc}\seqnth(x,y)$.
Let $i=\M(y)$, $e=[x]$, and let $\ell_e$ be the length of $\M(x)$.  If $e$ is atomic and $i\in[0,\ell_e)$, then the $i$th element of $\M(x)$ is $\M(k)$.
\end{lemma}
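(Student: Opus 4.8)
The plan is to proceed by a case split on the structure of the atomic equivalence class $e = [x]$, since the value $\M(x) = \M(e)$ is assigned by exactly one of two definitions depending on whether $e$ is a unit equivalence class. If $e$ is a unit equivalence class, $\M(e)$ is fixed by \Cref{def:mc-unit}; otherwise it is fixed by \Cref{def:mc-modelcon}. These two cases are exhaustive for atomic classes, so establishing the claim in each suffices. Throughout, I note that $k$ has sort $\selem$ (being $\sceq{\Sc}$-equivalent to the $\selem$-term $\seqnth(x,y)$), which is what lets me read off $\M(k)$ from the element-variable assignment.

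First I would handle the non-unit case, which is essentially immediate. Here $\M(e)$ is built by \Cref{def:mc-modelcon}, which fixes the $i$th element (for $i \in [0,\ell_e)$) by first choosing, via \Cref{lem:eqcsimi} and the fact that $e \sim_i e$, an index $j$ with $e \in E_j^i$, and then applying \Cref{item:update-prop} whenever a suitable $\seqnth$-witness exists. Our hypotheses supply exactly such a witness: taking $e' = e$ and $s = x$ (so $s \in e'$), the pair consisting of $k$ and $y$ satisfies $k \sceq{\Sc} \seqnth(x, y)$ with $\M(y) = i$, so the condition of \Cref{item:update-prop} is met and sets the $i$th element of $\M(e)$ to $\M(k)$. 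Since \Cref{def:mc-modelcon} has already been shown well-defined, this value is independent of the chosen witness, so the $i$th element of $\M(x)$ is $\M(k)$, as required.

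The unit case is where I expect the real work. Here $x \sceq{\Sc} \sequnit(u)$ for an element variable $u$, and by \Cref{lem:unit-wd} the sequence $\M(x)$ has length $1$; hence $\ell_e = 1$ and $i = 0$. By \Cref{assumption} there is a variable $k'$ with $k' \teq \seqnth(x, y) \in \Sc$, and $k \sceq{\Sc} k'$ since both lie in the class of $\seqnth(x,y)$. The premises of \rn{Nth-Unit} are then met (with $k'$, $x$, $y$, $u$ in the roles of the rule's variables), so by saturation one of its two conclusions must already hold in $\conf{\Sc,\Ac}$. The first conclusion would place the disjunction $y < 0 \vee y > 0$ into $\Ac$; but $\M \models \Ac$ by \Cref{lem:arith} and $\M(y) = i = 0$, so this is impossible. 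Hence the second conclusion holds, giving $k' \teq u \in \Sc$; therefore $k \sceq{\Sc} u$, and $\M(k) = \M(u)$ by \Cref{def:mc-int-elem}. By \Cref{def:mc-unit} the single ($0$th) element of $\M(x)$ is $\M(u) = \M(k)$, completing this case.

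The main obstacle, such as it is, lies in this unit case: one must correctly match the hypothesis to the premises of \rn{Nth-Unit}, pass through the helper variable $k'$ appearing in the flat literal, and then use the satisfaction of the arithmetic constraints ($\M \models \Ac$) to eliminate the out-of-bounds branch of the rule, leaving the branch that forces $k \sceq{\Sc} u$. The non-unit case, by contrast, is a direct reading-off of \Cref{item:update-prop} once the witnesses are identified.
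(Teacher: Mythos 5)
Your proposal is correct and follows essentially the same route as the paper's proof: the identical case split on whether the atomic class $e$ is a unit equivalence class, with the non-unit case read off from \Cref{item:update-prop} of \Cref{def:mc-modelcon} (using well-definedness to match witnesses) and the unit case settled by saturation of \rn{Nth-Unit}, eliminating the out-of-bounds branch because $\M(y)=0$ and the arithmetic constraints are satisfied. The only cosmetic differences are that you route the arithmetic contradiction through \Cref{lem:arith} rather than directly through \Cref{def:mc-int-elem}, and you make explicit the flat witness variable $k'$ from \Cref{assumption}, which the paper leaves implicit.
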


\begin{proof}
We consider the following cases:
\begin{enumerate}
    \item $e$ is a unit equivalence class:
    then, for some $z$, $\sequnit(z)\in e$.  By \Cref{def:mc-unit},
    we have that
    $i=0$ and $\M(x)$ contains the single element $\M(z)$.  We show that $\M(z)=\M(k)$.
    Since $\seqnth(x,y)\in\ter{\Sc}$ and
    $\Sc\ent x\teq \sequnit(z)$, by saturation of
    $\rn{Nth-Unit}$, there are two cases.
    In the first case, $\Ac\ent y<0\vee y>0$,
    which, by \Cref{def:mc-int-elem}, is not possible since $\M(y)=i=0$.
    So, we are left with the second case, in which $\seqnth(x,y)\teq z\in\Sc$.
    It follows that $z\sceq{\Sc} k$, so 
    by \Cref{def:mc-int-elem}, $\M(z)=\M(k)$.

    \item $e$ is atomic but not a unit equivalence class:
    let $j$ be such that $e\in E_j^{i}$.
    Then, by \Cref{item:update-prop} of \Cref{def:mc-modelcon}, the $i$th element of $\M(e)$ must be $\M(k)$, and thus the $i$th element of $\M(x)$ is $\M(k)$.
\end{enumerate}
\end{proof}

\begin{lemma}
\label{lem:nthinrange}
Suppose $k\sceq{\Sc}\seqnth(x,y)$.
Let $i=\M(y)$, $e=[x]$, and let $\ell_e$ be the length of $\M(x)$.  If $i\in[0,\ell_e)$, then the $i$th element of $\M(x)$ is $\M(k)$.
\end{lemma}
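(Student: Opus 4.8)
The plan is to reduce the general statement to the atomic case that \Cref{lem:nthinrange-atomic} already settles. First I would split on whether the equivalence class $e = [x]$ is atomic. If $x$ is atomic, then $e$ is atomic and the conclusion is exactly \Cref{lem:nthinrange-atomic}, so there is nothing left to do. All the work lies in the non-atomic case, where I would use the normal form of $x$ to locate the one atomic block containing the $i$th position and then invoke \Cref{lem:nthinrange-atomic} on that block.

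Concretely, for non-atomic $x$ let $\vec{w} = (w_1, \ldots, w_n)$ be its normal form, which exists and is unique by \Cref{lem:nf-unique}, with each $w_m$ atomic and $\Sc \entnf^\ast x \teq \seqconiii{w_1}{\cdots}{w_n}$. If $n = 0$ then $\M(x)$ is empty, $\ell_e = 0$, and the claim is vacuous since $[0,\ell_e) = \emptyset$; so assume $n \geq 1$. Since $k \sceq{\Sc} \seqnth(x,y)$, the term $\seqnth(x,y)$ lies in $\ter{\Sc}$, so by \Cref{assumption} there is a variable $k'$ with $k' \teq \seqnth(x,y) \in \Sc$; in particular $k \sceq{\Sc} k'$, hence $\M(k) = \M(k')$ by \Cref{def:mc-int-elem}. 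These are precisely the premises of \rn{Nth-Concat}.

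The key step is to exploit saturation w.r.t. \rn{Nth-Concat}: one of its conclusion branches must be redundant, i.e.\ already present in $\conf{\Sc,\Ac}$. I would first rule out the out-of-bounds branch: were $y < 0 \vee y \geq \ell_x$ in $\Ac$, then since $\M \models \Ac$ (\Cref{lem:arith}) and $\M(\ell_x) = \ell_e$ (\Cref{lem:seqlen}) we would get $i = \M(y) \notin [0,\ell_e)$, contradicting the hypothesis. Hence some in-bounds branch $m$ is present, contributing both $\sum_{j=1}^{m-1}\ell_{w_j} \leq y < \sum_{j=1}^{m}\ell_{w_j} \in \Ac$ and $k' \teq \seqnth(w_m, y - \sum_{j=1}^{m-1}\ell_{w_j}) \in \Sc$ (after flattening, with a fresh index variable $y'$ for which $\M \models \Ac$ forces $\M(y') = i - p$, where $p = \sum_{j=1}^{m-1}\M(\ell_{w_j})$). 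Evaluating the arithmetic constraint under $\M$ then yields $0 \leq i - p < \M(\ell_{w_m})$, so the shifted index is in range for the atomic block $w_m$.

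From here the argument closes quickly. By \Cref{def:mc-non-atom} and \Cref{lem:seqlen}, $\M(x)$ is the concatenation $\M(w_1)\cdots\M(w_n)$ with $|\M(w_j)| = \M(\ell_{w_j})$, so its $i$th element is the $(i-p)$th element of $\M(w_m)$. Since $w_m$ is atomic, $[w_m]$ is atomic, $k' \sceq{\Sc} \seqnth(w_m, y')$, and $\M(y') = i - p \in [0, \M(\ell_{w_m}))$, \Cref{lem:nthinrange-atomic} identifies this element as $\M(k') = \M(k)$, as required. I expect the main obstacle to be the bookkeeping around saturation of \rn{Nth-Concat}: arguing that redundancy forces exactly one in-bounds branch into the configuration, and tracking the shifted index through flattening so that $\M(y') = i - p$ reliably lands in $[0, \M(\ell_{w_m}))$. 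The remaining steps are a routine combination of \Cref{lem:arith}, \Cref{lem:seqlen}, \Cref{def:mc-non-atom}, and the atomic lemma.
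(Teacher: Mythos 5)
Your proposal follows the paper's own proof almost step for step: the atomic case is discharged by \Cref{lem:nthinrange-atomic}; the empty normal form is vacuous; for a genuine concatenation normal form you invoke saturation of \rn{Nth-Concat}, rule out the out-of-bounds branch via \Cref{lem:arith} and \Cref{lem:seqlen}, locate the block containing position $i$, track the flattened index variable into $\Ac$ so that its value is $i-p$, and close with the atomic lemma. This is exactly the paper's argument, including the bookkeeping you flag as the main obstacle (the paper handles it by noting $\Sc\ent\hat{\alpha}\teq\alpha$, hence $\Ac\ent\hat{\alpha}\teq\alpha$ by saturation of \rn{S-Prop}, and then appealing to \Cref{def:mc-int-elem}).

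The one place you diverge is the case $n=1$, and there your argument has a small hole. You write ``assume $n\geq 1$'' and treat $n=1$ uniformly under \rn{Nth-Concat}, but that rule's premise asks for the normal form to be a concatenation $\seqconiii{w_1}{\cdots}{w_n}$, and the paper defines concatenation terms to have at least two components (\Cref{def:atomic}); the proof strategy in \Cref{sec:implementation} likewise applies \rn{Nth-Concat} only when the normal form is a concatenation term. So for $n=1$, saturation w.r.t.\ \rn{Nth-Concat} gives you nothing, and your claim that ``these are precisely the premises of \rn{Nth-Concat}'' does not hold. The paper instead shows $n=1$ cannot occur in the non-atomic case: if $\Sc\entnf^\ast x\teq y_1$ with $y_1$ an atomic representative, then since also $\Sc\entnf^\ast y_1\teq y_1$, saturation of \rn{C-Eq} forces $x\teq y_1\in\Sc$, making $x$ atomic and contradicting the case assumption. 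This is an easy patch, and the case you failed to exclude is in fact impossible, but as written your proof silently relies on an application of \rn{Nth-Concat} that the calculus does not license.
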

\begin{proof}
  If $e$ is atomic, then we have the result by \Cref{lem:nthinrange-atomic}.
  Suppose $e$ is not atomic.  Let $\vec{y}$ be the normal form of $x$.
    \begin{enumerate}
    \item If $\vec{y}$ is empty, then by \Cref{def:mc-non-atom}, $\M(x)$ is the empty sequence, so $i\in[0,\ell_e)$ is always false, and the statement holds vacuously.

    \item Suppose $\vec{y}$ has a single element, $y_1$.  By \Cref{def:entnf}, it's clear that $\Sc\entnf^{\ast}y_1\teq y_1$.  So, by saturation of $\rn{C-Eq}$, we must have $x\teq y_1\in\Sc$.  But $y_1$ is atomic, so this contradicts the assumption that $e$ is not atomic.

    \item Otherwise, $\vec{y}=\seqconiii{y_1}{\cdots}{y_n}$, with $n\ge 2$.
    Recall that $\seqnth(x,y)\in\ter{\Sc}$. Thus, by saturation of $\rn{Nth-Concat}$, one of its $n+1$ conclusions is applicable.
    In the first case, we must have $\Ac\ent y<0\vee y\geq\ell_{x}$.  But we also know that $\M(y)=i$ is non-negative and is smaller than the length assigned to $M(x)$, which leads to a contradiction using \Cref{lem:seqlen}.
    For the other cases, we have, for some $k\in[1,n]$,
    (1) $\Ac\ent \Sigma_{j=1}^{k-1}\ell_{y_j}\leq y < \Sigma_{j=1}^{k}\ell_{y_j}$ and
    (2) $\Sc\ent \seqnth(x,y)\teq\seqnth(y_k,y-\Sigma_{j=1}^{k-1}\ell_{y_j})$.
    By \Cref{def:mc-int-elem}, this means that $\Sigma_{j=1}^{k-1}\M(\ell_{y_j})\leq\M(y)<\Sigma_{j=1}^k\M(\ell_{y_j})$.  Now, by \Cref{def:mc-non-atom}, $\M(x)=\seqconiii{\M(y_1)}{\cdots}{\M(y_n)}$, and by \Cref{lem:seqlen}, the length of $\M(y_j)$ is $\M(\ell_{y_j})$ for $j\in[1,n]$.  Let $i'=\M(y)-\Sigma_{j=1}^{k-1}\M(\ell_{y_j})$.  Clearly, $i'\in[0,\M(\ell_{y_k}))$, and element $\M(y)$ of $\M(x)$ is the same as element $i'$ of $\M(y_k)$.  Now, revisiting (2), let $\alpha$ be the term $y-\Sigma_{j=1}^{k-1}\ell_{y_j}$, and let $\hat{\alpha}$ be the variable introduced for $\alpha$ when flattening the term $\seqnth(y_k,\alpha)$.  We have $k\sceq{\Sc}\seqnth(x,y)$, so $k\sceq{\Sc}\seqnth(y_k,\hat{\alpha})$ by (2).  Let $i''$ be $\M(\hat{\alpha})$, and recall that $y_k$ is atomic and that the length of $\M(y_k)$ is $\M(\ell_{y_k})$.  By \Cref{lem:nthinrange-atomic}, we have that if $i''\in[0,\M(\ell_{y_k}))$, then the $i''$th element of $\M(y_k)$ is $\M(k)$.  It remains to show that $i'=i''$.  To see this, note that $\Sc\ent\hat{\alpha}\teq\alpha$.  So, by saturation of $\rn{S-Prop}$, we have $\Ac\ent \hat{\alpha}\teq\alpha$. Then, by \Cref{def:mc-int-elem}, $i''=\M(\hat{\alpha})=\M(y)-\Sigma_{j=1}^{k-1}\M(\ell_{y_j})=i'$.
    \end{enumerate}    
\end{proof}

\begin{lemma}
\label{lem:disequal}
For all $x, y$, with $x\tneq y\in \Sc$, we have $\M(x) \neq \M(y)$. 
\end{lemma}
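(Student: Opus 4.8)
The plan is to split on the common sort of $x$ and $y$ (a flat disequality $x \tneq y \in \Sc$ forces them to share a sort, the relevant cases being $\selem$, $\sint$, and $\sseq$), and to isolate a sub-argument for $\selem$-disequalities that will be reused in the sequence case. That sub-argument is: if $w_1 \tneq w_2 \in \Sc$ with $w_1, w_2 : \selem$, then $[w_1] \neq [w_2]$, for otherwise $\Sc \ent w_1 \teq w_2$ would conflict with $w_1 \tneq w_2$ and trigger $\rn{S-Conf}$, contradicting saturation; since distinct $\selem$-equivalence classes receive distinct elements in \Cref{def:mc-int-elem}, we get $\M(w_1) \neq \M(w_2)$. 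This disposes of the case $x, y : \selem$ directly.

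For $x, y : \sint$, I would first observe that $x, y \in \ter{\Ac}$, since saturation of $\rn{S-Prop}$ applied to the trivial equalities $x \teq x$ and $y \teq y$ adds them to $\Ac$. Saturation of $\rn{S-A}$ then forces $x \teq y \in \Ac$ or $x \tneq y \in \Ac$. The former is impossible: by saturation of $\rn{A-Prop}$ it would put $x \teq y \in \Sc$, which together with $x \tneq y \in \Sc$ triggers $\rn{S-Conf}$. Hence $x \tneq y \in \Ac$, and since $\arithmodel \models \Ac$ with $\M(x) = \arithmodel(x)$ and $\M(y) = \arithmodel(y)$ by \Cref{def:mc-int-elem}, we conclude $\M(x) \neq \M(y)$.

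The main case, and the principal obstacle, is $x, y : \sseq$, where I would invoke saturation of $\rn{Deq-Ext}$; one of its two conclusions must already be realized in $\conf{\Sc, \Ac}$. In the first, $\ell_x \tneq \ell_y \in \Ac$, so $\M(\ell_x) \neq \M(\ell_y)$, and by \Cref{lem:seqlen} the sequences $\M(x)$ and $\M(y)$ have different lengths, giving $\M(x) \neq \M(y)$. In the second, $\Ac$ contains $\ell_x \teq \ell_y$ and $0 \leq i < \ell_x$, while $\Sc$ contains $w_1 \teq \seqnth(x, i)$, $w_2 \teq \seqnth(y, i)$, and $w_1 \tneq w_2$. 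The plan is to show that $\M(x)$ and $\M(y)$ disagree at index $k := \M(i)$. Since $\arithmodel \models \Ac$, we have $0 \leq k < \M(\ell_x) = \M(\ell_y)$, and by \Cref{lem:seqlen} this $k$ lies in bounds for both sequences.

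Applying \Cref{lem:nthinrange} to $w_1 \sceq{\Sc} \seqnth(x, i)$ then gives that the $k$th element of $\M(x)$ is $\M(w_1)$, and symmetrically the $k$th element of $\M(y)$ is $\M(w_2)$; finally the $\selem$ sub-argument yields $\M(w_1) \neq \M(w_2)$, so the two sequences differ at position $k$ and hence $\M(x) \neq \M(y)$. The delicate point is entirely in this last branch: it requires that $k$ be genuinely in range for \emph{both} sequences (which is where $\ell_x \teq \ell_y$ and \Cref{lem:seqlen} are essential) so that \Cref{lem:nthinrange} can be applied on both sides to extract the witnessing elements $\M(w_1)$ and $\M(w_2)$.
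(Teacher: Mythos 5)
Your proof is correct and, in the sequence case---the only one the paper's own proof addresses, since \rn{Deq-Ext} applies only to $\sseq$-sorted disequalities---follows essentially the same route as the paper: saturation of \rn{Deq-Ext}, the length-disagreement branch via \Cref{lem:seqlen}, and the index-disagreement branch via two applications of \Cref{lem:nthinrange} combined with $\M(w_1)\neq\M(w_2)$ obtained from saturation of \rn{S-Conf} and \Cref{def:mc-int-elem}. The extra $\selem$ and $\sint$ cases you include are also sound (they mirror the arguments the paper gives for those disequality shapes inside the proof of \Cref{lem:MsatSc}), but they are not part of the paper's proof of this lemma, which tacitly reads the statement as restricted to sequence variables.
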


\begin{proof}
By $\rn{Deq-Ext}$ we have two cases. In the first,
 $\Ac\ent \ell_x\neq\ell_y$.  So, by \Cref{def:mc-int-elem}, $\M(\ell_x)\neq\M(\ell_y)$.  Thus, by \Cref{lem:seqlen}, we have that the length of $\M(x)$ is different from the length of $\M(y)$, so $\M(x) \neq \M(y)$.

In the second case, we have
(1) $\Ac\ent \ell_x \teq\ell_y \wedge 0\leq i <\ell_x$ and
(2) $w_1\teq\seqnth(x,i),w_2\teq\seqnth(y,i),w_1\tneq w_2\in\Sc$, for some $i,w_1,w_2$.
By (2) and saturation of $\rn{S-Conf}$, we know that $w_1 \scneq{\Sc} w_2$, so
by \Cref{def:mc-int-elem}, $\M(w_1)\neq\M(w_2)$.
Let $n=\M(i)$ and let $\ell$ be the length of $\M(x)$.
By \Cref{def:mc-int-elem}, we have $\M(\ell_x)=\M(\ell_y)$ and $0\leq n < \M(\ell_x)$.  So, by \Cref{lem:seqlen}, we have that
the lengths of $\M(x)$ and $\M(y)$ are both equal to $\ell$, and $n\in[0,\ell)$.
Looking again at (2), we can apply \Cref{lem:nthinrange} twice to get that
the $n$th element of $\M(x)$ is $\M(w_1)$ and
the $n$th element of $\M(y)$ is $\M(w_2)$.
We can then conclude that $\M(x)\neq \M(y)$,
as we know that $\M(w_1)\neq\M(w_2)$.
\end{proof}

We can now show that \Cref{def:mc-nth} is well-defined.

\begin{lemma}
\label{lem:nth-wd}
\Cref{def:mc-nth} is well-defined.
\end{lemma}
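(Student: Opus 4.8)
The plan is to show that the value $\M(\seqnth)(a,i)$ prescribed by \Cref{def:mc-nth} does not depend on the choice of witnessing triple. Concretely, suppose there are two triples $(s', x', y')$ and $(s'', x'', y'')$, each satisfying the conditions of \Cref{def:mc-nth} for the same $a$ and $i$; that is, $x'\sceq{\Sc}\seqnth(s',y')$, $x''\sceq{\Sc}\seqnth(s'',y'')$, $\M(s')=\M(s'')=a$, and $\M(y')=\M(y'')=i$. Since the case where no such triple exists is vacuous, it suffices to prove $\M(x')=\M(x'')$; note that this argument will be independent of whether $i$ is in or out of bounds for $a$.

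First I would establish $y'\teq y''\in\Sc$, reusing the chain of saturation arguments already employed in the well-definedness proof for \Cref{def:mc-modelcon}: since $y',y''$ are integer subterms occurring in $\seqnth$ terms, saturation of $\rn{S-Prop}$ places $y'\teq y'$ and $y''\teq y''$ into $\Ac$, so both are in $\ter{\Ac}$; saturation of $\rn{S-A}$ then forces either $y'\teq y''\in\Ac$ or $y'\tneq y''\in\Ac$, and the disequality is ruled out because $\M(y')=\M(y'')=i$ while $\M$ agrees with the $\lth$-model $\arithmodel\models\Ac$ on integer variables; finally saturation of $\rn{A-Prop}$ propagates $y'\teq y''$ back into $\Sc$. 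Next, since $\seqnth(s',y'),\seqnth(s'',y'')\in\ter{\Sc}$ and $y'\teq y''\in\Ac$, I would appeal to saturation of $\rn{Nth-Split}$, which yields either $s'\teq s''\in\Sc$ or $s'\tneq s''\in\Sc$. The disequality branch is impossible, since by \Cref{lem:disequal} it would give $\M(s')\neq\M(s'')$, contradicting $\M(s')=\M(s'')=a$. Hence $s'\sceq{\Sc}s''$, and together with $y'\sceq{\Sc}y''$ congruence yields $\seqnth(s',y')\sceq{\Sc}\seqnth(s'',y'')$; since $x'\sceq{\Sc}\seqnth(s',y')$ and $x''\sceq{\Sc}\seqnth(s'',y'')$, we conclude $x'\sceq{\Sc}x''$, whence $\M(x')=\M(x'')$ by \Cref{def:mc-int-elem}.

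The main obstacle is the single nonsyntactic step: we only know that the two sequence witnesses agree \emph{semantically}, $\M(s')=\M(s'')$, and must bridge this to the syntactic relation $\sceq{\Sc}$ required to invoke congruence. This is precisely what $\rn{Nth-Split}$ is designed to supply—it introduces the case split on $s'\teq s''$ versus $s'\tneq s''$ under the established index equality—while \Cref{lem:disequal} is what closes off the disequality branch using the semantic information $\M(s')=\M(s'')$. Everything else reduces to routine congruence-closure reasoning and arithmetic propagation under the saturation hypotheses.
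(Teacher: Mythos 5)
Your proof is correct and follows essentially the same route as the paper's: establish $y'\teq y''$ in $\Ac$ (and then in $\Sc$) via saturation of \rn{S-Prop}, \rn{S-A}, and \rn{A-Prop}, invoke saturation of \rn{Nth-Split} to split on $s'\teq s''$ versus $s'\tneq s''$, close the disequality branch with \Cref{lem:disequal}, and finish by congruence and \Cref{def:mc-int-elem}. No gaps.
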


\begin{proof}
Suppose there are
$x,x',s,s',y,y'$ such that
$x\sceq{\Sc}\seqnth(s,y)$,
$x'\sceq{\Sc}\seqnth(s',y')$,
$\M(s) = \M(s') = a$, and
$\M(y) = \M(y') = i$.
We prove $\M(x)=\M(x')$.
Since $y,y'\in\ter{\Sc}$, and they have sort $\sint$,
by saturation of $\rn{S-Prop}$, we have that
$y\teq y,y'\teq y'\in\Ac$, and so
$y,y'\in\ter{\Sc}\cap\ter{\Ac}$. 
By saturation of $\rn{S-A}$, either $y\tneq y'\in \Ac$ or $y \teq y'\in \Ac$.
The first case is impossible since $\M(y)=\M(y')$.
In the second case, we have $y\teq y'\in\Ac$, and so $y\teq y'\in\Sc$ by saturation of \rn{A-Prop}.
Now, by saturation of $\rn{Nth-Split}$,
there are two options: either $s\tneq s'\in\Sc$ or $s \teq s'\in\Sc$.
The first is impossible by
\Cref{lem:disequal}, as
$\M(s)=\M(s')$.
On the other hand, if $s\teq s'\in\Sc$,
then since 
$x\sceq{\Sc}\seqnth(s,y)$ and 
$x'\sceq{\Sc}\seqnth(s',y')$,
we also have 
$x\sceq{\Sc}x'$.
Thus, by \Cref{def:mc-int-elem}, we have $\M(x)=\M(x')$.

\end{proof}

\subsection{Proof of \Cref{lem:MsatSc}}
\label{sec:proofofmsatsc}

\msatsc*

We start with helper lemmas.

\begin{lemma}
\label{lem:atomic-nf}
$x$ is atomic in $\Sc$ iff $\Sc\entnf^{\ast}x\teq y$ for some atomic representative $y$.
\end{lemma}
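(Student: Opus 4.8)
The plan is to prove the two directions of the biconditional separately, using throughout the uniqueness of normal forms established in \Cref{lem:nf-unique} (available here since $\conf{\Sc,\Ac}$ is saturated). The forward direction is short. Assuming $x$ is atomic in $\Sc$, I would take $y := \cf([x]_{\sceq{\Sc}})$. Since atomicity is a property of the equivalence class (the conditions in \Cref{def:atomic} depend only on $\Sc \ent x \teq \cdot$), every variable in an equivalence class containing an atomic variable is itself atomic, so $y$ is atomic; and since $y \sceq{\Sc} x$ we have $y = \cf([y]_{\sceq{\Sc}})$, making $y$ an atomic representative. Now $\Sc \entnf x \teq x$ holds by \Cref{it:entnf-ref} of \Cref{def:entnf} and $\Sc \ent y \teq x$ holds because $y \sceq{\Sc} x$, so taking $\vec{z} = (x)$ in the definition of $\entnf^\ast$ gives $\Sc \entnf^\ast x \teq y$ immediately.

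For the reverse direction, assume $\Sc \entnf^\ast x \teq y$ for an atomic representative $y$; by \Cref{lem:nf-unique} the normal form of $[x]_{\sceq{\Sc}}$ is then $(y)$, a vector of length one. I would argue the contrapositive: if $x$ is not atomic, its normal form cannot have length one. By \Cref{def:atomic} there are two cases. If $\Sc \ent x \teq \seqempty$, then applying \Cref{it:entnf-rec} to $\Sc \entnf x \teq x$ with $t = \seqempty$ deletes the single component, giving $\Sc \entnf^\ast x \teq \seqempty$; so the normal form is empty, contradicting length one. Otherwise there is a variable concatenation term $s$ that is not singular in $\Sc$ with $\Sc \ent x \teq s$, and applying \Cref{it:entnf-rec} to reflexivity yields $\Sc \entnf x \teq \nf{s}$ (which is $s$ itself, as $s$ is already flat); completing this to a normal form, which is possible by the existence argument of \Cref{lem:nf-unique}, I aim to show the result has length at least two.

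The crux, and the step I expect to be the main obstacle, is this last length bound. The right invariant to isolate is the number $N(\vec{z})$ of components of a vector $\vec{z}$ in a judgment $\Sc \entnf x \teq \vec{z}$ that are \emph{not} entailed by $\Sc$ to be $\seqempty$. I would show that $N$ never decreases under an application of \Cref{it:entnf-rec}: replacing a component $u$ by $\seqempty$ only occurs when $\Sc \ent u \teq \seqempty$, so $u$ did not contribute to $N$; and replacing $u$ by a non-singular concatenation removes the single (necessarily non-empty, by non-singularity) component $u$ while adding the at least two non-empty components that non-singularity guarantees, so $N$ in fact strictly increases. Since $s$ is non-singular we begin with $N \ge 2$, and since every component of a completed normal form is atomic and hence not entailed to be empty, the final length equals its own value of $N$, which is at least two. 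This contradicts the length-one normal form forced by \Cref{lem:nf-unique}, completing the reverse direction.
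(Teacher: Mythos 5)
Your proof is correct, and its two halves deserve separate verdicts. The forward direction coincides exactly with the paper's: choose $y=\cf([x]_{\sceq{\Sc}})$, observe that atomicity in \Cref{def:atomic} depends only on the $\sceq{\Sc}$-class, and instantiate the definition of $\entnf^{\ast}$ with witness vector $(x)$. Your reverse direction, however, takes a genuinely different route. The paper's is a one-liner: it asserts $x\sceq{\Sc}y$ and transfers atomicity across the class. That assertion is not immediate from \Cref{def:entnf} --- the expansion steps of $\entnf$ are not empty-theory equalities (e.g., from $x\teq\seqconii{u}{v}$ and $u\teq\seqempty$ one gets $\Sc\entnf x\teq v$ but not $\Sc\ent x\teq v$) --- and is really an appeal to saturation of \rn{C-Eq} applied to $\Sc\entnf^{\ast}x\teq y$ and $\Sc\entnf^{\ast}y\teq y$, which the paper leaves implicit under ``it is easy to see.'' You sidestep that issue entirely by arguing contrapositively through uniqueness of normal forms: a non-atomic $x$ admits a normal form of length $0$ (if $\Sc\ent x\teq\seqempty$) or, via your count $N$ of components not entailed empty, of length at least $2$ (if $x$ is entailed equal to a non-singular concatenation in $\ter{\Sc}$), contradicting the length-one normal form $(y)$ forced by \Cref{lem:nf-unique}. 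The counting argument is sound: $N$ cannot decrease under \Cref{it:entnf-rec} of \Cref{def:entnf}, every component of a completed derivation is atomic and hence counted, so the final length is at least $2$. In terms of what each approach buys: the paper's is short but conceals a saturation step (\rn{C-Eq}); yours makes explicit exactly which saturation facts are consumed (those behind \Cref{lem:entnf-term} and \Cref{lem:nf-unique}, notably \rn{C-Split}), at the cost of noticeably heavier machinery for what can be a two-line observation. One small slip that does not affect correctness: your parenthetical claim that the replaced component is ``necessarily non-empty, by non-singularity'' is not justified by empty-theory reasoning alone, since $\Sc\ent u\teq t$ and $\Sc\ent u\teq\seqempty$ do not contradict non-singularity of $t$ under $\ent$; fortunately the argument never needs it, because whether or not that component counts toward $N$, its replacement contributes at least two counted components, so $N$ still does not decrease.
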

\begin{proof}
    $\Rightarrow$: Let $y$ be the representative of $[x]$.  By \Cref{def:atomic}, and because $x\sceq{\Sc} y$ and $x$ is atomic, $y$ must also be atomic.  We have $\Sc\entnf x\teq x$ by \Cref{def:entnf}, and thus, also by \Cref{def:entnf}, we have $\Sc\entnf^{\ast}x\teq y$.
    
    $\Leftarrow$: Since $y$ is atomic and $x\sceq{\Sc}y$, it is easy to see by \Cref{def:atomic} that $x$ is also atomic.
\end{proof}

\begin{lemma}
\label{lem:model-eq-nf}
If $\Sc\entnf^{\ast}x\teq\vec{y}$, then $\M(x) = \seqconiii{\M(y_1)}{\dots}{\M(y_n)}$, where $\vec{y}$ has size $n\ge 0$.
\end{lemma}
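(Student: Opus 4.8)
The plan is to prove the claim by a case split on whether the sequence variable $x$ is atomic in $\Sc$, since the model construction assigns a value to an equivalence class through different definitions in the two cases. I will use throughout that $\M(x) = \M([x])$ for every sequence variable $x$, and that by \Cref{lem:nf-unique} the tuple $\vec{y}$ in the hypothesis $\Sc \entnf^\ast x \teq \vec{y}$ is necessarily the unique normal form of the equivalence class $[x]$; this uniqueness is what lets me read $\M([x])$ off the appropriate clause of the construction.

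For the non-atomic case, \Cref{def:mc-non-atom} applies directly: it sets $\M([x])$ to be the concatenation of the interpretations of the variables occurring in the normal form of $x$, with the empty concatenation (size $0$) being the empty sequence. Since $\vec{y}$ is that normal form, this immediately yields $\M(x) = \M([x]) = \seqconiii{\M(y_1)}{\dots}{\M(y_n)}$.

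For the atomic case, $[x]$ is an atomic equivalence class and $\cf([x])$ is an atomic representative satisfying $x \sceq{\Sc} \cf([x])$. Using reflexivity (clause \ref{it:entnf-ref} of \Cref{def:entnf}) we have $\Sc \entnf x \teq x$, and combined with $\Sc \ent x \teq \cf([x])$ this gives $\Sc \entnf^\ast x \teq (\cf([x]))$. By the uniqueness of the normal form (\Cref{lem:nf-unique}) it follows that $\vec{y} = (\cf([x]))$, so $n = 1$ and $y_1 = \cf([x])$. The right-hand side then collapses to the single value $\M(y_1)$, and since $x \sceq{\Sc} y_1$ we have $\M(x) = \M([x]) = \M([y_1]) = \M(y_1)$, regardless of whether the value of $[x]$ was fixed through \Cref{def:mc-unit} or \Cref{def:mc-modelcon}.

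The main point requiring care is the atomic case. It is tempting to argue that a size-one derivation $\Sc \entnf x \teq (z_1)$ can only come from reflexivity and hence $z_1 = x$, but this is false in general, since clause \ref{it:entnf-rec} can shrink a tuple to length one by deleting an $\seqempty$ component (and in the empty theory $x \teq \seqconii{w}{y}$ with $y \teq \seqempty$ does not entail $x \teq w$). I therefore pin down the normal form explicitly as the singleton containing the representative $\cf([x])$ and invoke \Cref{lem:nf-unique} to identify it with $\vec{y}$, rather than trying to reconstruct the derivation. Everything else is a direct match between the hypothesis and the defining clauses of the construction.
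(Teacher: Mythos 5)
Your proof is correct and takes essentially the same route as the paper's: a case split on whether $x$ is atomic, with the non-atomic case read directly off \Cref{def:mc-non-atom}, and the atomic case resolved by pinning the normal form down to the singleton $(\cf([x]))$ via uniqueness (\Cref{lem:nf-unique}) and then using the fact that \Cref{def:mc-unit,def:mc-modelcon} assign values per equivalence class, so $\M(x)=\M(y_1)$. The only cosmetic difference is that you inline the forward direction of the paper's \Cref{lem:atomic-nf} (an atomic variable has $\Sc\entnf^{\ast}x\teq y$ for its atomic representative $y$) instead of citing that helper lemma.
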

\begin{proof}
If $x$ is not atomic, then the result is immediate by \Cref{def:mc-non-atom}.  If $x$ is atomic, then by \Cref{lem:atomic-nf} and uniqueness of normal forms (\Cref{lem:nf-unique}), $n=1$.  Since $x\sceq{\Sc} y_1$, and models are assigned by equivalence class in \Cref{def:mc-unit,def:mc-modelcon}, it follows that $\M(x)=\M(y_1)$.
\end{proof}

\begin{lemma}
\label{lem:seqemptylen}
$\Sc\ent x\teq\seqempty$ iff $\M(x)$ has length 0.
\end{lemma}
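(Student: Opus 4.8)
The plan is to reduce both directions to reasoning about the arithmetic variable $\ell_x$, using \Cref{lem:seqlen}, which says that the length of $\M(x)$ equals $\M(\ell_x)$, together with \Cref{def:mc-int-elem}, which gives $\M(\ell_x) = \arithmodel(\ell_x)$ where $\arithmodel \models \Ac$. Thus the condition ``$\M(x)$ has length $0$'' is equivalent to ``$\arithmodel(\ell_x) = 0$'', and the real content is to connect this with $\Sc \ent x \teq \seqempty$. Both connections are obtained purely from saturation with respect to the length-handling rules, so no part of the as-yet-unproved $\M \models \Sc$ is needed.

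For the forward direction, suppose $\Sc \ent x \teq \seqempty$. Since this entailment holds in the empty theory over $\ter{\Sc}$, we have $\seqempty \in \ter{\Sc}$, so by saturation of \rn{L-Intro} together with the rewrite rule $\seqlen{\seqempty} \to 0$ from \Cref{fig:nf}, we get $\Sc \ent \seqlen{\seqempty} \teq 0$ (and in particular $\seqlen{\seqempty}, 0 \in \ter{\Sc}$). Because $\ell_x \teq \seqlen{x} \in \Sc$ by \Cref{assumption}, we also have $\seqlen{x} \in \ter{\Sc}$, so congruence closure applied to $\Sc \ent x \teq \seqempty$ yields $\Sc \ent \seqlen{x} \teq \seqlen{\seqempty}$, hence $\Sc \ent \seqlen{x} \teq 0$ and $\Sc \ent \ell_x \teq 0$. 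Since $\ell_x$ and $0$ are $\lsig$-terms in $\ter{\Sc}$, saturation of \rn{S-Prop} puts $\ell_x \teq 0 \in \Ac$, so $\arithmodel(\ell_x) = 0$, and $\M(x)$ has length $0$ by \Cref{lem:seqlen}.

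For the reverse direction, suppose $\M(x)$ has length $0$; then $\arithmodel(\ell_x) = 0$ by \Cref{lem:seqlen} and \Cref{def:mc-int-elem}. Here I would simply invoke saturation of \rn{L-Valid} on the sequence variable $x$: it guarantees that either $x \teq \seqempty \in \Sc$, whence $\Sc \ent x \teq \seqempty$ immediately, or $\ell_x > 0 \in \Ac$, whence $\arithmodel(\ell_x) > 0$, contradicting $\arithmodel(\ell_x) = 0$. Thus only the first alternative is possible and $\Sc \ent x \teq \seqempty$ follows.

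The argument is largely routine once \Cref{lem:seqlen} is available; there is no deep obstacle. The one step I would treat most carefully is in the forward direction, namely ensuring that $\seqempty$ is present as a term so that \rn{L-Intro} fires and $\seqlen{\seqempty}$ is congruence-connected to $\seqlen{x}$ — this is exactly where I rely on $\Sc \ent x \teq \seqempty$ being an empty-theory entailment over $\ter{\Sc}$. The reverse direction needs no such bookkeeping, since it only uses the guarded \rn{L-Valid} split on the variable $x$ itself.
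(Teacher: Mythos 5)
Your proof is correct and takes essentially the same route as the paper's: the forward direction uses congruence plus saturation of \rn{L-Intro}, \Cref{assumption}, and \rn{S-Prop} to place $\ell_x \teq 0$ in $\Ac$ and then appeals to \Cref{def:mc-int-elem} and \Cref{lem:seqlen}, while the reverse direction uses the \rn{L-Valid} case split exactly as in the paper. Your extra bookkeeping that $\seqempty \in \ter{\Sc}$ (so that \rn{L-Intro} indeed applies to it) is a careful touch the paper leaves implicit, but it does not change the argument.
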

\begin{proof}
$\Rightarrow$: If $\Sc\ent x\teq\seqempty$, then $\Sc\ent\seqlen{x}\teq\seqlen{\seqempty}$.  By saturation of \rn{L-Intro}, $\Sc\ent\seqlen{\seqempty}=0$.  So, by \Cref{assumption}, $\Sc\ent \ell_x\teq 0$, and so, by saturation of \rn{S-Prop}, we have $\ell_x\teq 0\in\Ac$.  Thus, by \Cref{def:mc-int-elem}, $\M(\ell_x)=0$, and so the length of $x$ is 0 by \Cref{lem:seqlen}.
$\Leftarrow$: If $\M(x)$ has length 0, then by \Cref{lem:seqlen}, $\M(\ell_x)=0$.  By saturation of \rn{L-Valid}, either $x\teq\seqempty\in\Sc$ or $\Ac\ent\ell_x>0$.  But the latter is impossible by \Cref{def:mc-int-elem}.
\end{proof}

\begin{lemma}
\label{lem:update-align}
If $x \teq\sequpdate(y, i, v) \in \Sc$, $\M(i)\in[0,\M(l_y))$, and $\Sc\entnf^\ast y \teq \seqconiii{w_1}{\cdots}{w_n}$, then:
\begin{enumerate}
    \item\label{it:update-align-concat} $x\teq \seqconiii{z_1}{\cdots}{z_n}\in\Sc$ for some atomic $z_1,\ldots,z_n$;
    \item\label{it:update-align-k} there exists some $k\in[1,n]$, such that $\sum_{j=1}^{k-1} \M(\ell_{w_j}) \leq \M(i) < \sum_{j=1}^k \M(\ell_{w_j})$ and $z_k=\sequpdate(w_k, \alpha_k, v)\in\Sc$, where $\Ac\ent \alpha_k\teq i-\sum_{j=1}^{k-1}\ell_{w_j}$; and
    \item\label{it:update-align-j} for all $j\in[1,n]$, $j\neq k$, $z_j\teq w_j\in \Sc$.
\end{enumerate}
\end{lemma}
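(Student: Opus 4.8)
The plan is to read off the bulk of the structure directly from the \rn{Update-Concat} rule and then use arithmetic together with \rn{Update-Bound} and \rn{Update-Concat-Inv} to pin down which piece is genuinely updated and to establish atomicity. Throughout I rely on the standing assumption that $\conf{\Sc,\Ac}$ is saturated w.r.t.\ $\extcal$.

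First I would invoke saturation w.r.t.\ \rn{Update-Concat} on the two hypotheses $x\teq\sequpdate(y,i,v)\in\Sc$ and $\Sc\entnf^\ast y\teq\seqconiii{w_1}{\cdots}{w_n}$. Since that rule has a single conclusion, saturation yields $x\teq\seqconiii{z_1}{\cdots}{z_n}\in\Sc$ for the canonical fresh variables $z_1,\dots,z_n$, together with $z_j\teq\sequpdate(w_j,\alpha_j,v)\in\Sc$ for each $j$, where $\alpha_j$ is the fresh integer variable produced by flattening the subterm $i-\sum_{m=1}^{j-1}\ell_{w_m}$, so that $\Ac\ent[\lian]\alpha_j\teq i-\sum_{m=1}^{j-1}\ell_{w_m}$. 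This already supplies the concatenation equation of claim~1 and the update equation of claim~2; what remains is to locate $k$, to collapse the off-$k$ pieces, and to argue atomicity of every $z_j$.

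To locate $k$, I would compute $\M(\ell_y)=\sum_{j=1}^n\M(\ell_{w_j})$ using \Cref{lem:model-eq-nf} (which gives $\M(y)=\seqconiii{\M(w_1)}{\cdots}{\M(w_n)}$) and \Cref{lem:seqlen} (which identifies the length of each interpretation with the value of its $\ell$-variable). Combined with the hypothesis $\M(i)\in[0,\M(\ell_y))$, there is a unique $k$ with $\sum_{j=1}^{k-1}\M(\ell_{w_j})\leq\M(i)<\sum_{j=1}^k\M(\ell_{w_j})$; for this $k$ we get $\M(\alpha_k)=\M(i)-\sum_{m=1}^{k-1}\M(\ell_{w_m})\in[0,\M(\ell_{w_k}))$, finishing claim~2. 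For every $j\neq k$ a short arithmetic computation shows $\M(\alpha_j)$ is out of bounds for $w_j$: when $j<k$ we have $\M(\alpha_j)\geq\M(\ell_{w_j})$ using $\M(i)\geq\sum_{m=1}^{k-1}\M(\ell_{w_m})\geq\sum_{m=1}^{j}\M(\ell_{w_m})$, and when $j>k$ we have $\M(\alpha_j)<0$ using $\M(i)<\sum_{m=1}^k\M(\ell_{w_m})\leq\sum_{m=1}^{j-1}\M(\ell_{w_m})$. Applying saturation w.r.t.\ \rn{Update-Bound} to $z_j\teq\sequpdate(w_j,\alpha_j,v)$, its first conclusion would force $0\leq\alpha_j<\ell_{w_j}$ into $\Ac$, contradicting $\M\models\Ac$ (\Cref{lem:arith}) and the out-of-bounds value of $\M(\alpha_j)$; hence the second conclusion holds, i.e.\ $z_j\teq w_j\in\Sc$, which is claim~3.

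Finally, for atomicity (claim~1): each $w_j$ is an atomic representative since it occurs in a normal form (\Cref{def:entnf}). For $j\neq k$ we have $z_j\teq w_j\in\Sc$, so $z_j\sceq{\Sc}w_j$ and $z_j$ is atomic by \Cref{def:atomic}. The delicate case, which I expect to be the main obstacle, is showing $z_k$ is atomic, since it remains a genuine $\sequpdate$ term. Here I would take the unique normal form $\vec{p}$ of $z_k$ (\Cref{lem:nf-unique}); because $\M(\ell_{z_k})=\M(\ell_{w_k})>0$ (by \Cref{lem:eq_len} and $\M(\alpha_k)\in[0,\M(\ell_{w_k}))$), \Cref{lem:seqemptylen} forces $\vec{p}$ to be nonempty, and if it has a single element then $z_k$ is atomic by \Cref{lem:atomic-nf} and we are done. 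Otherwise $\Sc\entnf^\ast z_k\teq\seqconiii{p_1}{\cdots}{p_m}$ with $m\geq2$, and saturation w.r.t.\ \rn{Update-Concat-Inv} (applied to $z_k\teq\sequpdate(w_k,\alpha_k,v)$ and this normal form) yields $w_k\teq\seqconiii{q_1}{\cdots}{q_m}\in\Sc$ with $\M(\ell_{q_j})=\M(\ell_{p_j})>0$ for each $j$; by \Cref{lem:seqemptylen} no $q_j$ is empty, so this concatenation is non-singular and $w_k$ is therefore not atomic, contradicting that $w_k$ is an atomic representative. Thus $z_k$ is atomic, completing the proof.
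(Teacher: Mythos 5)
Your proposal is correct and follows essentially the same route as the paper's proof: saturation of \rn{Update-Concat} supplies the concatenation and per-component update equations, arithmetic on $\M$ together with saturation of \rn{Update-Bound} locates the unique $k$ and collapses the off-$k$ components to $z_j \teq w_j$, and the \rn{Update-Concat-Inv} non-singularity contradiction (via \Cref{lem:eq_len}, \Cref{lem:seqlen}, and \Cref{lem:seqemptylen}) establishes atomicity. The only difference is organizational: the paper runs the \rn{Update-Concat-Inv} contradiction uniformly for every $z_m$ before locating $k$, whereas you derive the third claim first and then get atomicity of $z_j$ for $j \neq k$ for free from $z_j \sceq{\Sc} w_j$ with $w_j$ atomic, so you need the contradiction argument only for $z_k$ --- a mild streamlining of the paper's argument.
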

\begin{proof}
By saturation w.r.t. \rn{Update-Concat},
there is $x \teq \seqconiii{z_1}{\cdots}{z_n}\in \Sc$,
for some $z_1,\ldots, z_n$, such that 
$z_m \teq \sequpdate(w_m, \alpha_m, v)\in\Sc$ for $m\in[1,n]$,
where $\alpha_m$ is the variable introduced for the term $i-\sum_{j=1}^{m-1}\ell_{w_j}$ when flattening the term $\sequpdate(w_m, i-\sum_{j=1}^{m-1}\ell_{w_j}, v)$.
Then, $z_m\teq\sequpdate(w_m, \alpha_m, v)\in \Sc$ and $\alpha_m \sceq{\Sc} i-\sum_{l=1}^{m-1}\ell_{w_l}$.
By saturation of \rn{S-Prop}, we also have $\Ac\ent \alpha_m \teq i-\sum_{j=1}^{m-1}\ell_{w_j}$.

Next we prove that $z_1,..., z_n$ are atomic.
Suppose $z_m$ is not atomic for some $m\in[1,n]$.  Note that we cannot have $\Sc\ent z_m\teq\seqempty$: $w_m$ is atomic, so by \Cref{lem:seqemptylen}, $\M(w_m)$ has a nonzero length; then, by \Cref{lem:eq_len}, $\M(z_m)$ has nonzero length, so $\Sc\notent z_m\teq\seqempty$ by \Cref{lem:seqemptylen}.  Let $\vec{u}$ be the normal form of $z_m$. $\vec{u}$ cannot be empty because then, by \Cref{lem:model-eq-nf}, the length of $\M(z_m)$ would be 0, so by \Cref{lem:seqemptylen}, we would have $z_m\teq\seqempty$.  $\vec{u}$ cannot be of size 1 as then $z_m$ would be atomic by \Cref{lem:atomic-nf}.  Thus, $\vec{u}$ is of size at least 2.  Now, by saturation of \rn{Update-Concat-Inv} applied to $z_m\teq\sequpdate(w_m, \alpha_m, v)$, we have $w_m\teq\vec{z'}\in\Sc$, $u_1\teq\sequpdate(z'_1,\alpha_m,v)\in\Sc$, and $u_2\teq \sequpdate(z'_2,\alpha',v)\in\Sc$, for some $\alpha'$ where $\Sc\ent\alpha'=\alpha_m-\ell_{u_1}$.  By \Cref{lem:eq_len}, we have $\M(\ell_{u_1})=\M(\ell_{z'_1})$ and $\M(\ell_{u_2})\teq\M(\ell_{z'_2})$.  But $u_1$ and $u_2$ are atomic, so by \Cref{lem:seqemptylen}, their lengths cannot be zero.  By \Cref{lem:seqlen}, then, the lengths of $z'_1$ and $z'_2$ are also nonzero.  So $\Sc\notent z_1\teq\seqempty$ and $\Sc\notent z_2\teq\seqempty$ by \Cref{lem:seqemptylen}.  Thus, $\vec{z}$ is not singular, which means that $w_m$ is not atomic, which is a contradiction.

Now, consider the following $n+2$ constraints: $i < 0$, $\sum_{j=1}^{k-1} \ell_{w_j} \leq i < \sum_{j=1}^k \ell_{w_j}$ for $k\in[1,n]$, and $i\geq\sum_{j=1}^n \ell_{w_j}$.  Exactly one of these holds in $\M$, since it interprets arithmetic symbols in the usual way by \Cref{def:mc-fun}.

Suppose $\M\ent i < 0$ or $\M\ent i\geq\sum_{j=1}^n \ell_{w_j}$.
We know that $\M(i)\in[0,\M(\ell_y))$, so this case is impossible by \Cref{lem:model-eq-nf,lem:seqlen}.

Now, suppose that $\M\ent \sum_{j=1}^{k-1} \ell_{w_j} \leq i < \sum_{j=1}^k \ell_{w_j}$ for some $k \in [1,n]$.  Clearly, \Cref{it:update-align-k} holds for this $k$.
We know that \rn{Update-Bound} is saturated w.r.t. $z_m \teq \sequpdate(w_m, \alpha_m, v)\in\Sc$ for $m\in[1,n]$.  Recall also that $\Ac\ent \alpha_m \teq i-\sum_{j=1}^{m-1}\ell_{w_j}$.  It is easy to see that the first branch is inconsistent with $\M$ whenever $m\neq k$.  Thus, we have $z_m\teq w_m\in\Sc$ for $m\in[1,n],m\neq k$.
\end{proof}

\begin{lemma}
\label{lem:unitatomic}
If $x\sceq{\Sc}\sequnit(y)$ then $x$ is atomic.
\end{lemma}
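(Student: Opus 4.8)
The plan is to verify the two conditions of \Cref{def:atomic} for $x$ directly, working in the ambient saturated configuration $\conf{\Sc,\Ac}$ of the model construction. Write the hypothesis as $\Sc\ent x\teq\sequnit(y)$. The fact I would use throughout is that $\Sc\ent\seqlen{\sequnit(y)}\teq 1$: the term $\sequnit(y)$ lies in $\ter{\Sc}$, so saturation of $\rn{L-Intro}$ together with the rewrite rule $\seqlen{\sequnit(t)}\to 1$ from \Cref{fig:nf} yields this equality. Congruence then gives $\Sc\ent\seqlen{x}\teq 1$, and pushing this through \Cref{assumption} (which supplies $\ell_x\teq\seqlen{x}$) and saturation of $\rn{S-Prop}$ moves it to the arithmetic side as $\Ac\ent[\lian]\ell_x\teq 1$.

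For the first condition, $\Sc\notent x\teq\seqempty$, I would argue by contradiction. If $\Sc\ent x\teq\seqempty$, then $\seqempty\in\ter{\Sc}$ and saturation of $\rn{L-Intro}$ (with $\seqlen{\seqempty}\to 0$) gives $\Sc\ent\seqlen{\seqempty}\teq 0$, so by congruence $\Sc\ent\seqlen{x}\teq 0$ and hence, via \Cref{assumption} and $\rn{S-Prop}$, $\Ac\ent[\lian]\ell_x\teq 0$. Combined with $\Ac\ent[\lian]\ell_x\teq 1$ this forces $\Ac\ent[\lian]\bot$, contradicting saturation of $\rn{A-Conf}$.

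For the second condition, let $s=\seqconiii{x_1}{\cdots}{x_n}$ be any variable concatenation term in $\ter{\Sc}$ with $\Sc\ent x\teq s$; I must show $s$ is singular in $\Sc$. Congruence with $\Sc\ent\seqlen{x}\teq 1$ and saturation of $\rn{L-Intro}$ (rule $\seqlen{\seqconiii{s_1}{\cdots}{s_n}}\to\seqlen{s_1}+\cdots+\seqlen{s_n}$) gives $\Sc\ent\seqlen{x_1}+\cdots+\seqlen{x_n}\teq 1$, whence $\Ac\ent[\lian]\sum_{k=1}^{n}\ell_{x_k}\teq 1$ by \Cref{assumption} and $\rn{S-Prop}$. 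If $s$ were not singular, then \Cref{lem:concat-term-len} would give $\Ac\ent[\lian]\sum_{k=1}^{n}\ell_{x_k}\geq 2$, again forcing $\Ac\ent[\lian]\bot$ and contradicting saturation of $\rn{A-Conf}$. Hence $s$ is singular, both conditions of \Cref{def:atomic} hold, and $x$ is atomic.

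I do not expect a deep obstacle here, since the whole argument is length bookkeeping; the one point that needs care is the repeated crossing between entailments over $\Sc$ (read off by congruence closure) and entailments over $\Ac$ (used by the arithmetic reasoning in \Cref{lem:concat-term-len} and $\rn{A-Conf}$). Each such crossing must be mediated explicitly by \Cref{assumption}, so that each sequence length is named by an integer variable, and by saturation of $\rn{S-Prop}$, so that the derived length equalities actually appear in $\Ac$ and are visible to the linear-arithmetic reasoning.
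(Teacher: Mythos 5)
Your proposal is correct and follows essentially the same route as the paper's proof: both derive $\Sc\ent\seqlen{x}\teq 1$ from saturation of \rn{L-Intro} on $\sequnit(y)$, rule out $x\sceq{\Sc}\seqempty$ via the length-$0$ contradiction, and rule out equality to a non-singular variable concatenation term via \Cref{lem:concat-term-len}, with \Cref{assumption}, \rn{S-Prop}, and \rn{A-Conf} mediating the sequence-to-arithmetic transfer exactly as in the paper. The only difference is organizational (you verify the two conditions of \Cref{def:atomic} directly, while the paper assumes non-atomicity and splits into the same two cases), which is immaterial.
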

\begin{proof}
Note that $\Sc\ent\seqlen{\sequnit(y)}\teq 1$ by saturation of $\rn{L-Intro}$, and so we also have ($\ast$) $\Sc\ent\seqlen{x}\teq 1$.
Assume $x$ is not atomic. There are two cases.  In the first case, $x \sceq{\Sc} \seqempty$.  But $\Sc\ent\seqlen{\seqempty}\teq 0$ by saturation of $\rn{L-Intro}$, so by ($\ast$) and saturation of $\rn{S-Prop}$, this implies $\Ac\ent 0\teq 1$, which contradicts saturation of $\rn{A-Conf}$.  In the second case,
there exists a variable concatenation term $\seqconiii{x_1}{\cdots}{x_n}\in\ter{\Sc}$
such that
$\Sc\ent x\teq \seqconiii{x_1}{\cdots}{x_n}$
and $\seqconiii{x_1}{\cdots}{x_n}$ is not singular in $\Sc$.
By \Cref{lem:concat-term-len}, we know that 
$\Ac\ent[\lian]\Sigma_{i=1}^{n}\ell_{x_n}\ge 2$.
But by ($\ast$) and saturation of $\rn{S-Prop}$
and $\rn{L-Intro}$, 
together with \Cref{assumption},
we also have $\Ac\ent\Sigma_{i=1}^{n}\ell_{x_n}=1$, which also contradicts saturation of $\rn{A-Conf}$.
\end{proof}

\begin{definition}
Define $\Sc\entnf^{1,3} x\teq t$ if there is a derivation of $\Sc\entnf x\teq t$ without using \Cref{it:entnf-con} of \Cref{def:entnf}.
\end{definition}

\begin{lemma}
\label{lem:entnf-two-elim}
If $\Sc\entnf^{\ast} x\teq \vec{y}$, where $y$ has size $n$, then for some $\vec{z}$ of size $n$ such that $z_i\sceq{\Sc}y_i, i\in[1,n]$, $\Sc\entnf^{1,3} x\teq\vec{z}$.
\end{lemma}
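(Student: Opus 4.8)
The plan is to ignore the supplied $\entnf$-derivation almost entirely and instead \emph{construct an $\entnf^{1,3}$-derivation for $x$ from scratch}, using the hypothesis $\Sc\entnf^{\ast} x\teq\vec{y}$ only at the very end to identify the result. First note that \Cref{it:entnf-con} of \Cref{def:entnf} has no $\entnf$-premise and fixes the left-hand variable, so it can only ever be the first step of a (linear) derivation; hence $\entnf^{1,3}$ is exactly the fragment built from reflexivity (\Cref{it:entnf-ref}) followed by recursive expansions (\Cref{it:entnf-rec}). I would therefore start from the one-element tuple $(x)$, obtained by \Cref{it:entnf-ref} (so that $\Sc\entnf x\teq (x)$ holds), and then greedily expand non-atomic components.

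The expansion step is as follows. Suppose the current tuple $\vec{s}$, which by construction satisfies $\Sc\entnf x\teq\vec{s}$, contains a component $v$ that is not atomic in $\Sc$. By \Cref{def:atomic}, either $\Sc\ent v\teq\seqempty$, or there is a non-singular variable concatenation term $u=\seqconiii{u_1}{\cdots}{u_p}\in\ter{\Sc}$ with $\Sc\ent v\teq u$. In the first case I apply \Cref{it:entnf-rec} to replace $v$ by $\seqempty$ (which $\nf{\cdot}$ deletes); in the second I apply \Cref{it:entnf-rec} to replace $v$ by $u$. Both are legitimate instances of \Cref{it:entnf-rec} and keep the derivation within $\entnf^{1,3}$.

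The main point to establish is that this process terminates. I would reuse the well-founded Dershowitz--Manna multiset ordering $\prec^{\ast}$ from the proof of \Cref{lem:entnf-term}, where $s\prec t$ iff $\Ac\ent[\lian]\ell_{x_s}<\ell_{x_t}$ for some variables $x_s\sceq{\Sc}s$ and $x_t\sceq{\Sc}t$, and measure a tuple by the multiset of its components. Deleting an empty component removes an element and so strictly decreases the multiset. Replacing $v$ by $u_1,\ldots,u_p$ also strictly decreases it: from $\Sc\ent v\teq u$ together with saturation of \rn{L-Intro} and \rn{S-Prop} and \Cref{assumption} we get $\ell_v = \sum_{j=1}^{p}\ell_{u_j}$, and \Cref{lem:concat-term-len} gives $\ell_{u_j}<\sum_{j=1}^{p}\ell_{u_j}=\ell_v$, i.e. $u_j\prec v$ for every $j$. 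Since $\prec^{\ast}$ is well-founded, only finitely many expansions are possible, and the process halts at a tuple $\vec{z}$ all of whose components are atomic, with $\Sc\entnf^{1,3} x\teq\vec{z}$.

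It remains to match $\vec{z}$ with $\vec{y}$. As each $z_i$ is atomic, its class representative $\cf([z_i])$ is an atomic representative and $z_i\sceq{\Sc}\cf([z_i])$, so $\Sc\entnf^{\ast} x\teq(\cf([z_1]),\ldots,\cf([z_p]))$. By uniqueness of normal forms (\Cref{lem:nf-unique}) this tuple coincides with the normal form $\vec{y}$ of $[x]$, forcing $p=n$ and $z_i\sceq{\Sc}\cf([z_i])=y_i$, which is exactly the required conclusion. The genuinely hard part is precisely this termination/well-foundedness argument: a naive induction on the length of the given derivation that peels off the last \Cref{it:entnf-rec} step fails, since the intermediate tuples produced by \Cref{it:entnf-rec} legitimately contain non-atomic components (expanding them is the whole point), so an ``all components atomic'' induction hypothesis does not apply to the sub-derivation. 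Recasting the problem as a well-founded recursion under $\prec^{\ast}$ driven purely by $\Sc\ent$-entailments avoids both this issue and the delicate case analysis of whether the initial \Cref{it:entnf-con} concatenation happens to be singular.
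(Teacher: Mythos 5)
Your proof is correct, but it takes a genuinely different route from the paper's. The paper argues by surgery on the given derivation: it takes the derivation $D$ witnessing $\Sc\entnf x\teq\vec{z'}$ and, if $D$ begins with \Cref{it:entnf-con} of \Cref{def:entnf} for some $x\teq t\in\Sc$, splits on whether $t$ is singular; in the non-singular case it re-derives the root via \Cref{it:entnf-ref} followed by \Cref{it:entnf-rec}, and in the singular case it pushes through the entailed-empty components to reach a single variable $t_k$ and then, using saturation of \rn{C-Eq} together with \Cref{lem:nf-unique}, either identifies $\vec{y}$ as a singleton (taking $\vec{z}=(x)$) or re-roots the remainder of $D$ at $x$. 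You instead discard $D$ entirely, build an $\entnf^{1,3}$-derivation from scratch by greedily expanding non-atomic components, prove termination by re-using the Dershowitz--Manna measure from the proof of \Cref{lem:entnf-term} (via \Cref{lem:concat-term-len} and saturation of \rn{L-Intro}, \rn{L-Valid}, \rn{S-Prop}), and only then invoke \Cref{lem:nf-unique} to match the resulting all-atomic tuple against $\vec{y}$. Both arguments live under the same ambient saturation hypotheses (the lemma is applied to a fully saturated configuration, so appealing to \rn{C-Split} and \rn{A-Conf} saturation through \Cref{lem:nf-unique}, as you do, is legitimate), and your case analysis of non-atomicity, the legitimacy of each \Cref{it:entnf-rec} step, and the final identification via uniqueness of normal forms all check out. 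What your approach buys is the elimination of the paper's delicate case analysis (singular versus non-singular initial concatenation, atomic versus non-atomic $t_k$) and of any appeal to \rn{C-Eq}; it also yields a slightly stronger fact, namely that every sequence variable admits an $\entnf^{1,3}$-derivation of an all-atomic tuple, the hypothesis $\Sc\entnf^{\ast}x\teq\vec{y}$ being used only to name the answer. What the paper's approach buys is locality: the transformation needs only the given derivation and congruence reasoning, with no well-foundedness argument beyond what \Cref{lem:entnf-term} already supplies.
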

\begin{proof}
Since $\Sc\entnf^{\ast} x\teq \vec{y}$, there is some $\vec{z'}$ such that $z'_i\sceq{\Sc}y_i, i\in[1,n]$ and $\Sc\entnf x\teq \vec{z'}$.  Consider the derivation tree described in \Cref{lem:entnf-term}, and let $D$ be the path through the tree corresponding to the derivation of $\Sc\entnf x\teq \vec{z'}$.  
If $D$ has no application of \Cref{it:entnf-con} of \Cref{def:entnf},
then the claim is proved by setting $\vec{z}$ to be $\vec{z'}$.
Otherwise,
the first node in $D$ must use \Cref{it:entnf-con} of \Cref{def:entnf} to derive $\Sc\entnf x\teq t$, where $x\teq t\in\Sc$.  Suppose that $t$ is not singular.  Then, it is possible to derive $\Sc\entnf^{1,3} x\teq t$ by starting with $\Sc\entnf^{1,3} x\teq x$ and then applying \Cref{it:entnf-rec} of \Cref{def:entnf}, using $\Sc\ent x\teq t$.  We can then replace the root of $D$ with this derivation to get a derivation showing $\Sc\entnf^{1,3} x\teq\vec{z}$.

Suppose, on the other hand, that $t$ is singular.  Suppose $t=\seqconiii{t_1}{\dots}{t_m}$.  Without loss of generality, assume that $D$ eagerly applies \Cref{it:entnf-rec} of \Cref{def:entnf} $m-1$ times, each time using $\Sc \ent t_i\teq\seqempty$ for some $i\in[1,m]$, which is possible because $t$ is singular.  The resulting node is a derivation of $\Sc\entnf x\teq t_k$ for some variable $t_k$.  We consider two cases.
\begin{enumerate}
    \item Suppose that $t_k$ is atomic with atomic representative $v$.  Then, clearly we have $\Sc\entnf^{\ast} x\teq v$ and $\Sc\entnf^{\ast} t_k\teq v$, so by saturation of \rn{C-Eq}, $x\sceq{\Sc} t_k$.  By \Cref{lem:nf-unique}, we also have that $\vec{y}=v$.  But then, let $\vec{z} = x$.  Clearly, we have $\Sc\entnf^{1,3}x \teq \vec{z}$.  Furthermore, $x\sceq{\Sc}\vec{y}$, proving the claim.
    \item Suppose that $t_k$ is not atomic.  Then $D$ must continue after $\Sc\entnf x\teq t_k$, and the next step must use \Cref{it:entnf-rec} of \Cref{def:entnf} using $\Sc\ent t_k\teq t'$ to derive $\Sc\entnf x\teq t'$, where $t'$ is $\seqempty$ or a variable concatenation term in $\Sc$ that is not singular in $\Sc$.  But then, note that we can start with $\Sc\entnf^{1,3} t_k\teq t_k$ and apply the same step to get $\Sc\entnf^{1,3} t_k\teq t'$.  If we continue using the rest of the steps in derivation $D$, we can show that $\Sc\entnf^{1,3}t_k\teq\vec{z'}$, and therefore, $\Sc\entnf^{\ast}t_k\teq\vec{y}$.  By saturation of \rn{C-Eq}, we then have $x\sceq{\Sc}t_k$.  But, since $x\sceq{\Sc}t_k$, this means that $x\sceq{\Sc}t'$.  So, we can start with $\Sc\entnf^{1,3} x\teq x$ and apply \Cref{it:entnf-rec} of \Cref{def:entnf} using $\Sc\ent x\teq t'$ to get $\Sc\entnf^{1,3}x\teq t'$.  Using the same steps of that appear in $D$ after $t_k\teq t'$, we can show $\Sc\entnf^{1,3}x\teq\vec{z'}$, which proves the claim.
\end{enumerate}
\end{proof}

\begin{lemma}
\label{lem:roe}
Let $x$ be a sequence variable.
Suppose $\Sc\entnf^{1,3} x\teq t$ and
   $\Sc\entnf z\teq \nf{(\seqconiii{\vec{u}}{x}{\vec{v}})}$.
Then 
$\Sc\entnf z\teq \nf{(\seqconiii{\vec{u}}{t}{\vec{v}})}$.
\end{lemma}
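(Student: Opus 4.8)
The plan is to induct on the $\entnf^{1,3}$ derivation witnessing $\Sc\entnf^{1,3} x\teq t$, i.e., on the number of applications of \Cref{it:entnf-rec} of \Cref{def:entnf} used to obtain it; recall that such a derivation begins with \Cref{it:entnf-ref} and uses only \Cref{it:entnf-rec} thereafter (never \Cref{it:entnf-con}), which is precisely what makes the argument go through. The hypothesis $\Sc\entnf z\teq \nf{(\seqconiii{\vec{u}}{x}{\vec{v}})}$ is held fixed, and the induction maintains a derivation of $\Sc\entnf z\teq \nf{(\seqconiii{\vec{u}}{r}{\vec{v}})}$, where $r$ is the right-hand side reached so far in the derivation of $x$.

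For the base case the derivation is a single use of \Cref{it:entnf-ref}, so $t=x$ and $\nf{(\seqconiii{\vec{u}}{t}{\vec{v}})}$ is syntactically $\nf{(\seqconiii{\vec{u}}{x}{\vec{v}})}$; there is nothing to prove. For the inductive step, suppose the last step applies \Cref{it:entnf-rec}: there is a shorter $\entnf^{1,3}$ derivation of $\Sc\entnf^{1,3} x\teq t'$ with $t'=\nf{(\seqconiii{\vec{w}}{y}{\vec{z}})}$, a variable $y$, and a term $s$ that is $\seqempty$ or a non-singular variable concatenation term with $\Sc\ent y\teq s$, such that $t=\nf{(\seqconiii{\vec{w}}{s}{\vec{z}})}$. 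Applying the induction hypothesis to $\Sc\entnf^{1,3} x\teq t'$ (with the same $z$, $\vec{u}$, $\vec{v}$) yields $\Sc\entnf z\teq \nf{(\seqconiii{\vec{u}}{t'}{\vec{v}})}$.

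The crux is then to recognize this right-hand side as a concatenation in which $y$ occupies an identifiable position, so that \Cref{it:entnf-rec} can be replayed with the same equation $\Sc\ent y\teq s$. Because the rewrite rules of \Cref{fig:nf} only flatten nested concatenations and delete $\seqempty$ components, the reduced form of a concatenation is determined by the left-to-right sequence of its non-empty atomic leaves; in particular $\nf$ is compatible with embedding into a larger context, i.e., $\nf{(\seqconiii{\vec{u}}{\nf{(A)}}{\vec{v}})}=\nf{(\seqconiii{\vec{u}}{A}{\vec{v}})}$ for all $A$, $\vec{u}$, $\vec{v}$. Using this with $A=\seqconiii{\vec{w}}{y}{\vec{z}}$ gives $\nf{(\seqconiii{\vec{u}}{t'}{\vec{v}})}=\nf{(\seqconiii{\vec{W}}{y}{\vec{Z}})}$, where $\vec{W}$ is the tuple obtained by concatenating $\vec{u}$ and $\vec{w}$ and $\vec{Z}$ is obtained by concatenating $\vec{z}$ and $\vec{v}$. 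Applying \Cref{it:entnf-rec} of \Cref{def:entnf} with this decomposition and $\Sc\ent y\teq s$ produces $\Sc\entnf z\teq \nf{(\seqconiii{\vec{W}}{s}{\vec{Z}})}$, and the same context-compatibility of $\nf$ rewrites the right-hand side back to $\nf{(\seqconiii{\vec{u}}{\nf{(\seqconiii{\vec{w}}{s}{\vec{z}})}}{\vec{v}})}=\nf{(\seqconiii{\vec{u}}{t}{\vec{v}})}$, as required.

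I expect the only real obstacle to be the bookkeeping around $\nf$: one must justify the context-compatibility identity above (equivalently, confluence of the flattening/empty-removal rewrite system of \Cref{def:nf} and \Cref{fig:nf}), and verify that the variable $y$ genuinely survives as a component of $t'$ and of $\nf{(\seqconiii{\vec{u}}{t'}{\vec{v}})}$ at a matching position, which it does since reduction never rewrites a variable. Everything else is a direct transcription of a single expansion step into the enlarged concatenation context, and the restriction to $\entnf^{1,3}$ (no uses of \Cref{it:entnf-con}, which would introduce a fresh defining equation $x\teq t$ unrelated to the surrounding context) is exactly what guarantees each step of the derivation is one of these liftable expansions.
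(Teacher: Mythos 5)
Your proof is correct and takes essentially the same route as the paper's: induction on the length of the $\entnf^{1,3}$ derivation, with the base case given by \Cref{it:entnf-ref} of \Cref{def:entnf} and the inductive step replaying the final application of \Cref{it:entnf-rec} inside the enlarged concatenation context. The only difference is that you explicitly state and justify the context-compatibility identity $\nf{(\seqconiii{\vec{u}}{\nf{(A)}}{\vec{v}})}=\nf{(\seqconiii{\vec{u}}{A}{\vec{v}})}$ and the survival of the variable $y$ under reduction, details the paper's proof compresses into ``performing the same step results in\ldots''.
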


\begin{proof}
By induction on the number of derivation steps in $\entnf^{1,3}$ that yield $\Sc\entnf^{1,3} x\teq t$ (see \Cref{def:entnf}).
If this number is $1$, then it must be by using \Cref{it:entnf-ref} of \Cref{def:entnf}, so $x= t$ and the result follows trivially.
If this number is some $n+1 > 1$, then consider the first $n$ steps of the derivation.
Let $\Sc\entnf^{1,3} x\teq s$ be their result.
By the induction hypothesis, $\Sc\entnf z\teq \nf{(\seqconiii{\vec{u}}{s}{\vec{v}})}$.
Now, consider the $n+1$ step of the derivation. It must replace some variable $y$ in $s$ by some term $r$,
which results in $t$. 
Performing the same step on $\Sc\entnf z\teq \nf{(\seqconiii{\vec{u}}{s}{\vec{v}})}$ results in
$\Sc\entnf z\teq \nf{(\seqconiii{\vec{u}}{t}{\vec{v}})}$.
\end{proof}

\begin{lemma}
\label{lem:concnormnormconc}
Let $x_1,\ldots,x_k$ be sequence variables.
Suppose $\Sc\entnf x\teq \seqconiii{x_1}{\cdots}{x_k}$, and for every $i\in[1,k]$,
$\Sc\entnf^{\ast} x_i\teq \seqconiii{x_i^{1}}{\cdots}{x_i^{n_i}}$.
Then $\Sc\entnf^{\ast} x\teq\seqconiii{x_1^1}{\cdots}{x_k^{n_k}}$.
\end{lemma}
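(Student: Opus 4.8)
The plan is to build the normal form of $x$ by expanding, left to right, each variable $x_i$ occurring in the concatenation $\seqconiii{x_1}{\cdots}{x_k}$ into its own normal form $\seqconiii{x_i^1}{\cdots}{x_i^{n_i}}$, while preserving the relation $\Sc\entnf x\teq(\cdot)$ at every stage. The engine for a single such expansion is \Cref{lem:roe}, which substitutes a variable occurring inside a flat concatenation by any term $t$ for which $\Sc\entnf^{1,3}(\cdot)\teq t$. Since the hypotheses supply each expansion only in the form $\Sc\entnf^\ast x_i\teq\seqconiii{x_i^1}{\cdots}{x_i^{n_i}}$, I would first use \Cref{lem:entnf-two-elim} to rewrite it in the $\entnf^{1,3}$ form that \Cref{lem:roe} requires.

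Concretely, for each $i\in[1,k]$, applying \Cref{lem:entnf-two-elim} to $\Sc\entnf^\ast x_i\teq\seqconiii{x_i^1}{\cdots}{x_i^{n_i}}$ yields variables $z_i^1,\ldots,z_i^{n_i}$ with $z_i^j\sceq{\Sc}x_i^j$ and $\Sc\entnf^{1,3} x_i\teq\seqconiii{z_i^1}{\cdots}{z_i^{n_i}}$ (when $n_i=0$ this reads $\Sc\entnf^{1,3}x_i\teq\seqempty$, handled uniformly). I would then prove by induction on $m\in[0,k]$ the invariant
\[
\Sc\entnf x\teq\nf{(\,z_1^1\infixseqcon\cdots\infixseqcon z_m^{n_m}\infixseqcon x_{m+1}\infixseqcon\cdots\infixseqcon x_k\,)}.
\]
The base case $m=0$ is exactly the hypothesis $\Sc\entnf x\teq\seqconiii{x_1}{\cdots}{x_k}$. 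For the step, the right-hand side at stage $m$ has the shape $\nf{(\seqconiii{\vec{u}}{x_{m+1}}{\vec{v}})}$ with prefix $\vec{u}=(z_1^1,\ldots,z_m^{n_m})$ and suffix $\vec{v}=(x_{m+2},\ldots,x_k)$, so \Cref{lem:roe}, together with $\Sc\entnf^{1,3}x_{m+1}\teq\seqconiii{z_{m+1}^1}{\cdots}{z_{m+1}^{n_{m+1}}}$, replaces $x_{m+1}$ and, after the flattening performed by $\nf$, gives the invariant at stage $m+1$. Taking $m=k$ yields $\Sc\entnf x\teq\seqconiii{z_1^1}{\cdots}{z_k^{n_k}}$. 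Finally, since $z_i^j\sceq{\Sc}x_i^j$ and each $x_i^j$ is an atomic representative, the definition of $\entnf^\ast$ (\Cref{def:entnf}) immediately upgrades this to the desired $\Sc\entnf^\ast x\teq\seqconiii{x_1^1}{\cdots}{x_k^{n_k}}$.

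I expect the only real subtlety to be the interface between the three relations $\entnf$, $\entnf^\ast$, and $\entnf^{1,3}$: \Cref{lem:roe} is stated for $\entnf^{1,3}$ (derivations avoiding \Cref{it:entnf-con} of \Cref{def:entnf}) rather than for $\entnf^\ast$, which is precisely why the conversion via \Cref{lem:entnf-two-elim} must come first, and why the substituted $z_i^j$ are only $\sceq{\Sc}$-equal to, rather than literally equal to, the atomic representatives $x_i^j$. Tracking this equivalence and discharging it only at the final appeal to the $\entnf^\ast$ definition is the one place where care is needed; by contrast, the flatness of every intermediate concatenation—and hence the validity of the pattern $\nf{(\seqconiii{\vec{u}}{\cdot}{\vec{v}})}$ at each stage—is automatic, since $\nf$ flattens each substituted concatenation.
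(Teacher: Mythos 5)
Your proposal is correct and follows essentially the same route as the paper's own proof: convert each hypothesis $\Sc\entnf^{\ast} x_i\teq \seqconiii{x_i^{1}}{\cdots}{x_i^{n_i}}$ into $\entnf^{1,3}$ form via \Cref{lem:entnf-two-elim}, substitute the resulting concatenations one variable at a time using \Cref{lem:roe}, and finish by invoking the definition of $\entnf^{\ast}$ together with $z_i^j\sceq{\Sc}x_i^j$. The only difference is presentational: you phrase the left-to-right substitution as an explicit induction on $m$, where the paper writes it informally as ``continuing this way until $i=k$,'' so your version is, if anything, slightly more rigorous.
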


\begin{proof}
We have
$\Sc\entnf x\teq \seqconiii{x_1}{\cdots}{x_k}$.
Also, for every $i\in[1,k]$, 
since $\Sc\entnf^{\ast} x_i\teq \seqconiii{x_i^{1}}{\cdots}{x_i^{n_i}}$,
we also have, by \Cref{lem:entnf-two-elim},
$\Sc\entnf^{1,3} x_i\teq \seqconiii{y_i^1}{\cdots}{y_i^{n_i}}$ for some $y_i^1,\ldots,y_i^{n_i}$
such that $y_i^j\sceq{\Sc}x_i^j$ for every $j\in[1,n_i]$.
Considering the case where $i=1$, by \Cref{lem:roe} we get
$\Sc\entnf x\teq \nf{(\seqconiv{(\seqconiii{y_1^1}{\cdots}{y_1^{n_1}})}{x_2}{\cdots}{x_k})}$. 
Continuing this way until $i=k$, and by the properties of $\nf{}$, we obtain
 $\Sc\entnf x\teq\nf{(\seqconiii{y_1^1}{\cdots}{y_k^{n_k}})}$.
Since $y_1^1,\ldots, y_k^{n_k}$ are variables, we actually have
$\Sc\entnf x\teq\seqconiii{y_1^1}{\cdots}{y_k^{n_k}}$.
And hence, $\Sc\entnf^{\ast} x\teq\seqconiii{x_1^1}{\cdots}{x_k^{n_k}}$.
\end{proof}

Finally, 
we can conclude that $\M\models\Sc$,
by considering each shape of a literal in $\Sc$
separately.

\begin{proof}[Proof of \Cref{lem:MsatSc}]
Let $\varphi\in\Sc$. We prove that $\M\models\varphi$.
By \Cref{assumption}, $\varphi$ is a flat sequence constraint.
We consider the possible shapes of $\varphi$.

\begin{enumerate}
    \item $\varphi$ is $x\teq y$, where $x,y$ have sort $\sint$: using rule $\rn{S-Prop}$, $x\teq y\in \Ac$.
    By \Cref{lem:arith}, $\M\models\varphi$.
    \item $\varphi$ is $x\teq y$, where $x,y$ have sort $\selem$: we know that $x\sceq{\Sc}y$, so by \Cref{def:mc-int-elem}, $\M(x)=\M(y)$.  Thus, $\M\models\varphi$.
    \item $\varphi$ is $x\teq y$ where $x,y$ have sort $\sseq$: \Cref{def:mc-length,def:mc-unit,def:mc-modelcon,def:mc-non-atom} are defined for equivalence classes. Since $x$ and $y$ are in the same equivalence class, $\M\models \varphi$.
    \item $\varphi$ is $x\tneq y$ where $x,y$ have sort $\sint$: by saturation of $\rn{S-Prop}$, $x\teq x,y\teq y\in\Ac$. Hence, $x,y\in\ter{\Sc}\cap\ter{\Ac}$.
    By saturation of $\rn{S-A}$, we either have $x\teq y\in\Ac$ or $x\tneq y\in\Ac$.
    In the first case, by saturation of \rn{A-Prop}, we also have
    $x\teq y\in\Sc$, which is impossible by saturation of $\rn{S-Conf}$.  Hence, we have $x\tneq y\in\Ac$. 
    By \Cref{lem:arith}, $\M\models\varphi$.
    \item $\varphi$ is $x\tneq y$ where $x,y$ have sort $\selem$: $x$ is not equivalent to $y$ w.r.t. $\sceq{\Sc}$, so \Cref{def:mc-int-elem}
    assigns them different values; thus,
    $\M(x)\neq \M(y)$, and hence, $\M\models\varphi$.
    \item $\varphi$ is $x\tneq y$ where $x,y$ have sort $\sseq$: $\M\models\varphi$ by \Cref{lem:disequal}.
    \item $\varphi$ is $x\teq y+z$, $x\teq -y$, or $x\teq n$ for some $n\in\mathbb{N}$: by saturation of \rn{S-Prop}, $\varphi\in\Ac$, so $\M\models\varphi$ by \Cref{lem:arith}.
    \item\label{it:seqempty} $\varphi$ is $x\teq \seqempty$: we know that $\Sc\ent \seqlen{\seqempty}=0$ by saturation of $\rn{L-Intro}$.
    Using \Cref{assumption}, we get
    $\Sc\ent \ell_{x}\teq 0$, and so by saturation of $\rn{S-Prop}$
    we have $\Ac \ent \ell_{x}=0$.  
    It follows by \Cref{lem:seqlen} that $\M(x)$ has length 0 and is thus the empty sequence.
\item $\varphi$ is $x\teq \sequnit(y)$:
    By \Cref{lem:unitatomic}, $[x]$ is atomic.
    Also, $[x]$ is a unit equivalence class. Hence
    $\M([x])$ was defined in \Cref{def:mc-unit} and was 
    set to a sequence of size 1 whose only element is $\M(y)$ by \Cref{lem:unit-wd}. Therefore, $\M\models x\teq \sequnit(y)$.
    \item $\varphi$ is $x\teq \seqlen{y}$: we know that $\ell_y\teq\seqlen{y}\in\Sc$ by \Cref{assumption}, so $\ell_y\teq x\in\Ac$ by saturation of $\rn{S-Prop}$.  From \Cref{def:mc-int-elem}, it follows that $\M(\ell_y)=\M(x)$.  But by \Cref{lem:seqlen}, we also have $\M(\seqlen{y})=\M(\ell_y)$, so $\M(x)=\M(\seqlen{y})$.
    \item\label{it:concat} $\varphi$ is $x\teq\seqconiii{x_1}{\cdots}{x_n}$.
        Suppose that $x_1, \ldots, x_n$ have the unique (by \Cref{lem:nf-unique}) normal forms $\Sc\entnf^\ast x_1\teq \seqconiii{u_1}{\cdots}{u_{m_1}}$, $\Sc\entnf^\ast x_2\teq\seqconiii{u_{m_1+1}}{\cdots}{u_{m_2}}$,
    $\ldots$,
    $\Sc\entnf^\ast x_n\teq\seqconiii{u_{m_{n-1}+1}}{\cdots}{u_{m_n}}$.
    By \Cref{it:entnf-con} of \Cref{def:entnf}, we know $\Sc\entnf x\teq \seqconiii{x_1}{\cdots}{x_n}$, so by \Cref{lem:concnormnormconc}, we have $\Sc\entnf^\ast x\teq \seqconiii{u_1}{\cdots}{u_{m_n}}$.
    By \Cref{lem:model-eq-nf}, then $\M(x)=\seqconiii{\M(u_1)}{\dots}{\M(u_{m_n})}$.  Also, by \Cref{lem:model-eq-nf}, for each $i\in[1,n]$, $\M(x_i)=\seqconiii{\M(u_i)}{\dots}{\M(u_{m_i})}$.  So, $\M(\seqconiii{x_1}{\dots}{x_n})=\seqconiii{\M(x_1)}{\dots}{\M(x_n)}=\seqconiii{\M(u_1)}{\dots}{\M(u_{m_n})}=\M(x)$.
    
    
    \item $\varphi$ is $x\teq \seqnth(y,i)$: We consider two cases:
    \begin{enumerate}
        \item $\M(i)$ is negative or is not smaller than the length of $\M(y)$: applying \Cref{def:mc-nth} with
        $\M(y)$ for $a$, $\M(i)$ for $i$ and $x$ for itself,
        we get that $\M\models \varphi$.
        \item $\M(i)$ is non-negative and smaller than the length of $\M(y)$: By \Cref{lem:nthinrange}
        with $x$ for $k$,
        $y$ for $x$ and $i$ for $y$,
        we have that the $\M(i)$th element of $\M(y)$ is $\M(x)$.
        Therefore, $\M\models \varphi$.
    \end{enumerate}
    
    \item $\varphi$ is $x\teq\sequpdate(y,i,z)$: 
    First, assume $\M(i)$ is negative or not smaller than the length of $\M(y)$.
    In this case, the interpretation in $\M$ of
    $\sequpdate(y,i,z)$ is $\M(y)$. Hence, we prove that
    $\M(x)=\M(y)$.
    By saturation of \rn{Update-Bound}, we have that
    either $\Ac\ent 0\leq i<\ell_y$ or $x\teq y\in\Sc$.
    The first case is impossible by \Cref{def:mc-int-elem,def:mc-fun} and \Cref{lem:seqlen},
    and hence, the second case holds.
    We therefore have $x\sceq{\Sc}y$, and since
    the definitions of sequence variables are done by equivalence classes (see \Cref{def:mc-unit,def:mc-modelcon,def:mc-non-atom}), we have
    $\M(x)=\M(y)$.
    
    Next, assume $\M(i)$ is non-negative and smaller than
    the length of $\M(y)$ (this also implies that $\M(y)$ is not an empty sequence).
    By \Cref{lem:eq_len}, we have $\M(\ell_x)=\M(\ell_y)$, and by \Cref{lem:seqlen}, $\M(\seqlen{x})=\M(\ell_x)$ and $\M(\seqlen{y})=\M(\ell_y)$.
    We consider the following cases:
    \begin{enumerate}
        \item\label{it:update-atomic} $[x]$ and $[y]$ are atomic in $\Sc$:
        We show that for every $j\in[0,\M(\ell_x))$,
        the $j$th element of $\M(x)$ is the same as
        the $j$th element of $\M(\sequpdate(y,i,z))$.

        First, suppose $j=\M(i)$.
        We show that the $j$th element of $\M(x)$ is $\M(z)$.
        By saturation of $\rn{Nth-Intro}$,
        we have $\seqnth(x,i)\in\ter{\Sc}$.
        Since $x\teq\sequpdate(y,i,z)\in\Sc$,
        rule $\rn{Nth-Update}$ applies.
        Notice that its first and last branches are impossible:
        the first by our assumption on $\M(i)$ and \Cref{def:mc-int-elem,def:mc-fun} and \Cref{lem:seqlen}, and the last because it would require $\Ac\ent i\neq i$, but we know $\Ac\notent\bot$ by saturation of \rn{A-Conf}.
        Hence, the middle branch applies, which means
        $\seqnth(x,i)\teq z\in\Sc$.  By \Cref{lem:nthinrange-atomic}, then, we know that the $j$th element of $\M(x)$ is $\M(z)$.

        Suppose, on the other hand, that $j\neq\M(i)$.  This implies $\M(\ell_x)=\M(\ell_y)\ge 2$, so $[x]$ and $[y]$ cannot be unit by \Cref{def:mc-unit}.  Their values are thus set by \Cref{def:mc-modelcon}.
        Since $x\teq\sequpdate(y, i, z)\in\Sc$, by \Cref{def:mc-weak-equiv}, there is an edge $d$ between $[x]$ and $[y]$ with $\M(i) \in \delta(d)$.
        It follows, because $j\neq\M(i)$, that $[x]\sim_j [y]$ by \Cref{def:weak_equiv_rel}.  But then $[x]$ and $[y]$ are in the same equivalence class of $\sim_j$, so in either case of 
        \Cref{def:mc-modelcon}, their $j$th value is set to the same value.  Finally, because $j\neq\M(i)$, the $j$th element of $\M(\sequpdate(y,i,z))$ is (according to the semantics of $\sequpdate$) the $j$th element of $\M(y)$.

        
        \item Suppose $\Sc\entnf^\ast y \teq \seqconiii{w_1}{\cdots}{w_n}$.  By \Cref{lem:update-align}, we have
        $x\teq \seqconiii{z_1}{\cdots}{z_n}\in\Sc$ for some atomic $z_1,\ldots,z_n$, $\sum_{j=1}^{k-1} \M(\ell_{w_j}) \leq \M(i) < \sum_{j=1}^k \M(\ell_{w_j})$ and
        $z_k=\sequpdate(w_k, \alpha_k, z)\in\Sc$ for some $k\in[1,n]$, where
        $\Ac\ent \alpha_k\teq i-\sum_{j=1}^{k-1}\ell_{w_j}$, and for $m\in[1,n], m\neq k$, $z_m\teq w_m\in \Sc$.
        Since \Cref{def:mc-unit,def:mc-modelcon} assign equivalence classes, we also have $\M(z_m)=\M(w_m)$.
        
        Since $z_k, w_k$ are atomic, we have
        $\M(z_k)=\M(\sequpdate(w_k, \alpha_k, z))$ by \Cref{it:update-atomic} above.
        By \Cref{lem:arith}, we have $\M(\alpha_k) = \M(i)-\sum_{j=1}^{k-1}\M(\ell_{w_j})$.
        By \Cref{it:concat}, above, $\M(x)\teq\seqconiii{\M(z_1)}{\dots}{\M(z_n)}$. It follows that $\M(x)=\seqconv{\M(w_1)}{\dots}{\M(\sequpdate(w_k,\alpha_k,z))}{\dots}{\M(w_n)}$.  By \Cref{lem:atomic-nf}, we also have $\M(y)=\seqconiii{\M(w_1)}{\dots}{\M(w_n)}$, so $\M(\sequpdate(y,i,z))=\seqconv{\M(w_1)}{\dots}{\M(\sequpdate)(\M(w_k),\M(i)-\sum_{j=1}^{k-1}\M(\ell_{w_j}),\M(z))}{\dots}{\M(w_n)}$.
        It follows that $\M(x)=\M(\sequpdate(y, i, z))$.
    \end{enumerate}
    
    \item $\varphi$ is $x\teq\seqextract(y,i,j)$: By saturation w.r.t. $\rn{R-Extract}$, we have two options.
    
    In the first, 
    $\Ac\ent i<0\vee i\geq\ell_{y}\vee j\leq 0$ and $x\teq\seqempty\in\Sc$.
    By \Cref{it:seqempty}, above,
    $\M(x)$ is the empty sequence.  By \Cref{lem:arith,lem:seqlen},
    $\M(i)<0$ or $\M(i)$ has at least the length of $\M(y)$ or $\M(j)\leq 0$.
    In each of these 3 cases, we get from
    the semantics of $\seqextract$ in $\sth$ that
    $\M(\seqextract)$ assigns the empty sequence
    w.r.t the inputs $\M(y)$, $\M(i)$ and $\M(j)$.
    Hence in this case we get $\M\models\varphi$.
    
    In the second case, 
    $\Ac\ent 0\leq i < \ell_{y} \wedge j>0 \wedge \ell_{k}\teq i \wedge \ell_{x}\teq\min(j,\ell_{y}-i) \wedge \ell_{k'}\teq \ell_y-\ell_x-i$
    and $y\teq\seqconiii{k}{x}{k'}\in\Sc$.
    We thus have $\M(y)=\seqconiii{\M(k)}{\M(x)}{\M(k')}$ by \Cref{it:concat}, above. Also, $\M\ent\Ac$, by \Cref{lem:arith}.
    According to the semantics of $\seqextract$ in $\sth$,
    since $\M(i),\M(j)\geq 0$, $\M(i)$ is smaller than
    the length of $\M(y)$,
    the value assigned in $\M$ to
    $\seqextract(y,i,j)$ is the maximal sub-sequence 
    of $\M(y)$ that starts at index $\M(i)$ and has
    length at most $\M(j)$.
    Since $\M\models y\teq\seqconiii{k}{x}{k'}$ with the appropriate length constraints (by \Cref{lem:seqlen}),
    we have that this sequence value is exactly $\M(x)$
    and hence, $\M\models\varphi$.
    
\end{enumerate}
\end{proof}
\end{report}

\end{document}